\def \bP {\mathbb{P}}
\def \bE {\mathbb{E}}
\def \bN {\mathbb{N}}
\def \cM {\mathcal{M}}
\def \var {\mathsf{Var}}
\def \cov {\mathsf{Cov}}
\newcommand{\prob}[1]{\mathbb{P}\left[#1\right]}
\newcommand{\diff}{\mathrm{d}}
\newcommand{\calW}{{\mathcal{W}}}
\newcommand{\calX}{{\mathcal{X}}}
\newcommand{\bse}{\boldsymbol{e}}
\newcommand{\bsx}{\boldsymbol{x}}
\newcommand{\bsC}{\boldsymbol{C}}
\newcommand{\bsW}{\boldsymbol{W}}
\newcommand{\bss}{\boldsymbol{s}}
\newcommand{\bsv}{\boldsymbol{v}}
\newcommand{\eps}{\varepsilon}
\newcommand{\Wdp}{\calW_{\varepsilon,\delta}}
\newcommand{\err}{\mathsf{err}}
\theoremstyle{plain}
\newtheorem{theorem}{Theorem}[section]
\newtheorem{prop}[theorem]{Proposition}
\newtheorem{lemma}[theorem]{Lemma}
\newtheorem{corollary}[theorem]{Corollary}
\theoremstyle{definition}
\newtheorem{definition}[theorem]{Definition}
\theoremstyle{remark}
\newtheorem{remark}[theorem]{Remark}
\renewcommand*\env@matrix[1][\arraystretch]{%
  \edef\arraystretch{#1}%
  \hskip -\arraycolsep
  \let\@ifnextchar\new@ifnextchar
  \array{*\c@MaxMatrixCols c}}
\begin{document}

\twocolumn[

\aistatstitle{Differentially Private Range Queries with Correlated Input Perturbation}

\aistatsauthor{ Prathamesh Dharangutte \And Jie Gao \And  Ruobin Gong \And Guanyang Wang }

\aistatsaddress{ Rutgers University \\ \texttt{\{prathamesh.d, jg1555, ruobin.gong, guanyang.wang\}@rutgers.edu} } ]




\begin{abstract}
  This work proposes a class of differentially private mechanisms for linear queries, in particular range queries, that leverages correlated input perturbation to simultaneously achieve unbiasedness, consistency,  statistical transparency, and control over utility requirements in terms of accuracy targets expressed either in certain query margins or as implied by the hierarchical database structure. The proposed Cascade Sampling algorithm instantiates the mechanism exactly and efficiently. Our theoretical and empirical analysis demonstrates that we achieve near-optimal utility, effectively compete with other methods, and retain all the favorable statistical properties discussed earlier. 
\end{abstract}

\section{Introduction}\label{sec:intro}


In this paper we construct a class of differentially private mechanisms for linear queries, including range queries, representable as a multiplicative operation of a pre-specified workload matrix and a confidential database.
Our work is strongly motivated by the application of differential privacy to the 2020 U.S. Decennial Census, providing redistricting (P.L. 94-171) as well as Demographic and Housing Characteristic (DHC) files in the forms of multi-resolutional tabular data \citep{abowd2022topdown}. Population tabulations across  geographic resolutions follow a hierarchical system termed the ``spine'' \citep{cumings2024geographic}, which orders from top to bottom geographic entities (states, counties, tracts, block groups, and blocks), with higher-level geographies partitioned by the lower-level ones. As the only high-profile deployment of differential  privacy in the public domain, the particular demands from the Census and similar official data products reveal a number of crucial considerations on privacy mechanisms that are possibly shared in a broader set of practical application scenarios.
We discuss these considerations, highlighting consistency and utility control as the most prominent. 

{\bf Consistency (internal)}: The sanitized output may plausibly be viewed as having been queried directly from an input database without modification. In particular, the data output for a geographical range (e.g. a state) should be precisely the sum of data values from its constituent sub-ranges (e.g. all the counties in a state). This is an example of a broader family of logical consistency that ensures stability and absence of conflicts in the data output. 

\vspace*{-1.5mm}
{\bf Fine-grained utility control}: The mechanism accommodates custom, externally specified utility requirements expressed as accuracy targets in query margins or implied by the hierarchical database structure. For example, Census tabulations at lower and intermediate geographies, as does certain ``off-spine'' geographies (e.g. voting districts), must meet accuracy targets according to the relevant operational standards \citep{census2022qualitymetrics}. Moreover, population counts across larger geographical regions at a lower resolution may not be permitted to have a greater error margin compared to smaller geographical regions at a higher resolution. For example, the mean error and mean absolute errors of total population counts in the Census DHC files remain consistent at the state, county, tract, and block group levels \cite{census2023detailedsummarymetrics}. With the exception of a number of very recent work~\cite{Xiao2021-vi,xiao23advances}, fine-grained utility control has been scarcely discussed in the DP literature, as the focus has been predominantly placed on the  assessment of overall utility (such as average or worst-case).
The Census application raised this issue in an interesting angle -- that the algorithm designer is given both custom specified, fine-grained utility targets as well as privacy budget target, and must work backward to meet both objectives. 



Furthermore, unbiasedness and statistical transparency influence both the quality and usability of the data product \citep{fioretto2019differential,gong2022transparent}. With {\bf unbiasedness}, the sanitized output exhibits no bias with respect to the ground truth;
With {\bf statistical transparency}, the probabilistic description of the sanitized output is analytically tractable (ideally in a closed-form) to enable reliable downstream statistical inferences. Last, it is always desirable to have 
{\bf efficient implementation} -- the algorithm is exact and simple to implement, with no need for approximate simulation (e.g. Markov chain Monte Carlo) nor optimization-based post-processing. 
We will review the Census Bureau's \emph{TopDown algorithm}~\citep{abowd2022topdown} as well as other DP mechanisms with respect to these considerations later.  

This paper considers \emph{input perturbation} -- adding Gaussian noises at each input data items and answering queries with the perturbed data. Classic input perturbation mechanisms naturally support unbiasedness, logical consistency, and statistical transparency, and are practical to implement. However, they do not support fine utility control and typically result in poor data utility, an issue that worsens when the query range contains a large number of data elements~\citep{chan2011private}. 
For this reason, {input perturbation} methods have been largely under-utilized in practice.

In this work we consider input perturbation with \emph{correlated Gaussian noise}, which reduces error magnitude and offers fine control over utility
while harnessing its many strengths. The proposed mechanism carefully couples the item-wise noises 
to allow queries at different hierarchical levels to conform to a \emph{uniform} accuracy standard while achieving \emph{near-optimal} overall utility objectives, both theoretically and empirically. 
We show for 1D range queries, the proposed mechanism achieves optimal mean square error and near-optimal worst-case and expected worst-case errors when compared to prevailing $(\epsilon, \delta)$-DP mechanisms.
The special error correlation structure supports a linear time efficient implementation called the Cascade Sampling algorithm. Importantly, the fine control over data utility at different levels of geography is inherent to the design of the proposed mechanism, rather than reliant on optimization-based post-processing which may destroy transparency and render unpredictable accuracy. Our proposal generalizes to other hierarchical and multidimensional linear query settings. 



\section{Problem Formulation}\label{sec:definition}

Given a confidential data vector $\bsx$ of dimension $n$, and a workload matrix $\bsW$ of dimension $p\times n$, we would like to report a (possibly) noisy version of the query answer $\bsW \bsx$ while preserving the privacy of individual data elements in $\bsx$. 
$\bsW$ is an incidence matrix with rows corresponding to queries and columns corresponding to data elements. Specifically, we consider an $(\eps, \delta)$-differentially private mechanism $\Wdp$ which satisfies for any two neighboring databases $\bsx, \bsx'$, $||\bsx-\bsx'||_1\leq 1$, and any set $D$ of output values $$\prob{\Wdp(\bsx)\in D}\leq e^{\eps}\cdot \prob{\Wdp(\bsx')\in D}+\delta.$$
The linear query framework models many scenarios in practice. Three are particularly relevant to this work. 
{\it Predicate counting queries} report the number of database rows that satisfy the given predicate $q$, which are encoded into the rows of the workload matrix $\bsW$.  
{\it Range queries} report the sum of elements (or coordinates; $x_i$) that fall inside a given range, such as a time interval $[\ell, r]$ (e.g. streaming data) or a two-dimensional geographic area, with the structure of the range reflected in $\bsW$.
 {\it Contingency tables} are multidimensional histograms of entities satisfying certain composite attributes in a database. They can be regarded as a special type of high-dimensional range query, and are used extensively by statistical agencies for data processing and dissemination.

\subsection{Definitions}\label{sec:properties}





\begin{definition}
A mechanism $\Wdp$ is \emph{unbiased} if
 $\bE(\Wdp(\bsx)) = \bsW\bsx,$
 where the expectation is taken over the randomness of $\Wdp$.
\end{definition}
That is, unbiasedness forbids a privacy mechanism from injecting  systematic drift into the data output. 

\begin{definition}[\cite{gong2022transparent}, Def. 3] A privacy mechanism $\mathcal{W}$ is \emph{statistically transparent} if the conditional distribution of its output given the input,
$p_{\xi} ( \mathcal{W} = w \mid \bsx = x ),$
is analytically available up to $p$ and $\xi$, where $\xi$ is the parameter for $p$ (both tuning and auxiliary).
\end{definition}

Statistical transparency is not frequently discussed in the literature of private mechanism design, but it is crucial if the sanitized output is subject to further data analysis as
it provides the basis for valid statistical  uncertainty quantification \citep{gong2022transparent}.

\begin{definition}
A mechanism $\Wdp$ operating on the data vector $\bsx$ is internally \emph{consistent} if with probability one (over the randomness of $\Wdp$) there exists a vector $\bsx'$ such that $\Wdp(\bsx)=\bsW\bsx'$.
\end{definition}

First defined for contingency tables and generalizable to any query, consistency requires the sanitized query to be a legitimate output of the intended query applied to a potential input database~\citep{barak2007privacy,hay2009boosting,chan2011private}. 
It is particularly important if the sanitized output enters directly into downstream decisions and is expected to  
be free of internal logical conflicts.
In the literature, external consistency has also been discussed. For example, state-level populations must be exactly reported per their constitutional purpose for reapportionment -- an ``invariant'' requirement enforced externally; see~\cite{gao2022subspace,dharangutte2023integer}. This is not discussed in this paper. 

\begin{remark}\label{remark:consistency}
For linear queries, consistency requires the sanitized output to be in the column space of $\bsW$. Any logical relationship embodied in $\bsW$ (e.g. one range being the union of two  disjoint ranges) is mirrored in $\Wdp$. Additive mechanisms of the form  $\bsW \bsx + \bse$ may not automatically obey consistency, unless $\bse$ is guaranteed to be in the column space of $\bsW$. 
The same is true for exponential mechanisms unless the range is intentionally restricted \citep{seeman2022partially}.
\end{remark}

\section{Correlated input perturbation mechanism}\label{sec:corr-noise}


This section presents the design of the correlation matrix, an efficient algorithm to sample from this distribution, and the resulting privacy and utility guarantees. Proof of our results are postponed to~\Cref{appendix:proofs} and extensions to general binary tree and 2-D data are discussed in \cref{subsec: generalization}.

\subsection{Correlation matrix}\label{subsec:correlation-matrix}
To simplify matters, we consider $n$ data points on a one-dimensional line and assume that $n$ is a power of 2, denoted as $n = 2^k$. Extensions  will be discussed in Section \ref{subsec: generalization}. Each data point will be represented by its binary form, utilizing $k$ bits, as leaves of a perfect binary tree with height $k$. Nodes in the tree receive labels based on their positions in a level-order traversal. 
For example, the root node is labelled as $\varnothing$, the  nodes at depth 1 will have labels ``0'' and ``1'', and so on. 

Our objective is to allocate Gaussian random noises, denoted as $\{X_I\}_{I\in\{0,1\}^k}$, to every data point (i.e. leaf node). The noise imposed on each internal node is the sum of the noises of its two children. With our labeling convention, this relationship can be succinctly represented as $X_{\star} = X_{\star 0} + X_{\star 1}$, where $\star$ stands for any binary sequence (including the empty one) with a length less than $k$. 
If all the noises on the leaf nodes are independently and identically distributed (i.i.d.), we anticipate that the noise introduced at the root  will have a Gaussian variance of $\Theta(n)$.
However, by carefully coupling the Gaussian variables on the leaf nodes, we can establish \emph{uniform} variance across \emph{all} nodes in the binary tree. The structure of our correlation matrix is recursively defined below: 
\begin{definition}\label{def:correlated-noise-correlation}
$J_i$ is the all-one matrix of size $2^i \times 2^i$. $\bsC_i$ is a square matrix of size $2^i\times 2^i$:
\begin{align}\label{eqn: corr_recursive}
\bsC_1 :=    \begin{pmatrix}1 & -\frac{1}{2} \\ -\frac{1}{2} & 1\end{pmatrix}, \;
    \bsC_{i+1} := \begin{pmatrix}
\bsC_{i} & -\frac{J_i}{2^{2i+1}} \\
-\frac{J_i}{2^{2i+1}} & \bsC_{i}
\end{pmatrix}.
\end{align}
\end{definition}
\begin{definition}\label{def:correlated-noise-mechanism}
For $n = 2^k$ data points identified by binary representation, our correlated noise mechanism  is defined as $\mathsf{Noise} = \sigma Z$. Here, $Z\sim\bN(\mathbf{0}, \bsC_k)$, and $\sigma$ depends on a later-specified privacy budget. This mechanism applies to any private data vector $\bsx$, resulting in the output $\bsx + \mathsf{Noise}$.
\end{definition}

The structure outlined in Definition \ref{def:correlated-noise-mechanism} has a clear recursive pattern. One can split the \(2^k\) data points into a left subtree, where labels begin with \(0\), and a right subtree, where labels begin with \(1\). The collection of points within each group mirrors a perfect binary tree of depth \(k-1\) with a covariance matrix of \(\sigma^2 \bsC_{k-1}\). Points belonging to different groupings have a slightly negative correlation of \(-2^{-2k+1}\). Similarly, points in the left subtree can be further divided based on those starting with \(00\) and those beginning with \(01\). Each of these smaller sub-groupings exhibits a covariance of \(\bsC_{k-2}\), and any pair of points from these groups share a correlation of \(-2^{-2k+3}\). This process can be recursively applied until each group is reduced to a single data point.

The next result shows that the variance of each internal node is the same as that of every leaf node. 

\begin{theorem}\label{thm:noise-equal-variance}
    Consider $n = 2^k$ data points identified by their binary representation as described earlier. Assuming the noise mechanism is defined as per Definition \ref{def:correlated-noise-mechanism}, every node in the binary tree, including both leaf and internal nodes, has a marginal distribution of $\bN(0, \sigma^2)$. This implies that each node shares an identical variance.
\end{theorem}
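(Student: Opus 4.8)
The plan is to reduce the whole statement to one algebraic fact about the matrix $\bsC_k$, namely that $\mathbf{1}_v^\top \bsC_k \mathbf{1}_v = 1$ for a suitable $\{0,1\}$-vector $\mathbf{1}_v$ attached to each node $v$. Indeed, unfolding the defining relation $X_\star = X_{\star 0} + X_{\star 1}$ all the way down to the leaves shows that the noise sitting on any node $v$ equals $\mathbf{1}_v^\top(\sigma Z)$, where $\mathbf{1}_v \in \{0,1\}^{2^k}$ is the indicator of the set of leaves descending from $v$ (and $\mathbf{1}_v$ is a standard basis vector when $v$ is a leaf). Since $Z \sim \bN(\mathbf{0},\bsC_k)$, a linear functional $\mathbf{1}_v^\top(\sigma Z)$ is automatically Gaussian with mean $0$; so Gaussianity and zero mean are free, and the claim is equivalent to $\mathrm{Var}\big(\mathbf{1}_v^\top(\sigma Z)\big) = \sigma^2\,\mathbf{1}_v^\top \bsC_k \mathbf{1}_v = \sigma^2$, i.e. $\mathbf{1}_v^\top \bsC_k \mathbf{1}_v = 1$, for every node $v$.

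For a leaf $v$, $\mathbf{1}_v^\top \bsC_k \mathbf{1}_v = (\bsC_k)_{vv}$, and a one-line induction on $i$ using \eqref{eqn: corr_recursive} shows every diagonal entry of $\bsC_i$ equals $1$: the off-diagonal block $-J_i/2^{2i+1}$ contributes nothing on the diagonal, and the two diagonal blocks are copies of $\bsC_i$. For an internal node $v$ at depth $d$, I would first establish the structural claim that the principal submatrix of $\bsC_k$ indexed by the $2^{k-d}$ leaves below $v$ is \emph{exactly} $\bsC_{k-d}$. This is an induction on $k$ driven by \eqref{eqn: corr_recursive}: the top-left and bottom-right $2^{k-1}\times 2^{k-1}$ blocks of $\bsC_k$ are both $\bsC_{k-1}$, and these are precisely the leaf-blocks of the two depth-$(k-1)$ subtrees hanging off the root; any $v \neq \varnothing$ lies inside one of them, so its leaf-block inside $\bsC_k$ coincides with the corresponding leaf-block inside that copy of $\bsC_{k-1}$, and the inductive hypothesis closes the loop. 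I expect this to be the main obstacle — not because it is deep, but because it requires stating precisely how the level-order leaf labels align with the recursive block decomposition, so that "leaves below $v$" really does pick out an aligned sub-block. Granting it, $\mathbf{1}_v^\top \bsC_k \mathbf{1}_v$ equals the sum of all entries of $\bsC_{k-d}$.

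It then remains to compute $s_j := \mathbf{1}^\top \bsC_j \mathbf{1}$. Summing all entries in \eqref{eqn: corr_recursive} gives $s_{j+1} = 2 s_j + 2\big(-\tfrac{1}{2^{2j+1}}\big)(2^j)^2 = 2 s_j - 1$, with base case $s_1 = 1 - \tfrac12 - \tfrac12 + 1 = 1$ (consistently, $s_0 = 1$ with $\bsC_0 := (1)$). The affine recursion $s_{j+1} = 2 s_j - 1$ has $1$ as its fixed point, so $s_j \equiv 1$ for all $j \ge 0$. Combining: for a leaf, $\mathbf{1}_v^\top \bsC_k \mathbf{1}_v = 1$; for an internal node at depth $d$, $\mathbf{1}_v^\top \bsC_k \mathbf{1}_v = s_{k-d} = 1$. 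Hence every node's noise has marginal distribution $\bN(0,\sigma^2)$, which is exactly \Cref{thm:noise-equal-variance}.
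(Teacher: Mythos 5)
Your proof is correct and follows essentially the same route as the paper: both arguments exploit the recursive block structure of $\bsC_k$ (so that the leaf-block of each subtree is a smaller $\bsC_j$) and reduce the variance of an internal node to the sum of all entries of $\bsC_j$, which equals $1$. The only cosmetic difference is that the paper derives this sum from the per-row identity $\sum_j \bsC_k(i,j)=2^{-k}$, whereas you track the total sum directly via the recursion $s_{j+1}=2s_j-1$.
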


\subsection{Cascade Sampling algorithm}\label{subsec:algorithm-1d}
This section introduces an efficient algorithm of running time $O(n)$, where $n = 2^k$ represents the total number of data points, for generating samples from our defined noise mechanism, specifically $\bN(0, \sigma^2 C_k)$ as defined in Formula \eqref{eqn: corr_recursive}. This method significantly improves standard Gaussian generation methods in scalability, as supported by theoretical and numerical evidence.

Assuming $\sigma = 1$ without loss of generality, the standard method for sampling an $n$-dimensional Gaussian $\bN(\mathbf{\mu}, \Sigma)$ involves Cholesky decomposition of the covariance matrix $\Sigma = L L^\top$, followed by transforming a standard Gaussian vector ($\mathbf{x}\sim \bN(0, \mathbb I_n)$) using $L\mathbf{x} + \mathbf{\mu}$. This process has a high computational cost, primarily due to the $O(n^3)$ expense of the Cholesky decomposition, with additional costs for sampling ($O(n)$) and transformation ($O(n^2)$).

Luckily, the recursive formula for covariance shown in \eqref{eqn: corr_recursive} enables us to produce the required noise in $\Theta(n)$ time. We call our method the Cascade Sampling algorithm (Algorithm~\ref{alg:noise-mechanism}) -- it begins by sampling the noise at the highest level and propagates to the bottom most leaves, ensuring that the relationship $X_{\star} = X_{\star 0} + X_{\star 1}$  as well as the covariance matrix  in \eqref{eqn: corr_recursive} are both  preserved.
A simple yet crucial observation is the following: Given i.i.d. random variables $X, Y \sim \bN(0,1)$, define 
\begin{align}\label{eqn: one-d noise}
    X_0 = \frac{1}2 X + \frac{\sqrt 3}{2} Y,\quad X_1 = \frac 12 X - \frac{\sqrt 3}{2} Y.
\end{align}
Then $X_0, X_1$ are also $\bN(0,1)$, sum up to $X$, and have correlation $-0.5$;  Figure \ref{fig:correlated_gaussian} is a visualization. 
\begin{figure}[t]
    \centering
    \includegraphics[width = 0.24\textwidth]{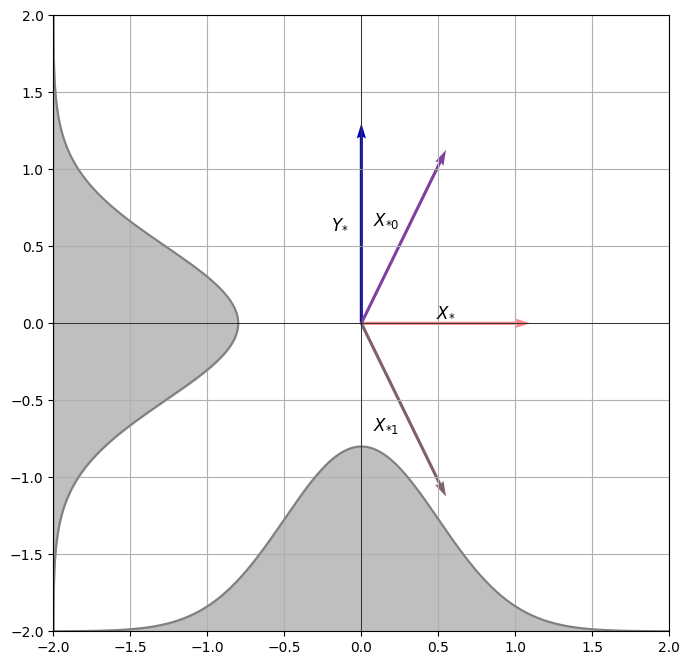}
    \includegraphics[width = 0.23\textwidth]{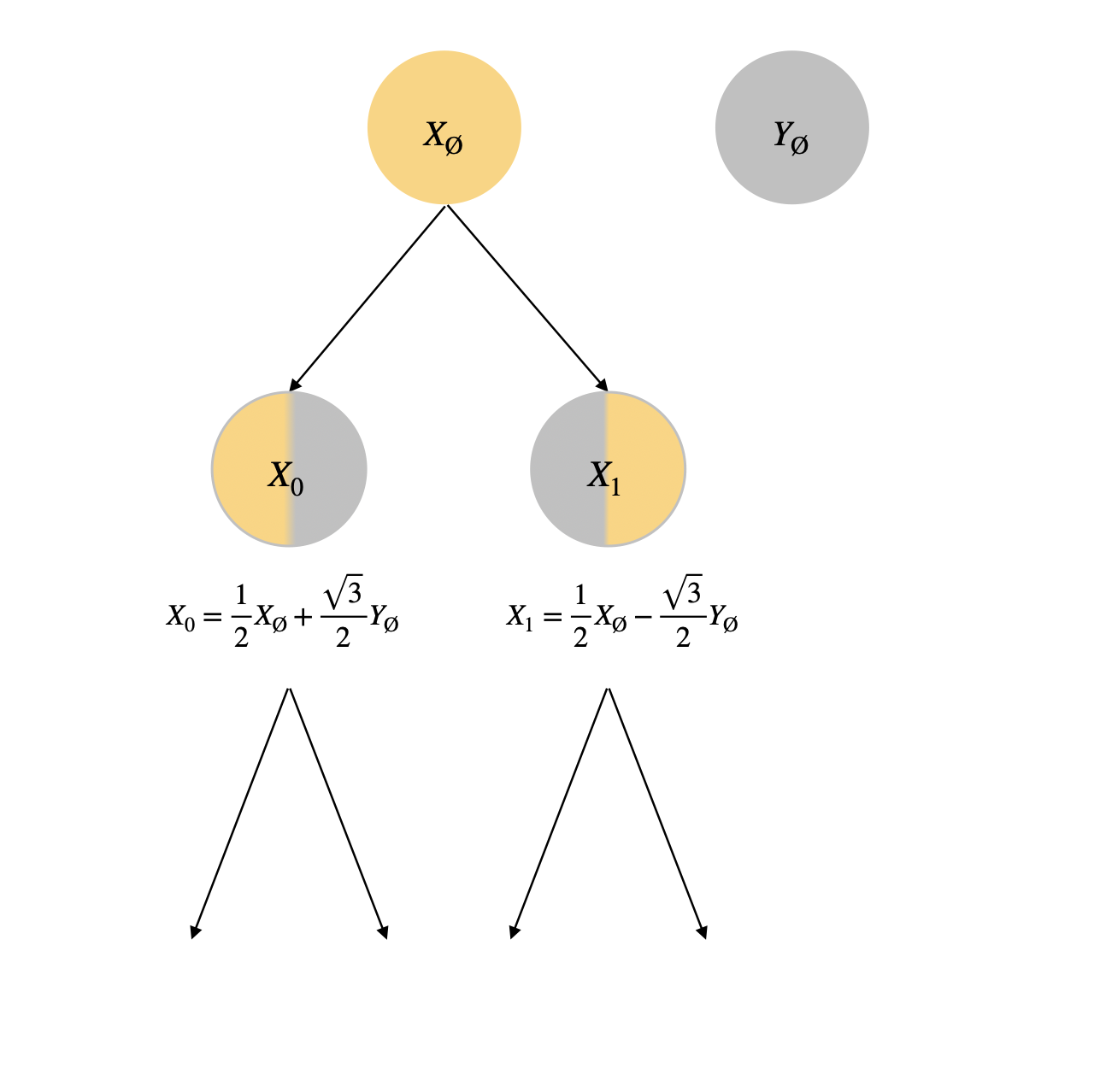}
   \caption{Left: Illustration of noise allocation to sibling nodes and their parent within a binary tree. Here $X_\star$ denotes the noise applied to a node labeled $\star$, and $Y_\star$ is another standard Gaussian  independent of $X_\star$. The noise values for the children nodes are $X_{\star 0} := X_{\star}/2 + \sqrt{3} Y_\star/2$ and $X_{\star 1} := X_{\star}/2 - \sqrt{3} Y_\star/2$. Right: Noise allocation on the top two levels of a binary tree.}
    \label{fig:correlated_gaussian}
\end{figure}
Initially, a standard normal  is sampled for the root's noise, then the noise for its direct descendants (depth 1 nodes) is determined using equation \eqref{eqn: one-d noise}. This process is repeated, applying equation \eqref{eqn: one-d noise} for each subsequent level, to assign noise to all nodes. The complete process is outlined in Algorithm \ref{alg:noise-mechanism}.

\begin{algorithm2e}
\caption{Cascade Sampling Algorithm}\label{alg:noise-mechanism}
\KwData{Depth of the binary tree $k$, variance $\sigma^2$ determined by the privacy budget.}
\KwResult{Noise values $\{X_I\}$ for all nodes $I \in \cup_{0\leq i \leq k} \{0,1\}^i$ in a binary tree.}
\For{each node $\star \in \{0,1\}^i$ at depth $0\leq i \leq k-1$}{
    \If{$\star =\varnothing$}{
    Assign $X_\varnothing \sim \bN(0,\sigma^2)$
    }
    Sample $Y_\star \sim \bN(0,\sigma^2)$ independently\;
    Define the noise values for the children of $\star$:\;
    $X_{\star 0} := \frac{1}{2} X_{\star} + \frac{\sqrt{3}}{2} Y_\star, \quad X_{\star 1} := \frac{1}{2} X_{\star} - \frac{\sqrt{3}}{2} Y_\star$\;
    }
\end{algorithm2e}

If one unit of cost is attributed to the sampling of a univariate normal variable and to each arithmetic operation (including addition or multiplication), it becomes clear that the cost of Algorithm \ref{alg:noise-mechanism} is linear in the total number of data points, dramatically improving the $O(n^3)$ cost of the standard sampling algorithm using Cholesky decomposition. 
\begin{prop}\label{prop:cost-analysis}
    Executing Algorithm \ref{alg:noise-mechanism} for a given input depth \(k\) incurs a cost of \(\Theta(n)\),
    where \(n = 2^k\) is the total count of data points (leaf nodes).
\end{prop}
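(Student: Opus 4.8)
The plan is to count the total work done by Algorithm \ref{alg:noise-mechanism} by summing the per-level cost over all $k$ levels of the tree, then invoke the geometric-series identity $\sum_{i=0}^{k-1} 2^i = 2^k - 1 = n - 1$. First I would observe that the outer loop visits every internal node exactly once, i.e. every node $\star \in \{0,1\}^i$ for $0 \le i \le k-1$; at depth $i$ there are $2^i$ such nodes. For the single root node ($i = 0$), we pay one unit to sample $X_\varnothing$. For each internal node $\star$ visited, the body of the loop performs: one sampling of $Y_\star$ (one unit), and then the computation of $X_{\star 0}$ and $X_{\star 1}$, each of which is a fixed-length arithmetic expression — two multiplications ($\tfrac12 X_\star$ and $\tfrac{\sqrt3}{2} Y_\star$, the latter reusable) plus one addition or subtraction — so a constant number $c$ of arithmetic operations, independent of $i$ and $k$. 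Hence the cost charged at each internal node is bounded by a constant $c' = c + 1$, and bounded below by a positive constant as well.

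Next I would total this up: the overall cost is at most $1 + \sum_{i=0}^{k-1} 2^i \cdot c' = 1 + c'(2^k - 1) = \Theta(n)$, and likewise bounded below by a constant times $\sum_{i=0}^{k-1} 2^i = n-1$, giving the matching lower bound. This establishes the $\Theta(n)$ claim. Since the algorithm only produces the noise vector, one should also note (or it may be implicit) that writing out the $n$ leaf values is itself $\Theta(n)$, so the bound is tight and cannot be improved.

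I do not anticipate a genuine obstacle here: the statement is a straightforward accounting argument, and the only thing to be careful about is making explicit that the per-node arithmetic is $O(1)$ under the stated unit-cost model (each $X_{\star 0}, X_{\star 1}$ involves a bounded number of additions and multiplications, and $Y_\star$ is drawn once per internal node rather than once per leaf). If one wanted to be fully rigorous about the recursion-versus-loop formulation, one could alternatively set up a recurrence $T(k) = 2\,T(k-1) + \Theta(1)$ with $T(0) = \Theta(1)$ and solve it by the standard master-theorem-type expansion, which again yields $T(k) = \Theta(2^k) = \Theta(n)$; both routes are routine and I would present whichever reads more cleanly alongside the pseudocode.
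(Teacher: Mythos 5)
Your accounting argument is correct and matches the paper's own (very brief) justification, which simply notes that under the stated unit-cost model the algorithm's cost is linear in the number of data points; the paper gives no more detailed proof than the observation that the loop touches each of the $2^k-1 = n-1$ internal nodes once with $O(1)$ work per node. Your write-up just makes this explicit, so there is nothing to add or correct.
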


Our next result shows that the leaves have the covariance structure described in Definition \ref{def:correlated-noise-mechanism}.
\begin{prop}\label{prop:noise-leaf-covariance}
   For any given positive integer \( k \), the covariance matrix of the leaf noises produced by Algorithm \ref{alg:noise-mechanism} is equal to \( \sigma^2 \bsC_k \), where \( \bsC_k \) is defined in Definition \ref{def:correlated-noise-correlation}.
\end{prop}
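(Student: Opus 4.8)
The plan is to induct on $k$, using the self‑similar block structure of $\bsC_k$ in Definition \ref{def:correlated-noise-correlation}. For the base case $k=1$ I would just read off Algorithm \ref{alg:noise-mechanism}: with $X_\varnothing\sim\bN(0,\sigma^2)$ and $Y_\varnothing\sim\bN(0,\sigma^2)$ independent, $X_0=\tfrac12 X_\varnothing+\tfrac{\sqrt3}{2}Y_\varnothing$ and $X_1=\tfrac12 X_\varnothing-\tfrac{\sqrt3}{2}Y_\varnothing$ satisfy $\var(X_0)=\var(X_1)=(\tfrac14+\tfrac34)\sigma^2=\sigma^2$ and $\cov(X_0,X_1)=(\tfrac14-\tfrac34)\sigma^2=-\tfrac12\sigma^2$, so the $2\times2$ leaf covariance is $\sigma^2\bsC_1$.

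For the inductive step, assume the statement for depth $k$. In a depth‑$(k+1)$ run, the first iteration (at $\star=\varnothing$) produces $X_0,X_1$ from $X_\varnothing,Y_\varnothing$, and the remaining iterations split into two independent sub‑computations: the nodes below ``$0$'' are processed exactly as a depth‑$k$ run seeded with root noise $X_0$ and fresh i.i.d.\ $\bN(0,\sigma^2)$ variables $\{Y_{0\star}\}$, and symmetrically for ``$1$'' with $X_1$ and $\{Y_{1\star}\}$; moreover $X_\varnothing,Y_\varnothing,\{Y_{0\star}\},\{Y_{1\star}\}$ are mutually independent. Writing $L=(X_{0a})_{a\in\{0,1\}^k}$ and $R=(X_{1a})_{a\in\{0,1\}^k}$ for the two halves of the leaves — which level‑order traversal lists as $L$ then $R$, matching the block layout of $\bsC_{k+1}$ — the induction hypothesis applied to each sub‑run (legitimate since $X_0,X_1\sim\bN(0,\sigma^2)$) gives $\cov(L)=\cov(R)=\sigma^2\bsC_k$, the two diagonal blocks.

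It remains to compute the off‑diagonal block $\cov(L,R)$. First I would record an elementary sub‑lemma: in any run of Algorithm \ref{alg:noise-mechanism}, if one expands a depth‑$j$ leaf noise in the independent sources $X_\varnothing$ and the $Y_\star$'s, the coefficient of $X_\varnothing$ equals $2^{-j}$ — immediate by induction on $j$ since \eqref{eqn: one-d noise} halves that coefficient at each step. Hence $X_{0a}=2^{-k}X_0+W_{0a}$ and $X_{1b}=2^{-k}X_1+W_{1b}$, where $W_{0a}$ is a linear combination of $\{Y_{0\star}\}$ only and $W_{1b}$ of $\{Y_{1\star}\}$ only; by the independence structure above every cross term involving a $W$ vanishes, so
\[
\cov(X_{0a},X_{1b})=2^{-k}\cdot 2^{-k}\cdot\cov(X_0,X_1)=2^{-2k}\cdot\Bigl(-\tfrac{\sigma^2}{2}\Bigr)=-\frac{\sigma^2}{2^{2k+1}},
\]
which is exactly $\sigma^2$ times the common entry of the block $-J_k/2^{2k+1}$ appearing in \eqref{eqn: corr_recursive}. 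Assembling the three blocks yields $\cov\bigl((L,R)\bigr)=\sigma^2\bsC_{k+1}$, closing the induction.

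The only place I expect to need genuine care is the bookkeeping in the inductive step: checking that the computations below ``$0$'' and below ``$1$'' are truly independent and each is distributed as a depth‑$k$ run (so the hypothesis applies verbatim), and that the level‑order indexing of the leaves lines up with the recursive block structure of $\bsC_k$. Once that is set up, the remaining content is the routine variance/covariance arithmetic above together with the one‑line coefficient sub‑lemma.
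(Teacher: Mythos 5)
Your proof is correct, and its skeleton is the same as the paper's: induction on $k$, the explicit $2\times 2$ computation for the base case, and the observation that the computations below ``$0$'' and below ``$1$'' are independent copies of a depth-$k$ run (so the inductive hypothesis gives both diagonal blocks $\sigma^2\bsC_k$). The one place you diverge is the off-diagonal block: the paper peels off a single level, writing each leaf as $\tfrac12 X_{\tilde I}\pm\tfrac{\sqrt3}{2}Y_{\tilde I}$ in terms of its parent and concluding $\cov(X_{0I},X_{1J})=\tfrac14\cov(X_{\tilde I},X_{\tilde J})$, then reads the parents' cross-covariance $-2^{-2k+1}\sigma^2$ off the inductive hypothesis; you instead expand each leaf all the way down to the independent sources, use the coefficient-of-root lemma $X_{0a}=2^{-k}X_0+W_{0a}$, and get the cross-covariance directly as $2^{-2k}\cov(X_0,X_1)$. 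Both are sound and equally short; your route has the mild advantage of needing the inductive hypothesis only for the diagonal blocks (the off-diagonal entry follows from the base case alone), while the paper's one-step peeling avoids introducing the coefficient sub-lemma. Your flagged bookkeeping concerns (independence of the two sub-runs and the alignment of level-order leaf indexing with the block layout of $\bsC_{k+1}$) are exactly the points the paper also has to address, and they go through as you describe.
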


Finally, combining Proposition \ref{prop:noise-leaf-covariance} and Theorem \ref{thm:noise-equal-variance} immediately shows that the noises on every node  generated by Algorithm \ref{alg:noise-mechanism} have the same distribution $\bN(0, \sigma^2)$. More importantly, our correlation matrix \(\bsC_k\) has many more interesting properties that are central to privacy and utility analysis. 

\begin{remark}
    A special case of 1D range query considered substantially in the literature is continual counting in streaming data~\cite{chan2011private,FichtenbergerHU23,dvijotham2024efficient}. The ranges are of the form $[1, i]$ which reports the count (or sum) of values from index $1$ to $i$. We could adapt the cascade sampling algorithm to an incremental version to handle the streaming data. Specifically, suppose we have already received $n=2^k$ elements, and the correlated noises for the top $n$ elements have been calculated. We consider how to generate the noise for the $n+1$-th element. This will also grow the binary tree to extend for a new root $X'$, with the left child $X_0$ as the current root of the binary tree on the top $n$ elements, and the right child $X_1$ to be the root of upcoming $n$ elements. Different from the top down implementation of cascade sampling algorithm, here we have already constructed and sampled $X_0$, and we need to sample $X_1$ with negative correlation with $X_0$. This can be done by sampling $Y_\star \sim \bN(0,\sigma^2)$ independently and set $X_{1} := -\frac{1}{2} X_{0} + \frac{\sqrt{3}}{2} Y_\star$, and $X' := X_0 + X_1$. More on sampling from a conditional Gaussian distribution can be found in~\cite{gut09prob}. 
    Notice that we do not need to generate the entire subtree of $X_1$ but only need the noises along the path from $X_1$ to the $n+1$-th element. Thus the running time is $O(\log n)$ per element at most and $O(1)$ amortized. Details can be found in \Cref{subsubsec:streaming}.
\end{remark}

\begin{remark}
The work by \cite{XiaoDWZK21} considers finding the covariance matrix through an optimization procedure while constraining the variance of each workload query. As the optimization could be expensive in practice, they use approximation methods to find the desired covariance structure. For the case of equal variance on queries corresponding to nodes of a binary tree, our covariance matrix construction is a feasible solution to the optimization problem (prioritizing for privacy budget) in~\cite{Xiao2021-vi}. Our work can be seen as complementary to theirs, as our motivation comes from Census Bureaus application and identifying the recursive structure within the covariance matrix, we are able to provide a linear time sampling algorithm for this particular covariance matrix.
\end{remark}

\subsection{Privacy property}\label{subsec:privacy property}
We now turn our focus to the privacy aspects of our algorithm, with a special emphasis on identifying the suitable level of noise.  First, Theorem \ref{thm: privacy gaussian} is applicable to Gaussian noise with any covariance matrix $\bsC$. This can be seen as an extension of the traditional Gaussian mechanism (e.g. Appendix A of \cite{dwork2014algorithmic}) but for correlated noises. Then, Theorem \ref{thm:privacy} is specifically tailored for the covariance matrix outlined in Definition \ref{def:correlated-noise-correlation}.

\begin{theorem}\label{thm: privacy gaussian}
Let $X\in \mathcal{X}^n$ be any dataset, for neighboring $X$ and $X'$ let $\cM_{\sigma}(X)  = X + \mathsf{Noise}$ be the privacy mechanism, where $\mathsf{Noise} \sim \bN(\mathbf{0}, \sigma^2 \bsC)$ and $\bsC$ is an arbitrary covariance matrix with dimension $n\times n$. Fix any $\eps \in (0,1]$ and $\delta \in (0,1/2]$, the mechanism $\cM_{\sigma}(.)$ is $(\eps,\delta)$-DP for
\[\sigma^2 \geq \frac{2 \lVert \mathsf{diag}(\bsC^{-1}) \rVert_\infty \log(2/\delta)}{\eps^2}\]
where  $\lVert \mathsf{diag}(\bsC^{-1}) \rVert_\infty$  denotes the  largest magnitude of the diagonal entries of $\bsC^{-1}$. 
\end{theorem}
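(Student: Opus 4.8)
The plan is to run the classical Gaussian‑mechanism privacy‑loss argument, modified to keep track of the correlation structure of $\bsC$ (which we take to be invertible, since $\bsC^{-1}$ appears in the statement). Fix an ordered pair of neighboring datasets $X,X'$, set $v:=X-X'$ so $\lVert v\rVert_1\le 1$, and let $P=\bN(X,\sigma^2\bsC)$ and $P'=\bN(X',\sigma^2\bsC)$ be the two output laws. First I would expand the log‑density ratio: the quadratic‑in‑$Y$ terms cancel, and writing $Y\sim P$ and $W:=Y-X\sim\bN(\mathbf 0,\sigma^2\bsC)$, the privacy loss random variable becomes $L(Y)=\log\frac{\diff P}{\diff P'}(Y)=\frac{1}{\sigma^2}v^\top\bsC^{-1}W+\frac{1}{2\sigma^2}v^\top\bsC^{-1}v$. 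Since $v^\top\bsC^{-1}W$ is a linear functional of the Gaussian vector $W$, it is centered with variance $\frac{1}{\sigma^2}v^\top\bsC^{-1}v$, so $L\sim\bN(\eta/2,\eta)$ with $\eta:=\frac{1}{\sigma^2}v^\top\bsC^{-1}v$ — exactly the privacy‑loss profile of a one‑dimensional Gaussian mechanism. Everything then reduces to controlling $\eta$.

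Next I would bound $\eta$ uniformly over all neighbors. The map $v\mapsto v^\top\bsC^{-1}v$ is convex (its Hessian $2\bsC^{-1}\succeq 0$), so its supremum over the $\ell_1$‑ball $\{\lVert v\rVert_1\le 1\}$, the convex hull of $\{\pm e_i\}$, is attained at a vertex $\pm e_i$; hence $\sup_{\lVert v\rVert_1\le 1}v^\top\bsC^{-1}v=\max_i(\bsC^{-1})_{ii}=\lVert\mathsf{diag}(\bsC^{-1})\rVert_\infty$. Therefore $\eta\le\eta_{\max}:=\lVert\mathsf{diag}(\bsC^{-1})\rVert_\infty/\sigma^2$, and the hypothesis on $\sigma^2$ says precisely that $\eta_{\max}\le\eps^2/(2\log(2/\delta))$.

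Then I would invoke the standard reduction from pointwise privacy‑loss control to approximate DP: for fixed neighbors, if $\Prob_{Y\sim P}[L(Y)>\eps]\le\delta$, then splitting any output event $D$ along $B:=\{L>\eps\}$ gives $P(D)\le P(B)+e^{\eps}P'(D\cap B^c)\le\delta+e^{\eps}P'(D)$; applying this to every ordered pair of neighbors — the tail bound holds uniformly because $\eta\le\eta_{\max}$ always — yields $(\eps,\delta)$‑DP. It remains to check $\Prob_{Y\sim P}[L(Y)>\eps]\le\delta$. With $\bar\Phi(t):=\Prob[\bN(0,1)>t]$ we have $\Prob_{Y\sim P}[L(Y)>\eps]=\bar\Phi\big(\tfrac{\eps}{\sqrt{\eta}}-\tfrac{\sqrt{\eta}}{2}\big)$, and since $\tfrac{\eps}{\sqrt{\eta}}-\tfrac{\sqrt{\eta}}{2}$ decreases in $\eta$ this probability increases in $\eta$; so it suffices to bound it at $\eta=\eta_{\max}$. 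Plugging in $\eta_{\max}\le\eps^2/(2\log(2/\delta))$, using $\eps\le 1$ to absorb the $-\tfrac{\sqrt{\eta_{\max}}}{2}$ correction, $\delta\le 1/2$, and $\bar\Phi(t)\le e^{-t^2/2}$ for $t\ge 0$, a short computation gives the bound $\le\delta$.

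I expect the only genuinely delicate point to be this last numerical step — verifying that the stated threshold, with constant $2\log(2/\delta)$, forces the tail below $\delta$; writing $a=\sqrt{2\log(2/\delta)}$ the requirement reduces (via $\bar\Phi(t)\le e^{-t^2/2}$) to $(a-\tfrac{1}{2a})^2\ge 2\log(1/\delta)$, \ie $\log 2\ge\tfrac12$, which holds with a little room to spare. Everything else — the density‑ratio algebra, the convexity/vertex argument, the event‑splitting lemma — is routine. An essentially equivalent route is to whiten: with $\bsC=LL^\top$, the map $X\mapsto L^{-1}\cM_\sigma(X)/\sigma=L^{-1}X/\sigma+W$, $W\sim\bN(\mathbf 0,\mathbb I_n)$, is an isotropic Gaussian mechanism on the linear query $X\mapsto L^{-1}X/\sigma$ whose $\ell_2$‑sensitivity over neighbors is $\sqrt{\lVert\mathsf{diag}(\bsC^{-1})\rVert_\infty}/\sigma$ by the same vertex argument; the textbook Gaussian‑mechanism bound gives the claimed threshold, and since $\cM_\sigma(X)=\sigma L\cdot(L^{-1}X/\sigma+W)$ is a data‑independent function of this whitened output, closure under post‑processing transfers the guarantee to $\cM_\sigma$. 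I would keep the direct computation as the main proof and mention the whitening view as a remark.
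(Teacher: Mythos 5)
Your proposal is correct and follows essentially the same route as the paper's proof: compute the log-likelihood ratio, observe that the privacy loss is a one-dimensional Gaussian $\bN(\eta/2,\eta)$ with $\eta = v^\top\bsC^{-1}v/\sigma^2$, reduce $(\eps,\delta)$-DP to a Gaussian tail bound, and identify the worst case with the largest diagonal entry of $\bsC^{-1}$. Two of your refinements are in fact slightly more careful than the paper's argument --- the convexity/vertex step handling all $v$ with $\lVert v\rVert_1\le 1$ rather than only $v=\pm e_i$, and the one-sided tail bound $\bar\Phi(t)\le e^{-t^2/2}$ with the correct threshold $t^2\ge 2\log(1/\delta)$ (the paper's two-sided version checks only $t^2\ge\log(2/\delta)$, which is off by a factor of~$2$ in the exponent) --- but these do not change the overall approach.
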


Theorem \ref{thm: privacy gaussian} is broad in scope yet challenging to apply. It requires the precision matrix (inverse of the covariance matrix), which is hard to estimate unless specifically designed. Fortunately, our proposed noise model has a clearly defined inverse matrix, simplifying analysis and making the theorem more practical. Our dataset includes 
$n = 2^k$  points, using multivariate Gaussian noise with a covariance of \(\sigma^2 \bsC_k\),
  as detailed in Definition \ref{def:correlated-noise-correlation}. The key findings are presented in the following theorem.

\begin{theorem} \label{thm:privacy}
	Let $X\in \mathcal{X}^n$ be any dataset and let $\cM_{\sigma}(X)  = X + \mathsf{Noise}$ be our privacy mechanism, where $\mathsf{Noise} \sim \bN(\mathbf{0}, \sigma^2 \bsC_k)$ with $\bsC_k$ defined in Definition \ref{def:correlated-noise-correlation}. Fix any $\eps \in (0,1]$ and $\delta \in (0,1/2]$, our mechanism $\cM_{\sigma}(.)$ is $(\eps,\delta)$-DP for
	\[\sigma^2 \geq \left(\frac{2}{\eps^2} + \frac{2\log_2(n)}{3\eps^2}\right)\log{\frac{2}{\delta}}  = \Theta\left(\frac{\log n\log{\frac{2}{\delta}}}{\eps^2}\right).\]	
\end{theorem}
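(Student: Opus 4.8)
The plan is to derive Theorem~\ref{thm:privacy} as a corollary of Theorem~\ref{thm: privacy gaussian}. That reduction requires only one thing: an exact evaluation of $\lVert \mathsf{diag}(\bsC_k^{-1})\rVert_\infty$. I claim this equals $1+\tfrac{k}{3}$, where $k=\log_2 n$; granting this, the bound in Theorem~\ref{thm: privacy gaussian} becomes $\sigma^2\ge \tfrac{2(1+k/3)\log(2/\delta)}{\eps^2}=\big(\tfrac{2}{\eps^2}+\tfrac{2\log_2 n}{3\eps^2}\big)\log\tfrac{2}{\delta}$, which is precisely the stated condition. So the entire argument is an explicit computation of the diagonal of the precision matrix $\bsC_k^{-1}$ from the recursion \eqref{eqn: corr_recursive}.

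To carry out that computation I would first record an elementary block identity: for any block matrix $M=\begin{pmatrix}A & B\\ B & A\end{pmatrix}$ with square blocks, conjugation by the orthogonal involution $H=\tfrac{1}{\sqrt2}\begin{pmatrix}I & I\\ I & -I\end{pmatrix}$ gives $HMH=\begin{pmatrix}A+B & 0\\ 0 & A-B\end{pmatrix}$, so that when $A\pm B$ are invertible the two diagonal blocks of $M^{-1}$ both equal $\tfrac12\big[(A+B)^{-1}+(A-B)^{-1}\big]$. Applying this to $\bsC_{i+1}$ with $A=\bsC_i$ and $B=-J_i/2^{2i+1}=-\tfrac{1}{2^{2i+1}}\mathbf{1}\mathbf{1}^\top$, the problem becomes inverting the two rank-one perturbations $\bsC_i\pm\tfrac{1}{2^{2i+1}}\mathbf{1}\mathbf{1}^\top$.

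The key lemma making this clean is that $\mathbf{1}$ is an eigenvector of $\bsC_i$ with eigenvalue $2^{-i}$, proved by induction on $i$: it holds for $\bsC_1$ (eigenvalue $1/2$) by direct inspection, and since $J_i\mathbf{1}=2^i\mathbf{1}$, the recursion \eqref{eqn: corr_recursive} gives $\bsC_{i+1}\mathbf{1}=(2^{-i}-2^{-(i+1)})\mathbf{1}=2^{-(i+1)}\mathbf{1}$. Hence $\bsC_i^{-1}\mathbf{1}=2^i\mathbf{1}$ and $\mathbf{1}^\top\bsC_i^{-1}\mathbf{1}=2^{2i}$, and the Sherman--Morrison formula then yields the closed forms $(\bsC_i+B_i)^{-1}=\bsC_i^{-1}+J_i$ and $(\bsC_i-B_i)^{-1}=\bsC_i^{-1}-\tfrac13 J_i$. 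Combining, each diagonal block of $\bsC_{i+1}^{-1}$ equals $\bsC_i^{-1}+\tfrac13 J_i$; since $J_i$ has all diagonal entries equal to $1$, the diagonal entries of $\bsC_{i+1}^{-1}$ are those of $\bsC_i^{-1}$ raised by $\tfrac13$. As $\bsC_1^{-1}=\tfrac43\begin{pmatrix}1 & 1/2\\ 1/2 & 1\end{pmatrix}$ has constant diagonal $\tfrac43=1+\tfrac13$, induction gives constant diagonal $1+\tfrac{i}{3}$ for $\bsC_i^{-1}$, so $\lVert\mathsf{diag}(\bsC_k^{-1})\rVert_\infty=1+\tfrac{k}{3}$, and substitution into Theorem~\ref{thm: privacy gaussian} completes the proof.

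The main obstacle is finding this structural route rather than any individual step: recognizing that the block-symmetric form of \eqref{eqn: corr_recursive} linearizes under the Hadamard-type conjugation, and that the off-diagonal blocks are rank one so Sherman--Morrison applies, with the eigenvector lemma supplying exactly the quantity $\bsC_i^{-1}\mathbf{1}$ that the rank-one update needs. A couple of routine checks should be made along the way: that the induction is self-consistent (the eigenvector fact is established for all $i$ first, hence legitimately available at every level of the inverse recursion), and that Theorem~\ref{thm: privacy gaussian} is applicable because $\bsC_k$ is positive definite --- which also follows from the block-diagonalization, since $\bsC_i\pm B_i$ have eigenvalue $2^{-i}\pm 2^{-(i+1)}>0$ on $\mathbf{1}$ and inherit the eigenvalues of $\bsC_i$ on $\mathbf{1}^\perp$.
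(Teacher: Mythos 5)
Your proposal is correct, and its top-level reduction is exactly the paper's: plug $\lVert \mathsf{diag}(\bsC_k^{-1})\rVert_\infty = 1+k/3$ into Theorem~\ref{thm: privacy gaussian}. Where you genuinely diverge is in how you establish that diagonal value. The paper states the full inverse recursion as Lemma~\ref{lem: inverse} and proves it by \emph{verification}: it guesses that $\bsC_j^{-1}$ has diagonal blocks $\bsC_{j-1}^{-1}+\tfrac13 J_{j-1}$ and off-diagonal blocks $\tfrac23 J_{j-1}$, then confirms the product is the identity via a four-part induction carrying the auxiliary identities $J_k\bsC_k = 2^{-k}J_k$, $J_kJ_k=2^kJ_k$, and $J_k\bsC_k^{-1}=2^kJ_k$. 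You instead \emph{derive} the same recursion constructively: the orthogonal involution $H$ block-diagonalizes $\bsC_{i+1}$ into $\bsC_i\pm B_i$, the off-diagonal block is rank one, and Sherman--Morrison (fed by the eigenvector fact $\bsC_i\mathbf{1}=2^{-i}\mathbf{1}$, which is the paper's row-sum Property~3 of Proposition~\ref{prop: correlation-basic-properties} in disguise) gives $(\bsC_i+B_i)^{-1}=\bsC_i^{-1}+J_i$ and $(\bsC_i-B_i)^{-1}=\bsC_i^{-1}-\tfrac13 J_i$, whose average reproduces the diagonal block $\bsC_i^{-1}+\tfrac13 J_i$. I checked the arithmetic ($v^\top\bsC_i^{-1}u=\pm\tfrac12$, base case $4/3=1+\tfrac13$) and it is right. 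Your route buys two things: it explains where the $\tfrac13$ comes from rather than pulling the formula out of the air, and as a byproduct it yields strict positive definiteness of $\bsC_k$ (eigenvalues $2^{-(i+1)}$ and $3\cdot 2^{-(i+1)}$ on $\mathbf{1}$, the rest inherited on $\mathbf{1}^\perp$), slightly sharpening the paper's Gershgorin-based PSD claim. The paper's verification is shorter once the answer is known; yours is the one you could have found without knowing it.
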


Theorem \ref{thm:privacy} is immediate by applying Theorem \ref{thm: privacy gaussian} and Corollary \ref{cor:inverse diagonal}.

\subsection{Utility Analysis}\label{subsec:utility}
We now evaluate the utility
of our correlated input perturbation mechanism,  \(\cM_{\sigma}(\bsx) = \bsx + \bN(\mathbf{0}, \sigma^2 \bsC_k)\), for a dataset \(\bsx \in \mathcal X^n\) where \(n = 2^k\).  Given a workload matrix \(\bsW\), our mechanism operates by applying this matrix to the perturbed dataset, which we represent as \(\calW_\sigma(\bsx) := \bsW \cM_{\sigma}(\bsx)\).   
This design satisfies all the desired properties described in Section \ref{sec:intro}. First, as an additive mechanism, $\calW_\sigma$ enjoys {unbiasedness}  as the privacy noise $\bse$ has zero mean as guaranteed by design. Second, $\calW_\sigma$ maintains {consistency} as $\cM_{\sigma}(\bsx)$ could be viewed as a potential legitimate input. Furthermore, the additive construction of $\calW_\sigma$, 
coupled with the public knowledge of the noise distribution as a correlated Gaussian makes  $\calW_\sigma$  statistically transparent.

We outline various error metrics to quantify the discrepancy between $\calW_\sigma(\bsx)$ and $\bsW \bsx$. Since $\calW_\sigma(\bsx) - \bsW \bsx = \bsW \cdot \mathsf{Noise}$ where $\mathsf{Noise}\sim \bN(\mathbf{0}, \sigma^2 \bsC_k)$, we observe that the difference is a random vector that does not depend on the dataset $\bsx$. Some reasonable error metrics are as follows:
\begin{definition}[Expected total squared error]\label{def:exp-square-error} The  expected total squared error is defined as 
\begin{align*}
    \err_{\bsW, 2}(\calW_\sigma) &:= \sup_{\bsx \in \calX^n} \bE \big[ \Vert \calW_\sigma\left(\bsx\right) - \bsW \bsx \Vert_2^2 \big] \\
    &= \bE_{\bss\sim \bN(\mathbf{0}, \sigma^2 \bsC_k)} \big[ \Vert \bsW \bss \Vert_2^2 \big].
\end{align*}
\end{definition}

\begin{definition}[Worst-case expected error]\label{def:worst-case-expected} The worst-case expected error is defined as 
\begin{align*}
 \err_{\bsW}^{\infty}(\calW_\sigma) &:= \sup_{\bsx \in \calX^n} \big\Vert\bE \left[  \lvert \calW_\sigma\left(\bsx\right) - \bsW \bsx  \rvert \right]\big\Vert_\infty \\
 &= \big\Vert\bE_{\bss\sim \bN(\mathbf{0}, \sigma^2 \bsC_k)} [  \lvert \bsW \bss \rvert  ]\big\Vert_\infty.
\end{align*}
where \( |v| \) applies to each component of  \( v \).
\end{definition}

\begin{definition}[Expected worst-case  error]\label{def:expected-worst-case} The expected worst-case  error  is defined as 
\begin{align*}
 \err_{\bsW,\infty}(\calW_\sigma) &:= \sup_{\bsx \in \calX^n} \bE \left[  \Vert\calW_\sigma\left(\bsx\right) - \bsW \bsx   \Vert_\infty\right]\\
 &= \bE_{\bss\sim \bN(\mathbf{0}, \sigma^2 \bsC_k)} [ \Vert \bsW \bss  \Vert_\infty].
\end{align*}
\end{definition}

The difference between $\err_{\bsW}^{\infty}$ and $\err_{\bsW,\infty}$ arises solely from the order in which $\bE$ and the $\ell_\infty$ norm are taken.     
We have the following relationship between these errors. 
\begin{prop}\label{prop:relationship-errors}
    For any given $\sigma >0$ and query matrix $W$ of size $m\times n$, we have:
    \begin{align}
        \sqrt{ \err_{\bsW, 2}(\calW_\sigma)/m}\leq  \err_{\bsW}^{\infty}(\calW_\sigma) \leq \err_{\bsW,\infty}(\calW_\sigma). \label{eqn:relationship-errors}
    \end{align}
\end{prop}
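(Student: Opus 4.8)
The plan is to establish the chain in \eqref{eqn:relationship-errors} by two separate arguments, each of which is essentially an application of Jensen's inequality and a norm comparison. Throughout, write $\bss \sim \bN(\mathbf{0}, \sigma^2 \bsC_k)$ and let $v := \bsW \bss \in \bR^m$ be the (random) error vector, so that $\err_{\bsW,2} = \bE\|v\|_2^2$, $\err_{\bsW}^\infty = \|\bE|v|\|_\infty$, and $\err_{\bsW,\infty} = \bE\|v\|_\infty$; these identities are exactly the second lines of Definitions~\ref{def:exp-square-error}--\ref{def:expected-worst-case}.

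For the right inequality $\err_{\bsW}^\infty \le \err_{\bsW,\infty}$, I would argue coordinatewise: for each $j$, $\bE|v_j| \le \bE\max_{i}|v_i| = \bE\|v\|_\infty$, and then take the maximum over $j$ on the left to get $\|\bE|v|\|_\infty = \max_j \bE|v_j| \le \bE\|v\|_\infty$. This step is immediate from monotonicity of expectation and requires no structure on $\bsW$ or $\bsC_k$.

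For the left inequality $\sqrt{\err_{\bsW,2}/m} \le \err_{\bsW}^\infty$, the idea is: first, $\err_{\bsW,2} = \bE\|v\|_2^2 = \sum_{i=1}^m \bE v_i^2 \le m \max_i \bE v_i^2$, so $\err_{\bsW,2}/m \le \max_i \bE v_i^2$. It then suffices to show $\sqrt{\max_i \bE v_i^2} \le \max_i \bE|v_i| = \err_{\bsW}^\infty$, i.e. that for each coordinate $\sqrt{\bE v_i^2} \le \bE|v_i|$. But each $v_i$ is a one-dimensional centered Gaussian, say $v_i \sim \bN(0,\tau_i^2)$, for which $\bE|v_i| = \sqrt{2/\pi}\,\tau_i$ while $\sqrt{\bE v_i^2} = \tau_i$; since $\sqrt{2/\pi} < 1$ this gives $\sqrt{\bE v_i^2} = \tau_i \ge \sqrt{2/\pi}\,\tau_i = \bE|v_i|$, which is the \emph{wrong direction}. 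Hence I would not try to go through $\sqrt{\max_i \bE v_i^2}$; instead, bound $\max_i \bE v_i^2$ directly by $(\max_i \bE|v_i|)^2$ is false, so the correct route is: $\err_{\bsW,2}/m \le \max_i \bE v_i^2$, and then use that $v_i/\tau_i$ is standard normal to write $\bE v_i^2 = \tau_i^2$ and $\bE|v_i| = \sqrt{2/\pi}\,\tau_i$, so $\sqrt{\bE v_i^2} = \sqrt{\pi/2}\,\bE|v_i| \le \sqrt{\pi/2}\,\err_{\bsW}^\infty$. That yields $\sqrt{\err_{\bsW,2}/m} \le \sqrt{\pi/2}\,\err_{\bsW}^\infty \approx 1.25\,\err_{\bsW}^\infty$, which is still not quite \eqref{eqn:relationship-errors} as stated. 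The honest resolution, and the main obstacle, is therefore to decide whether the paper intends the constant-free bound (in which case Gaussianity must be exploited more carefully, perhaps by noting that the stated inequality as written is really $\sqrt{\err_{\bsW,2}/m} \le \err_{\bsW}^\infty$ only up to the universal Gaussian constant $\sqrt{\pi/2}$, or the inequality should be read with that constant absorbed) or whether a coordinatewise Cauchy--Schwarz suffices.

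Let me reconsider: the clean argument avoiding Gaussianity is $\sqrt{\err_{\bsW,2}/m} = \sqrt{\tfrac1m \sum_i \bE v_i^2} = \sqrt{\tfrac1m \sum_i (\bE v_i^2)}$, and one wants $\le \max_i \bE|v_i|$. By Cauchy--Schwarz $(\bE|v_i|)^2 \le \bE v_i^2$, so this would need $\bE v_i^2 \le (\bE|v_i|)^2$, again false in general. So \emph{Gaussianity is essential}: for a centered Gaussian, $\bE v_i^2 = \tfrac{\pi}{2}(\bE|v_i|)^2$, hence $\tfrac1m\sum_i \bE v_i^2 = \tfrac{\pi}{2}\cdot\tfrac1m\sum_i (\bE|v_i|)^2 \le \tfrac{\pi}{2}\max_i (\bE|v_i|)^2 = \tfrac{\pi}{2}(\err_{\bsW}^\infty)^2$. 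Thus the sharp truth is $\sqrt{\err_{\bsW,2}/m} \le \sqrt{\pi/2}\cdot \err_{\bsW}^\infty$; I expect the paper either (a) states \eqref{eqn:relationship-errors} with this $\sqrt{\pi/2}$ factor implicitly, or (b) uses a slightly different normalization so the clean form holds. The key steps in order: (1) rewrite all three errors in terms of $v = \bsW\bss$ using the given identities; (2) prove the right-hand inequality by coordinatewise monotonicity of expectation; (3) for the left-hand inequality, reduce $\err_{\bsW,2}/m$ to an average of per-coordinate variances, invoke the one-dimensional Gaussian identity $\bE v_i^2 = \tfrac{\pi}{2}(\bE|v_i|)^2$ (the place where the Gaussian structure of $\mathsf{Noise}$ is genuinely used), and bound the average by the max; (4) collect constants. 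The main obstacle is precisely pinning down step (3)'s constant and confirming whether the statement as printed is meant to carry the $\sqrt{\pi/2}$.
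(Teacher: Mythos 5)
Your proof of the right-hand inequality is the same as the paper's: the paper invokes $\lVert \bE[V]\rVert_\infty \le \bE[\lVert V\rVert_\infty]$ applied to the componentwise absolute value $V=|\bsW\bss|$, which is exactly your coordinatewise monotonicity argument. The substance is in the left-hand inequality, and there your diagnosis is correct while the paper's is not. Writing $v:=\bsW\bss$, the paper claims $\sqrt{\err_{\bsW,2}/m}\le \err_{\bsW}^{\infty}$ ``follows from $\lVert \bsv\rVert_2^2\le m\lVert\bsv\rVert_\infty^2$,'' but that deterministic inequality only yields $\bE\lVert v\rVert_2^2\le m\,\bE[\lVert v\rVert_\infty^2]$, and $\bE[\lVert v\rVert_\infty^2]$ dominates both $(\err_{\bsW,\infty})^2$ and $(\err_{\bsW}^{\infty})^2$ by Jensen, so it bounds the left side by something \emph{larger} than either right-hand quantity and establishes nothing. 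As you observe, the inequality as printed is false without a constant: taking $\bsW=I_n$ (so $m=n$ and every $v_i\sim\bN(0,\sigma^2)$) gives $\sqrt{\err_{\bsW,2}/m}=\sigma$ while $\err_{\bsW}^{\infty}=\sqrt{2/\pi}\,\sigma<\sigma$. Your corrected chain,
\begin{equation*}
\sqrt{\err_{\bsW,2}(\calW_\sigma)/m}=\sqrt{\tfrac1m\textstyle\sum_i\bE [v_i^2]}=\sqrt{\tfrac{\pi}{2}\cdot\tfrac1m\textstyle\sum_i(\bE|v_i|)^2}\le\sqrt{\tfrac{\pi}{2}}\,\max_i\bE|v_i|=\sqrt{\tfrac{\pi}{2}}\;\err_{\bsW}^{\infty}(\calW_\sigma),
\end{equation*}
is the sharp statement, and it genuinely uses that each coordinate of $\bsW\bss$ is a centered Gaussian (via $\bE[v_i^2]=\tfrac{\pi}{2}(\bE|v_i|)^2$), a fact the paper's one-line proof never touches; your observation that Cauchy--Schwarz alone runs in the wrong direction is also right. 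The discrepancy is harmless downstream, since Proposition~\ref{prop:relationship-errors} is only used to transfer $\Theta/O$ bounds (e.g.\ deducing $\err_{\bsW,2}=O(\sigma^2 n^2\log n)$ from $\err_{\bsW}^{\infty}=\Theta(\sigma\sqrt{\log n})$ in Theorem~\ref{thm:utility-continuous}), where the universal factor $\sqrt{\pi/2}$ is absorbed. But as a standalone claim the left inequality of \eqref{eqn:relationship-errors} should carry the constant $\sqrt{\pi/2}$, and your proposal is the correct proof of the corrected statement.
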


The focus of our analysis is on two scenarios. The first is that \(\bsW\) represents all consecutive range queries. In this context, \(\bsW\) is a binary matrix of dimensions \((\binom{n}{2} +n) \times n\), with each row comprising a sequence of consecutive ones. The second is when  \(\bsW\) represents all `nodal' queries, indicating that \(\bsW\) is a matrix of dimensions \((2n-1) \times n\), designed to query the values associated with every node in our binary tree.
We also recall the notations introduced in Section \ref{subsec:correlation-matrix}, where $\{X_I\}_{I\in\{0,1\}^k}$ represents the Gaussian noise vector with a covariance matrix $\sigma^2 \bsC_k$.

\subsubsection{Continuous range queries}
When $\bsW$ encompasses all continuous range queries, the related noise $\calW_\sigma(\bsx)-\bsW \bsx$ forms a vector of length $\binom{n}{2} + n$. Each element in this vector represents a consecutive sum $\sum_{L = I}^{I+j} X_L$. It is evident that each element is a univariate Gaussian with zero mean. The essential technical lemma  below demonstrates that the maximum variance increases logarithmically with the number of data points.

\begin{lemma}\label{lem:maximum-consecutive-variance}
Let $\{X_I\}_{I\in\{0,1\}^k} \sim \bN(\mathbf{0}, \sigma^2 \bsC_k)$ with $\bsC_k$ defined in Definition \ref{def:correlated-noise-correlation}. Then the maximum variance among all the consecutive sums satisfies:
 \begin{align*}
    \max_{I \in \{0,1\}^k, j\leq 2^k - I}\var[\sum_{L = I}^{I+j} X_L]/\sigma^2 = \Theta(k) = \Theta(\log_2(n)).
\end{align*}   
\end{lemma}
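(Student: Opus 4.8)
The plan is to work from the explicit linear representation of the noise vector that the Cascade Sampling algorithm produces. By Proposition~\ref{prop:noise-leaf-covariance}, $\bN(\mathbf 0,\sigma^2\bsC_k)$ is exactly the law of the leaf noises of Algorithm~\ref{alg:noise-mechanism}, and unrolling the recursion $X_{\star0}=\tfrac12 X_\star+\tfrac{\sqrt3}{2}Y_\star,\ X_{\star1}=\tfrac12 X_\star-\tfrac{\sqrt3}{2}Y_\star$ gives, for a leaf $L=b_1\cdots b_k$,
\[
X_L=\frac{1}{2^k}X_\varnothing+\sum_{d=0}^{k-1}(-1)^{b_{d+1}}\,\frac{\sqrt3}{2^{\,k-d}}\,Y_{b_1\cdots b_d},
\]
where $X_\varnothing$ and all $Y_\star$ ($\star$ ranging over nodes of depth $0,\dots,k-1$) are i.i.d.\ $\bN(0,\sigma^2)$. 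Summing over a contiguous block of leaves $[I,I+j]$ and regrouping by the node $\star$ attached to each $Y_\star$, the cross terms collapse and I get
\[
\sum_{L=I}^{I+j}X_L=\frac{j+1}{2^k}X_\varnothing+\sum_{d=0}^{k-1}\frac{\sqrt3}{2^{\,k-d}}\sum_{\star\in\{0,1\}^d}(a_\star-b_\star)\,Y_\star,
\]
where $a_\star$ (resp.\ $b_\star$) is the number of leaves of $[I,I+j]$ lying in the left (resp.\ right) subtree of $\star$. Independence then yields the sum-of-squares identity $\var\big[\sum_{L=I}^{I+j}X_L\big]/\sigma^2=\big(\tfrac{j+1}{2^k}\big)^2+\sum_{d=0}^{k-1}\tfrac{3}{2^{2(k-d)}}\sum_{\star\in\{0,1\}^d}(a_\star-b_\star)^2$, which drives both directions.

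For the upper bound I observe that since the interval and each node's leaf set are contiguous, at any fixed depth $d$ a node can have $a_\star\neq b_\star$ only if its leaf-block contains the endpoint $I$ or the endpoint $I+j$ (every other node at that depth has all or none of its leaves in the interval, hence $a_\star=b_\star$); thus at most two nodes per depth contribute. Since $|a_\star-b_\star|\le 2^{\,k-d-1}$ always, each contributing node adds at most $\tfrac{3}{2^{2(k-d)}}(2^{\,k-d-1})^2=\tfrac34$, and with $\big(\tfrac{j+1}{2^k}\big)^2\le1$ this gives $\var[\,\cdot\,]/\sigma^2\le 1+\tfrac32 k=O(k)$, uniformly over all consecutive sums.

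For the lower bound it suffices to display a single interval. Take the prefix $[0,M]$ with $M$ having alternating binary digits. Only the $k$ ancestors $\star_0,\dots,\star_{k-1}$ of the leaf $M$ contribute, and I will check $(a_{\star_d}-b_{\star_d})^2\ge 2^{2(k-d-2)}$ for $d=0,\dots,k-2$: if $M$ lies in the left subtree of $\star_d$ then $b_{\star_d}=0$ and $a_{\star_d}=\overline{m_{d+2}\cdots m_k}+1\ge 2^{\,k-d-2}$ (the leading retained bit $m_{d+2}$ equals $1$ by the alternating pattern); if $M$ lies in the right subtree then $a_{\star_d}=2^{\,k-d-1}$ and $b_{\star_d}=\overline{m_{d+2}\cdots m_k}+1\le 2^{\,k-d-2}$ (now $m_{d+2}=0$), so the difference is again $\ge 2^{\,k-d-2}$. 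Each such term contributes $\tfrac{3}{2^{2(k-d)}}2^{2(k-d-2)}=\tfrac{3}{16}$, whence $\var[\,\cdot\,]\ge\tfrac{3(k-1)}{16}\,\sigma^2=\Omega(k)\sigma^2$. Together with the upper bound this gives $\Theta(k)=\Theta(\log_2 n)$.

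The telescoping identity and the per-depth counting are routine; the one creative point is the choice of interval in the lower bound. A naive ``staircase'' interval such as $[0,2^k-2]$ does \emph{not} work, because its dyadic pieces are consecutive blocks of comparable size whose strong pairwise negative correlations ($-\sigma^2/4$ between size-$2^s$ and size-$2^{s-1}$ neighbors) cancel the variance back down to $O(\sigma^2)$; one has to force the contributing blocks to have geometrically separated sizes, which the alternating-digit pattern arranges. (An alternative, equally short, route to the upper bound: decompose the interval into $O(k)$ dyadic blocks and note by an easy induction that every off-diagonal entry of $\bsC_k$ is negative, so all pairwise covariances of the block-noises are negative and the variance is at most the block count times $\sigma^2$.)
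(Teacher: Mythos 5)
Your proof is correct, but it takes a genuinely different route from the paper's. The paper proves the upper bound by ``merging'' the interval into at most $2k-2$ dyadic blocks (\cref{lem:merging-terms}), each a node variable of variance $\sigma^2$, and then invoking the fact that all off-diagonal entries of $\bsC_k$ are negative so that disjoint block sums are non-positively correlated (\cref{lem: neg-corr}) --- exactly the ``alternative route'' you mention in your closing parenthesis. For the lower bound the paper uses essentially the same interval as you (the alternating-bit prefix $2^{K-1}+2^{K-3}+\cdots+1$), but analyzes it by induction on the recursion \eqref{eqn: corr_recursive}, showing $V_K \ge V_{K-2}+\tfrac23\sigma^2$. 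You instead unroll the Cascade Sampling recursion into an explicit orthogonal expansion of each leaf in the independent generators $X_\varnothing,\{Y_\star\}$, which yields an exact sum-of-squares identity
$\var\bigl[\sum_{L=I}^{I+j}X_L\bigr]/\sigma^2=\bigl(\tfrac{j+1}{2^k}\bigr)^2+\sum_{d}\tfrac{3}{4^{\,k-d}}\sum_{\star}(a_\star-b_\star)^2$,
and then both bounds follow from counting boundary nodes: at most two per depth contribute (each at most $\tfrac34$) for the upper bound, and the alternating-digit prefix forces each depth to contribute at least $\tfrac{3}{16}$ for the lower bound. I verified the expansion (it reproduces the unit variances and the sibling covariance $-\tfrac12$ and cross-root covariance $-2^{-2k+1}$ of $\bsC_k$), the two case analyses for $a_{\star_d}-b_{\star_d}$, and your observation that the prefix $[0,2^k-2]$ fails. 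Your argument buys an exact closed form for the variance of \emph{every} consecutive sum (not just an $O(k)$ bound), and it unifies the two directions under one identity; the paper's argument has the advantage of not depending on \cref{prop:noise-leaf-covariance} (it works directly from $\bsC_k$) and of isolating the negative-correlation principle, which generalizes beyond this particular factorization. Both are complete and correct.
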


With Lemma \ref{lem:maximum-consecutive-variance}, we are ready to state the utilities of our privacy mechanism under different metrics.
\begin{theorem}\label{thm:utility-continuous}
     Fix $\sigma >0$ and let query matrix $\bsW$ be all the continuous range queries, we have:
     \begin{align*}
     \err_{\bsW}^{\infty}(\calW_\sigma) = \Theta\left(\sigma \sqrt{\log_2(n)}\right), \\
     \err_{\bsW,2}(\calW_\sigma) = O\left(\sigma^2 n^2 \log_2(n)\right), \\
     \err_{\bsW,\infty}(\calW_\sigma) = O(\sigma\log_2(n)).
     \end{align*}
\end{theorem}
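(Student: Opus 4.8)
\textbf{Proof proposal for Theorem~\ref{thm:utility-continuous}.}

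The plan is to derive all three bounds from Lemma~\ref{lem:maximum-consecutive-variance} together with Proposition~\ref{prop:relationship-errors} and standard Gaussian tail/maximum estimates. Write $\sigma_{\max}^2 := \max_{I,j}\var[\sum_{L=I}^{I+j}X_L] = \Theta(\sigma^2\log_2 n)$ for the worst-case variance over the $m = \binom{n}{2}+n$ consecutive-range queries; each coordinate of $\bsW\cdot\mathsf{Noise}$ is a centered Gaussian with variance at most $\sigma_{\max}^2$, and the coordinate realizing the maximum has variance $\Theta(\sigma_{\max}^2)$.

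First I would handle $\err_{\bsW}^{\infty}$. Since $\err_{\bsW}^{\infty}(\calW_\sigma) = \bigl\|\bE[\,|\bsW\bss|\,]\bigr\|_\infty = \max_i \bE|(\bsW\bss)_i|$ and for a centered Gaussian $g$ with variance $v$ one has $\bE|g| = \sqrt{2v/\pi}$, the upper bound $\err_{\bsW}^{\infty}(\calW_\sigma)\le\sqrt{2/\pi}\,\sigma_{\max} = O(\sigma\sqrt{\log_2 n})$ is immediate, and the matching lower bound follows from the query whose variance is $\Theta(\sigma_{\max}^2)$. This gives $\err_{\bsW}^{\infty}(\calW_\sigma) = \Theta(\sigma\sqrt{\log_2 n})$. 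For $\err_{\bsW,2}$, I would use $\err_{\bsW,2}(\calW_\sigma) = \bE\|\bsW\bss\|_2^2 = \sum_{i=1}^m \var[(\bsW\bss)_i] \le m\,\sigma_{\max}^2$; plugging in $m = \Theta(n^2)$ and $\sigma_{\max}^2 = \Theta(\sigma^2\log_2 n)$ yields $\err_{\bsW,2}(\calW_\sigma) = O(\sigma^2 n^2\log_2 n)$. (Proposition~\ref{prop:relationship-errors} is consistent with this but gives only a lower bound, so the direct sum-of-variances argument is what is needed here.)

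The main work is the $\err_{\bsW,\infty}$ bound, the expected maximum of $m = \Theta(n^2)$ centered Gaussians each with variance $\le\sigma_{\max}^2$. Here I would invoke the standard maximal inequality: for any collection of (not necessarily independent) centered Gaussians $g_1,\dots,g_m$ with $\var[g_i]\le v$, one has $\bE[\max_i |g_i|] \le \sqrt{2v\log(2m)}$ (e.g. via a union bound on Gaussian tails, or the subgaussian MGF bound $\bE e^{\lambda g_i}\le e^{\lambda^2 v/2}$ optimized in $\lambda$). Taking $v = \sigma_{\max}^2 = \Theta(\sigma^2\log_2 n)$ and $\log(2m) = \Theta(\log n)$ gives $\err_{\bsW,\infty}(\calW_\sigma) = \bE[\|\bsW\bss\|_\infty] = O(\sqrt{\sigma^2\log n\cdot\log n}) = O(\sigma\log_2 n)$, as claimed. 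I would note that only the $O$-direction is asserted for the last two quantities, so no matching lower bound is required.

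The only genuine obstacle is making sure the maximal inequality is applied with the correct variance proxy, i.e. that Lemma~\ref{lem:maximum-consecutive-variance} really controls $\var[(\bsW\bss)_i]$ \emph{uniformly} over all $\binom{n}{2}+n$ rows and not merely the maximal one; but that uniform bound is exactly the content of the lemma (the $\Theta(k)$ is an upper bound for every consecutive sum, attained by one of them), so the argument goes through cleanly. A minor bookkeeping point is translating between $\log_2 n$ and $\log n$, which only affects constants absorbed into $\Theta(\cdot)$ and $O(\cdot)$.
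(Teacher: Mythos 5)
Your proposal is correct and follows essentially the same route as the paper: all three bounds rest on Lemma~\ref{lem:maximum-consecutive-variance}, with $\bE|g|=\sqrt{2v/\pi}$ for $\err_{\bsW}^{\infty}$, a count-times-max-variance bound for $\err_{\bsW,2}$, and a union-bound/Gaussian maximal inequality for $\err_{\bsW,\infty}$ (the paper just carries out the tail integration explicitly rather than quoting the packaged inequality $\bE\max_i|g_i|\le\sqrt{2v\log(2m)}$). One small correction to your parenthetical: the first inequality of Proposition~\ref{prop:relationship-errors} rearranges to $\err_{\bsW,2}\le m\,(\err_{\bsW}^{\infty})^2$, which is exactly the upper bound the paper uses to deduce the second claim from the first, so it is not merely a lower bound -- though your direct sum-of-variances argument is equally valid and arguably more transparent.
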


Choosing \(\sigma\) to accord to the privacy budget, the next corollary gives our mechanism's utility guarantee. 
\begin{corollary}\label{cor:utility-continuous-privacy}
    Let $X\in \mathcal{X}^n$ be any dataset and let $\cM_{\sigma}(X)  = X + \sigma\bN(\mathbf{0}, \bsC_K)$ be our privacy mechanism, where $\sigma$ is chosen such that the mechanism satisfies $(\eps,\delta)$-DP. Let $\bsW$ be all the continuous range queries, we have:
    \hspace{-0.5pt}
    \begin{align*}
        \err_{\bsW}^{\infty}(\calW_\sigma) &= \Theta\left( \log(n) \sqrt{\log(2/\delta)} \eps^{-1}\right), \\
        \err_{\bsW,2}(\calW_\sigma) &= O\left(n^2 \log^2(n)\log(2/\delta)\eps^{-2}\right), \\
     \err_{\bsW,\infty}(\calW_\sigma) & = O\left( \log^{1.5}(n) \sqrt{\log(2/\delta)} \eps^{-1}\right).
   \end{align*}
   
\end{corollary}

Theorem \ref{thm:utility-continuous} differs from the scenario where independent Gaussian noise \(\bN(0, \sigma^2)\) is added to each leaf node. In the latter case, the three errors, $\err_{\bsW,2}, \err_{\bsW}^{\infty},$ and $\err_{\bsW,\infty}$, are \(\Theta(n^3 \sigma^2)\), \(\Theta(\sqrt{n}\sigma)\), and \(\Omega(\sqrt{n}\sigma)\) respectively. In contrast, our mechanism results in errors of \(\Theta(n^2\log(n)\sigma^2)\), \(O(\sqrt{\log(n)}\sigma)\), and \(O(\log(n)\sigma)\), indicating lower error magnitudes for all evaluated metrics.

\textbf{Comparison of bounds in \cref{cor:utility-continuous-privacy}:} 
The bounds attained by our mechanism for $\err_{\bsW,2}(\calW_\sigma)$ match previous work \cite{hay2009boosting, XiaoWG11} which can be viewed as instantiations of the matrix mechanism \cite{li2015matrix}. To  our best knowledge, this is the first work that obtains comparable bound to specialized output perturbation mechanisms for range queries. We also show that empirically we match the performance of these methods in \cref{sec:experiments}. Though these works do not explicitly analyze for $\err_{\bsW,\infty}(\calW_\sigma)$,  \cite{JRSS23} showed that the Binary Tree mechanism \cite{DworkNPR10, chan2011private} with Gaussian noise obtains the same $O(\log^{3/2}n)$ bound.

In terms of tightness, our bound for $\err_{\bsW,2}(\calW_\sigma)$ is optimal among all  $(\eps,\delta)$-DP mechanisms. To see this, note that the workload matrix for continual counting (a lower triangular matrix of size $n \times n$ with 1 for entries on  and below the diagonal and 0 elsewhere) forms a subset of the workload matrix for continuous queries. \cite{HenzingerUU23} (Thm. 4) show $\Omega(\log^2 n)$ lower bound on error for any $(\eps, \delta)$-DP mechanism for the continual counting  which matches our upper bound (after scaling for the number of queries). For $\err_{\bsW,\infty}(\calW_\sigma)$, \cite{FichtenbergerHU23} (Thm. 3) show $\Omega(\log^2 n)$ lower bound on the \textit{squared-infinity}  error for any $(\eps, \delta)$-DP mechanism, leading to an $\Omega(\log n)$ lower bound for $\err_{\bsW,\infty}(\calW_\sigma)$. Consequently, our upper bound for  $\err_{\bsW,\infty}(\calW_\sigma)$ in  \cref{cor:utility-continuous-privacy} is only off by a factor of $\log^{0.5} n$ (due to Gaussian concentration) compared to the known lower bound. An intriguing open question is whether this gap can be narrowed from either side.

\section{A brief review of related literature}\label{subsec:literature-review}

This section provides an abridged review of existing DP mechanisms for linear queries, with a focus on how they can be used for the Census application with its considerations.
Due to space constraints, we defer an extended literature review to Appendix~\ref{sec:lit-review-appendix}.

As a quintessential {\it output perturbation} method, \emph{Gaussian mechanisms} add to the query output centered Gaussian noise with variance 
tailored to its sensitivity~\citep{Dinur2003,Dwork2004-dx,Dwork2006,hardt2010geometry,nikolov2013geometry}, which for linear queries is the largest norm of the columns of workload matrix $\bsW$. 
Gaussian mechanisms are unbiased and statistically transparent, but are not flexible in utility control because the accuracy of the entire output query is dictated by $\bsW$. They are not consistent in general unless $e$ is guaranteed to be in the column space of $\bsW$, as discussed in Remark~\ref{remark:consistency}.

The Census Bureau's \emph{TopDown algorithm}~\citep{abowd2022topdown,Abowd2024-jj} is a massive endeavor to navigate through the complex requirements discussed in Section~\ref{sec:intro}. It consists of two phases: 1) the ``measurement'' phase injects additive discrete Gaussian noise \citep{canonne2020discrete} to confidential queries, similar to the Gaussian mechanisms; 2) the ``estimation'' phase uses optimization-based post-processing to achieve consistency (internal and external) and complex utility control such as invariants/consistency, external constraints, and marginal accuracy targets. Post-processing does not damage privacy protection -- which is one of the greatest traits of DP. Unfortunately, the post-processing phase of TopDown algorithm destroyed both unbiasedness~\cite{NAP25978,gao2022subspace} and statistical transparency~\cite{gong2022transparent,hawes23census} (noise distribution became intractable and regression on data output revealed unwanted bias). Researchers tend to agree that a major contributor to the bias is the nonnegativity constraint that TopDown’s post-processing enforces; see e.g.~\cite{abowd21uncertainty}. 



In general consistency could be obtained by projection into a designated subspace~\cite{li2015matrix} or other optimization-based transformations. Special care should be taken to ensure unbiasedness (e.g.~\cite{barak2007privacy,hay2009boosting}), which could be lost if inequality constraints are present. Moreover, it would be ideal that an analytical description is available, e.g.~\cite{hay2009boosting}, to maintain statistical transparency. 
In addition, though theoretically efficient algorithms are known, certain post-processing approaches are not practical in real-world use cases of very large data sets, e.g. projecting onto convex sets. For example, the post-processing phase of TopDown algorithm is a substantially non-trivial effort in terms of computational costs; see section 7.6 of~\citep{abowd2022topdown}.

To improve utility, a lot of work has been developed to carefully design how perturbations are added. Starting from the work in~\cite{li2015matrix}, the broadly defined family of \emph{matrix mechanism} is a workload-dependent mechanism. In general, one finds a privileged factorization, $\bsW = RA$, and infuses Gaussian noise to the intermediate result $Ah$, with $h$ being the histogram vector. 
Choice in the factorization allows for the attainment of near optimal utility. 
While this can be done for any general workload matrix, for the special case of one dimensional range query, there has been a number of excellent work such as Binary Tree \citep{DworkNPR10, chan2011private}, Hierarchical \citep{hay2009boosting}, Wavelet \citep{XiaoWG11}, EigenDesign \citep{LiM12}, HDMM~\cite{McKenna_Miklau_Hay_Machanavajjhala_2023} and explicit factorization~\cite{FichtenbergerHU23} mechanisms that can be considered as carefully choosing the matrix factorization. Consistency of each of these mechanisms can be individually checked.
We will review the details of these algorithms in Appendix~\ref{sec:lit-review-appendix}. We also show empirical results in the next section.

We would like to remark that our mechanism can also be written as a matrix factorization $M(x) = R(Ax + z)$ with $R = \sqrt{\bsC_k}$, $A = R^{-1}W$ where $W$ is the query workload matrix for 1D range query. In addition to showing asymptotically tight $\ell_2$ error and near-optimal $\ell_{\infty}$ error (off by a $\sqrt{\log n}$ factor from the current best lower bound), our mechanism also ensures that all queries share the \emph{same} variances.

\section{Experiments}\label{sec:experiments}




\begin{figure*}
    \centering
    \subfigure[]{\includegraphics[width=0.69\columnwidth]{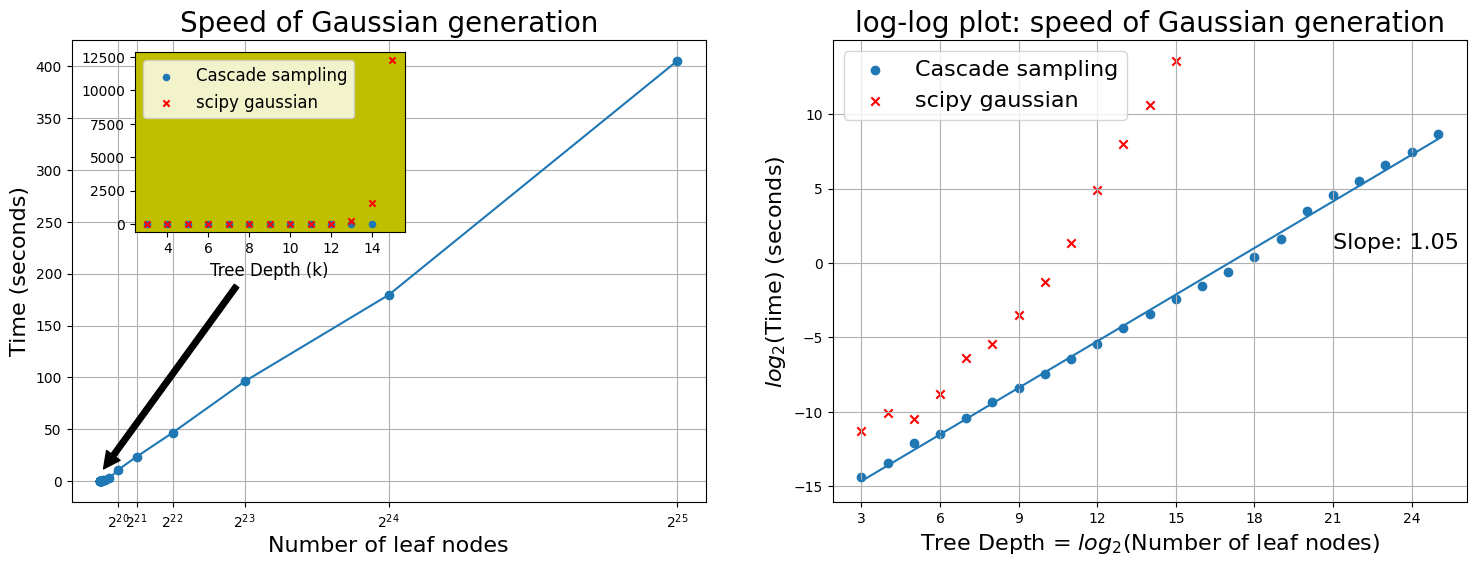}\label{fig:time-comparison}} 
    \subfigure[]{\includegraphics[width=0.32\columnwidth]{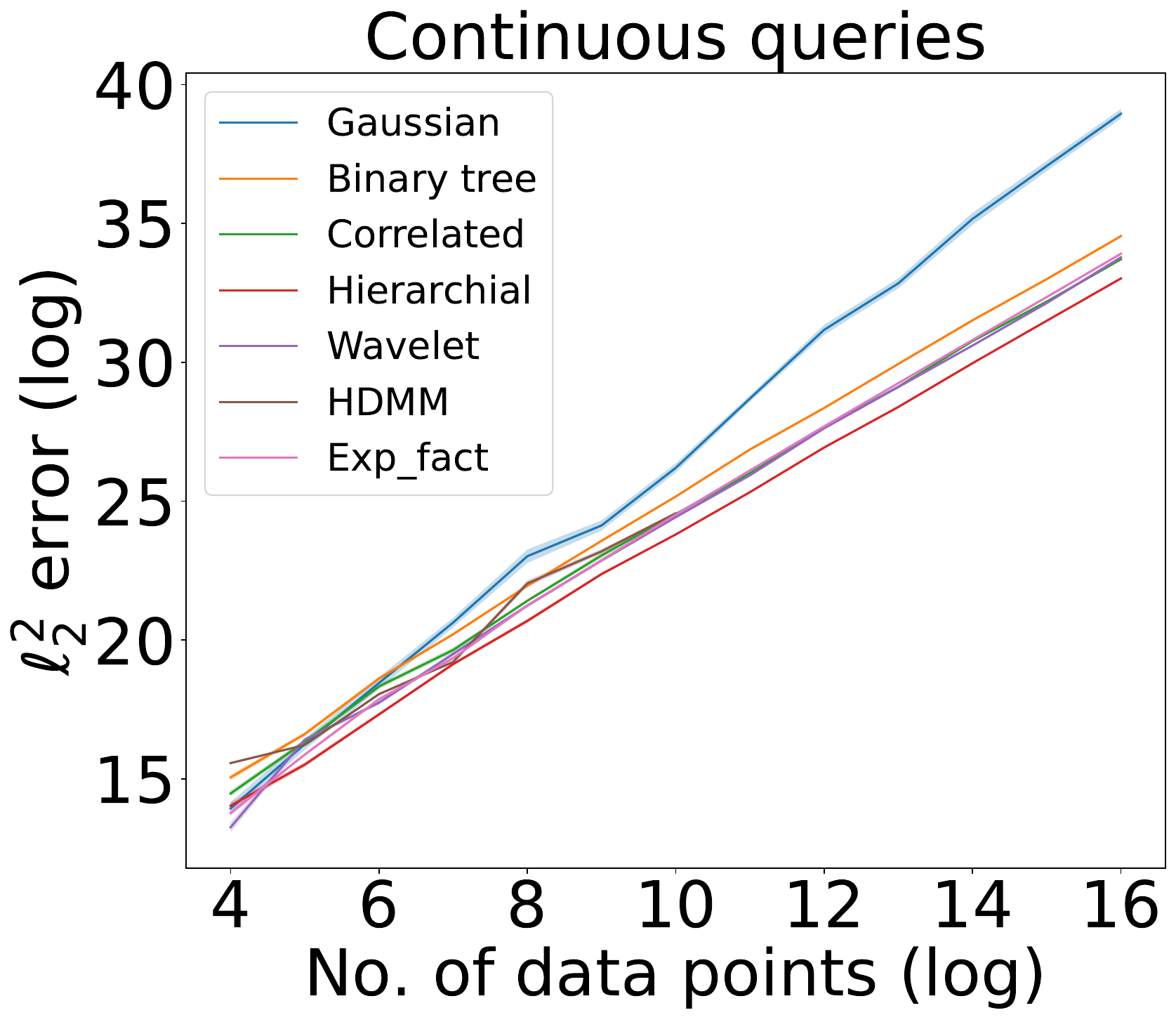}\label{fig:cont-l2}} 
    \subfigure[]{\includegraphics[width=0.32\columnwidth]{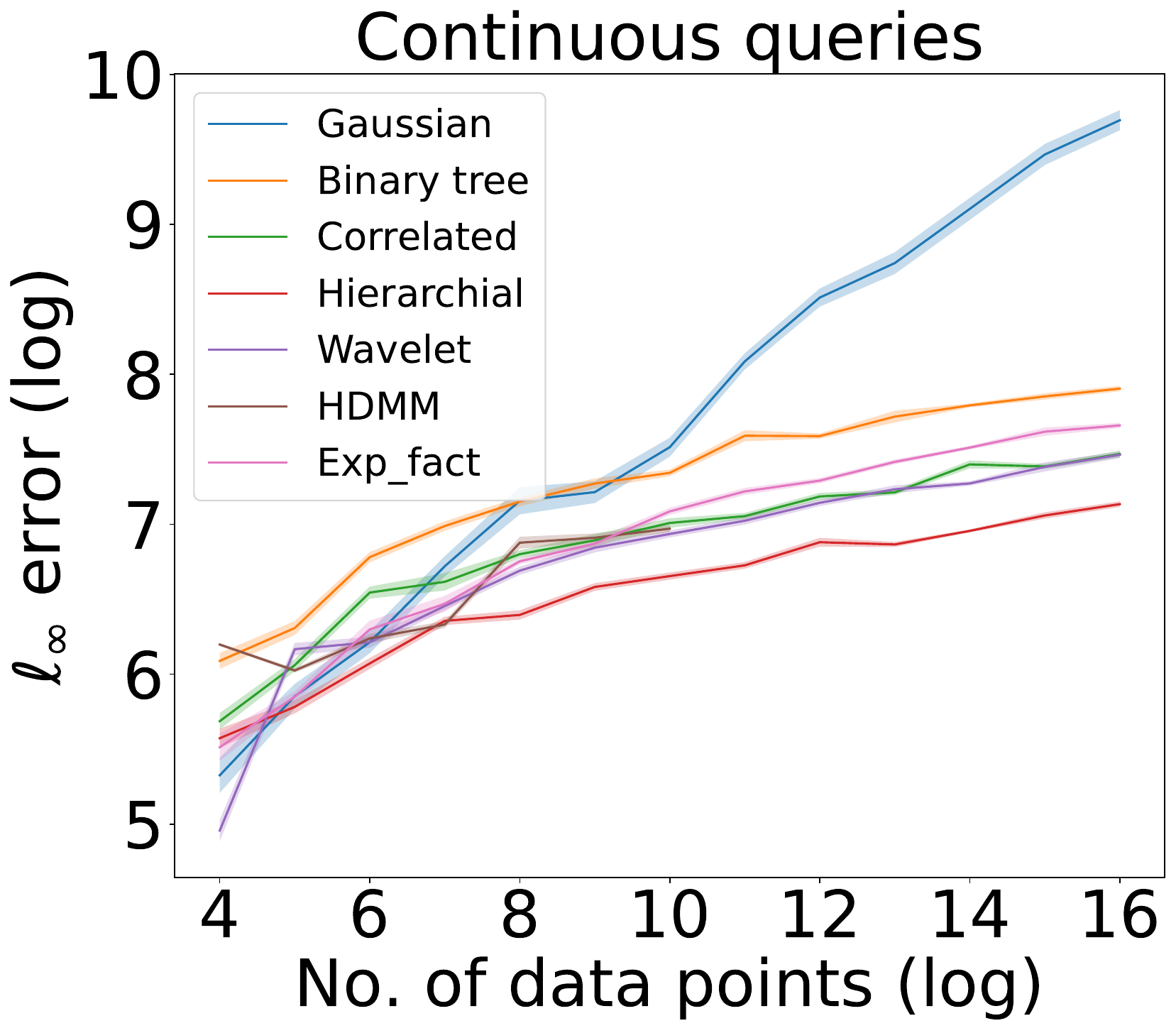}\label{fig:cont-linf}}
    \subfigure[]{\includegraphics[width=0.33\columnwidth]{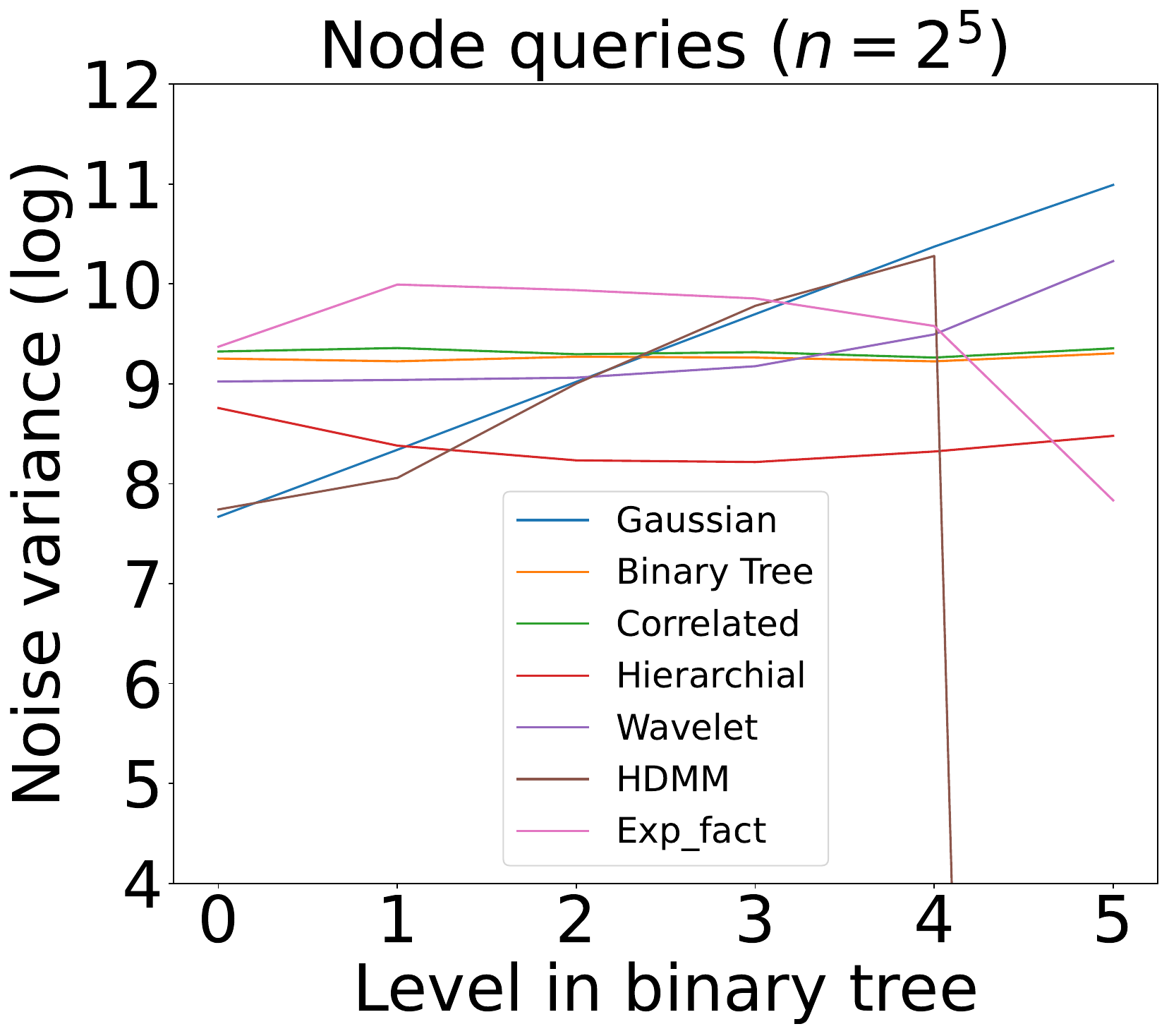}\label{fig:var-5}}
    \subfigure[]{\includegraphics[width=0.32\columnwidth]{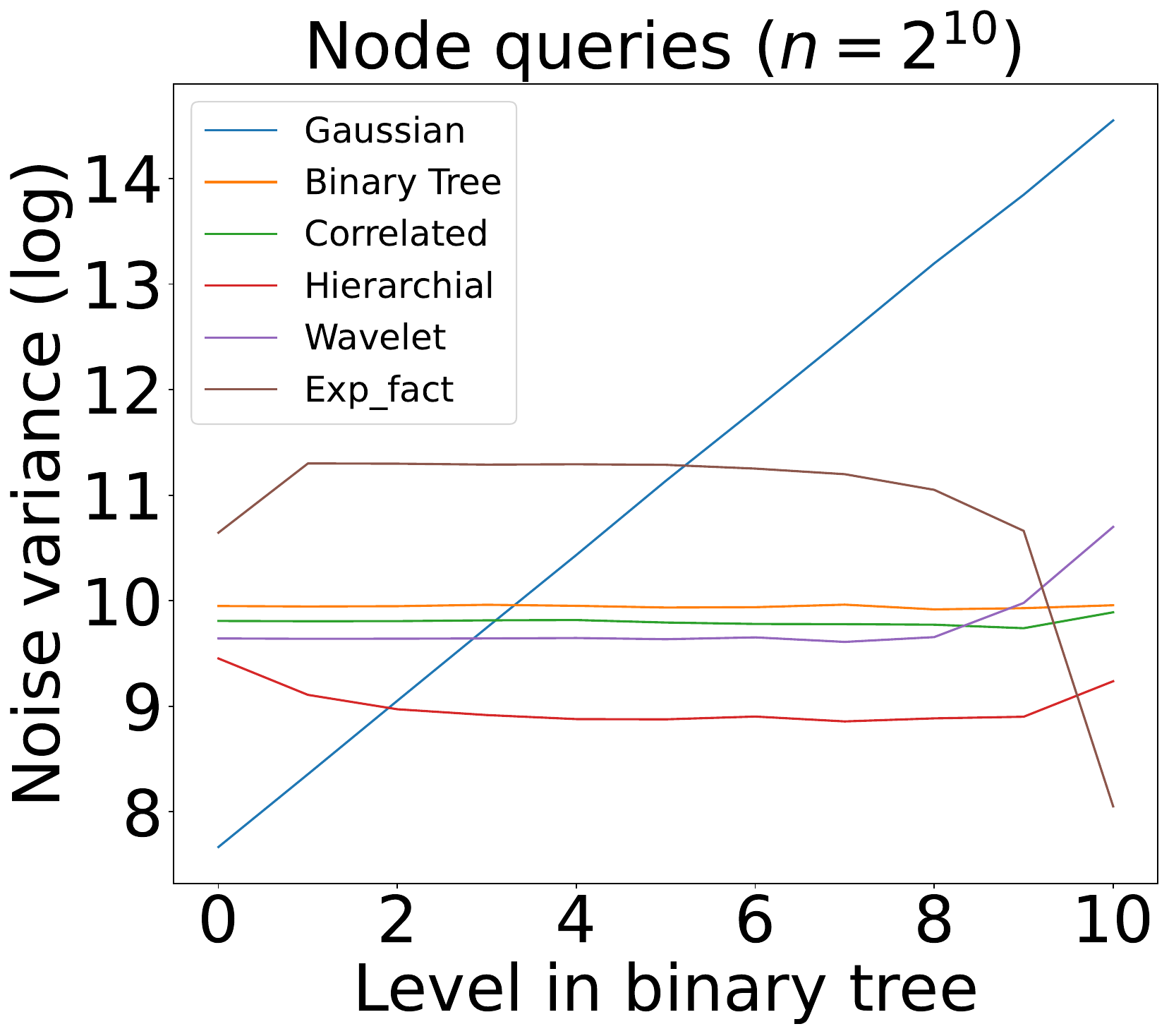}\label{fig:var-10}}
    \caption{(a): Comparison of Gaussian Generation Speeds (Left: original scale; Right:log-log scale). (b) \& (c): $\err_{\bsW,2}(\calW_\sigma)$ and $\err_{\bsW,\infty}(\calW_\sigma)$ for continuous queries in log-log scale.  
    (d) \& (e): Variance of queries for levels of binary tree in log-log scale. }
    \label{fig:results}
\end{figure*}



\textbf{Efficiency of Cascade Sampling.} The computational costs in terms of clock time for both scales, original and logarithmic are displayed in Figure \ref{fig:time-comparison}. Cascade Sampling  demonstrates remarkable efficiency, requiring less than a second for up to \(n = 2^{17} \) ($131,072$). Even for  a over $33.5$ million-dimensional Gaussian, our algorithm completes the task in under $9$ minutes. Right part of Figure \ref{fig:time-comparison} features a scatterplot of log(Time) against log(Data Points), with a fitted straight line that has a slope of $1.05$, confirming our assertion of linear cost. Additional details are provided in \cref{app:add_exp}.

\textbf{Utility comparison with existing algorithms. }
We evaluate the correlated input perturbation mechanism against the Gaussian (adding noise to $\bsx$ and querying from the privatized data), Binary Tree \citep{DworkNPR10, chan2011private}, Hierarchical \citep{hay2009boosting}, Wavelet \citep{XiaoWG11}, HDMM~\cite{McKenna_Miklau_Hay_Machanavajjhala_2023} and explicit factorization~\cite{FichtenbergerHU23}.  We do not compare with an output Gaussian mechanism (adding noise to $\bsW\bsx$) as this performs substantially worse due to high sensitivity. Experiments comparing runtime are in \cref{app:exp_runtime}. 

\underline{Experimental setup:} We compare mechanisms on randomly generated data, where each element of $\bsx \in \mathbb{Z}^n$ is sampled uniformly at random from $1$ to $1000$. We consider the case where a person contributes a single count to an element in $\bsx$ and hence for neighboring $\bsx$ and $\bsx'$, $\|\bsx - \bsx'\|_1 = 1$.
We consider $n$ in range $\{2^{4}, \cdots , 2^{15}\}$, except for HDMM which we restrict to $2^{10}$ due to computational overhead. 
We fix privacy parameters $\eps = 0.1$ and $\delta = 10^{-9}$ for all our experiments and report $\err_{\bsW,2}(\calW_\sigma)$ and $\err_{\bsW,\infty}(\calW_\sigma)$ (\cref{def:exp-square-error} and \cref{def:expected-worst-case}) for different choices of workload matrices $\bsW$. The error values on graphs are plotted with $\pm 0.25$ standard deviation across 10 independent runs for all experiments. Experiments for node and random queries and additional implementation details are included in \Cref{app:add_exp}.

\underline{Continuous range queries:}
\label{subsec:exp_cont} 
A continuous range query is specified by start and end index in $\{1,\ldots, n\}$ and the answer is sum of all elements within this range. We consider the workload matrix of all possible $\binom{n}{2} + n$ queries. As computing output of all these queries becomes computationally infeasible for very large $n$, we  randomly sample the queries to obtain a Monte Carlo estimate of the errors, with details explained in Appendix \ref{subsec:exp_details}. We sample $5000$ queries for each value of $n$ and report estimated $\err_{\bsW,2}(\calW_\sigma)$ and $\err_{\bsW,\infty}(\calW_\sigma)$ in \cref{fig:cont-l2,fig:cont-linf}.

    

Both $\err_{\bsW,2}(\calW_\sigma)$ and $\err_{\bsW,\infty}(\calW_\sigma)$ follow the same trend. 
Four mechanisms, ours, Wavelet, HDMM and Hierarchical, appear to be on the first tier (i.e., with smallest error) for both $\ell_2$ and $\ell_{\infty}$. 
For small $n$, Gaussian mechanism performs as well as these methods but suffers variance scaling linearly in $n$ for larger values. Our mechanism, owing to its design, exhibits significantly reduced error compared to the standard Gaussian approach, its primary input perturbation competitor. Our method
achieves comparable error empirically with the state of the art methods,
yet it carries additional nice statistical properties such as unbiasedness, consistency, and transparency. 


\underline{Utility control:} The ability to control noise variance across different `levels' of summation was one of the motivations of our mechanism, which we demonstrate here. We consider a binary tree over the dataset vector $\bsx$, and the queries (which are nodes of the tree) correspond to intervals summing up the leaves of the sub-tree rooted at the node. We then consider the variance of noise across queries corresponding to different levels (and the root would correspond to the highest level summing up all the elements of $\bsx$). We perform this for $n = 2^5$ and $n=2^{10}$ for $500$ independent runs and the results are shown in \cref{fig:var-5,fig:var-10}.

    

As the additive correlated noise in our mechanism is designed to have the same variance across levels, our mechanism and the Binary Tree mechanism, which adds i.i.d. Gaussian noises to each element of binary tree to answer queries have the same variance across all such queries. Comparatively, Gaussian mechanism has variance linearly increasing across levels. Wavelet has smaller consistent variance for smaller levels (corresponding to short ranges) but increases sharply for long ranges, whereas the opposite effect is observed for explicit factorization and HDMM. Hierarchical has the smallest variance in magnitude and also a typical trend --- variance is larger for smaller and larger levels and less for intermediate ones, which helps illustrate the effect of post-processing and explains the improved utility.

\section{Discussion and Future Work}
\label{sec:discussion}

The proposed input perturbation mechanism leverages error correlation structures which, once combined with the query workload matrix,  allows for fine-grained control over error magnitude across queries at different geographic levels, a feature that is highly desirable in applications such as the dissemination
of confidential official statistical data products with complex structures.
Our proposal permits straightforward generalizations to general (unbalanced) binary trees, as well as higher-dimensional query types, in particular 2-D range queries and contingency tables that recognize, for example, the geographic contiguity of the data points. Both generalizations need to modify the noise allocation scheme in the cascade sampling algorithm. Appendices~\ref{subsubsec:general_binary_tree} and~\ref{subsubsec:2d-queries} respectively discuss how to instantiate these generalizations. 
We believe this work leads to a generic framework of differential privacy mechanism design.  In particular, beyond equality of error variance at every hierarchical level considered here, the Cascading Sampling Algorithm lends itself to a straightforward generalization where detailed variance control is required, i.e. when different error variances are needed at different levels.  Furthermore, when the data or query workload structure is well understood, additional benefits can result from catering privacy perturbations to respect these known structures. These investigations are left to future work. 


\newpage

\bibliographystyle{abbrv}
\bibliography{references}


\newpage
\onecolumn
\appendix

\section{An extended review of related literature}\label{sec:lit-review-appendix}



Linear query is a central topic that has been studied extensively in the differential privacy literature. To supplement the brief review provided in Section~\ref{subsec:literature-review}, we discuss in greater detail existing work on privacy mechanisms for linear queries in relation to the properties of unbiasedness, consistency, statistical transparency, and flexible utility control.

A simple mechanism is input perturbation in which data entry either perturbed (in a discrete setting) or infused with an independent noise (often from a Laplace or Gaussian distribution) with suitable magnitude. 
The linear query can be answered by simply reporting the query answer on the perturbed data. As Section~\ref{subsec:literature-review} alludes to, such mechanisms supports unbiasedness (if the element-wise perturbation has zero mean), logical consistency, statistical transparency, and are often amenable to distributed implementation. However, it does not support flexible utility specifications. In particular, when a query range contains a large number of data elements, the query utility (or error) grows in the order of $O(\sqrt{n})$ where $n$ is the the number of elements in the range. The error blow-up for aggregated data substantially hinders the use of input perturbation in practice.



In contrast, mechanisms that use {\it output} perturbation consider adding noise to the query output. One classic example is the Gaussian mechanism~\cite{Dinur2003,Dwork2004-dx,Dwork2006}. The variance of independent Gaussian noise added to each query needs to be tailored to the query {\it sensitivity}, i.e., the maximum number of queries one element is involved, or the largest norm of the columns of workload matrix $\bsW$. Gaussian mechanisms satisfy unbiasedness and statistical transparency, but does not support flexible utility control. 
The noise magnitude could be high when the sensitivity of $\bsW$ is high. For consistency, if the perturbation noise $\bse$ is in the column space of $\bsW$, then consistency holds -- we can write the data output as $\bsW \bsx+\bse=\bsW(\bsx+\bse')$ with $\bsW\bse'=\bse$.


A notable class of Gaussian mechanisms provide improved utility to linear queries by leveraging their geometric structure \cite{hardt2010geometry,nikolov2013geometry}.
Specifically, consider the workload matrix $W$ as a linear map that maps the input data histogram to a $d$-dimensional output vector, with $d\ll n$. \cite{hardt2010geometry} studied the image of the unit $\ell_1$ ball under the linear map $\bsW$ (which is a convex polytope $K$) and use an instantiation of the exponential mechanism (via a randomly chosen point in $K$) for $\eps$-differential privacy. In a later work, \cite{nikolov2013geometry} considered the setting when the number of queries $d$ is much larger than $n$, and $(\eps,\delta)$-differential privacy in general. They use a Gaussian mechanism where the parameters are guided by the minimum enclosing ellipsoid of a square submatrix of $\bsW$. 
These mechanisms have polylogarithmic approximation (in terms of error) to the lower bound for such queries, which are derived by using hereditary discrepancy of $\bsW$. As special cases of Gaussian mechanisms, these mechanisms also fall in the category of output perturbation, and enjoy similar properties as does the classic Gaussian mechanism in terms of bias, consistency, transparency, and utility control. 


The generic family of matrix mechanism~\cite{li2015matrix,McKenna_Miklau_Hay_Machanavajjhala_2023}, also called the factorization mechanism~\cite{Nikolov16geometry}, is workload-dependent (i.e. the workload matrix $\bsW$ plays a role in its design) and is situated between input and output perturbation.
The workload matrix $\bsW$ is factored into two matrices $R$ and $A$, with $R$ called the reconstruction matrix and $A$ the strategy matrix.  Gaussian errors are added to the intermediate result $Ah$, where $h$ is the histogram vector. The final query result is taken as the multiplication of $R$ and the noisy intermediate result. This way the sensitivity can be controlled by the maximum column norm of $A$. Depending on the utility objective (e.g., $\ell_{\infty}$ or $\ell_2$ of the final error), one can choose the factorization $\bsW=RA$ and final query error is proportional to the multiplication of the max row norm of $R$ \footnote{To optimize for the final $\ell_{\infty}$ (or $\ell_2$) error one would minimize the max $\ell_2$ row norm (or Frobenius norm) of $R$ and maximum $\ell_2$ column norm of $A$~\cite{li2015matrix,Nikolov16geometry}.} and maximum $\ell_2$ column norm of $A$.
By choosing an appropriate factorization $\bsW = RA$ (e.g., by using convex optimization~\cite{Linial2009-tn} for final $\ell_{\infty}$ error), one can get near optimal utility 
\cite{nikolov2013geometry,Edmonds2020-qq}. 
The matrix mechanism is unbiased, has statistical transparency and consistency if the noise is in the column space of $A$. Utility control depends on the specific choice of the matrix factorization. For efficient implementation, \cite{McKenna_Miklau_Hay_Machanavajjhala_2023} give an efficient implementation of the matrix mechanism by considering different representations of the workload matrix thereby reducing the space for the optimization problem.


 
 A number of prior work considered range query problems and developed hierarchical methods, which can be considered as special cases of the matrix mechanism~\cite{hay2009boosting,chan2011private,QardajiYL13,XiaoWG11, honaker2015efficient}. We use
\cite{chan2011private} as an example to illustrate the main idea. \cite{chan2011private} considered a specific type of range query that is derived from continuous counting in a data stream. Specifically, consider a set of binary data points $x_i\in \{0, 1\}$ arriving sequentially, the goal is to report the count at any time $t$, i.e., $X_t=\sum_{i=1}^{t}x_i$, for all $t$, with differential privacy guarantee for individual data points. 
They developed a mechanism that renders the error utility of bound $O((1/\eps)\log^{3/2} t)$ for answering the $t$-th counting query with $\eps$-differential privacy. The main idea is to maintain a binary tree on the data stream $x_1, \cdots, x_t$ and add an independent Laplace noise for each internal node in the binary tree. The final query output will be derived by the summation of the (noisy) count of selected vertices in the binary tree. The mechanism can be applied to a general 1D range query problem of reporting $\sum_{k=i}^{j} x_k$, for any 1D range $[i, j]$, with $i\leq j$, with the same error bound. 
A number of other work use similar ideas: \cite{barak2007privacy} uses subsets of Fourier basis, \cite{XiaoWG11} uses wavelet transforms, and \cite{muthukrishnan2012optimal} considers general range queries with constant VC-dimension (e.g., half-space queries). \cite{FichtenbergerHU23} focuses on analysis of constant factors in the error bound using matrix factorization for the continual counting scenario and also provides concrete matrix choices for $R$ and $A$. There are additional work that focuses on space efficiency of implementation in the streaming setting~\cite{dvijotham2024efficient} or efficiently solving for the matrix mechanism for a wide variety of convex objective functions~\cite{xiao23advances}.

\emph{Subspace DP} mechanisms~\citep{gao2022subspace,dharangutte2023integer} impose external consistency on the data product without post-processing. External consistency is expressed through a set of invariants, exact queries from the confidential database, and may be understood as a special utility requirement of zero privacy noise.
Subspace DP mechanisms allow utility control over all invariant queries, unbiasedness, and statistical transparency, but do not generally guarantee internal consistency.

We make a brief comment on the practicality of implementation. Among the mechanisms discussed here and in Section~\ref{subsec:literature-review}, some are mainly of theoretical interest~\cite{Hardt2012-sh,nikolov2013geometry,muthukrishnan2012optimal}. The matrix mechanism has a practical instantiation~\cite{mckenna2018disclosure}, as does~\cite{hay2009boosting} and the subspace DP mechanisms of~\cite{gao2022subspace,dharangutte2023integer}. In addition, the TopDown algorithm has been extensively instantiated, though it is difficult to independently replicate due to its formidable scale.

Finally, we remark that the local differential privacy model (LDP) refers to the scenario when each individual generates local noises and shares the perturbed value, without any need of a single trusted server. Although our noises are applied locally for each data value and the perturbed value is shared publicly, however, the generation of the local noises would require secured communication with either a trusted server (to run the cascade sampling algorithm) or with other peers/individuals (to exchange random seeds that they both use to help to install the negative correlation in their generated noises).

\section{Technical proofs}\label{appendix:proofs}

We first prove Theorem \ref{thm:noise-equal-variance}.  
Given that $\sigma$ merely functions as a multiplicative constant within our privacy mechanism, our attention will now shift towards understanding some fundamental characteristics of $\bsC_k$, as listed below:

\begin{prop}\label{prop: correlation-basic-properties}
For a fixed positive integer $k$, the matrix $\bsC_k$ exhibits the following properties:
    \begin{enumerate}
    \itemsep0em 
         \item The diagonal entries of $\bsC_k$ are all $1$.
        \item Every off-diagonal element of $\bsC_k$ is negative.
        \item \label{property:summation}Summing over every column $j$ for any row $i$, we have $\sum_{j} \bsC_k(i,j) = 2^{-k}$.
        \item $\bsC_k$ is a positive semi-definite (PSD) matrix.
    \end{enumerate}
\end{prop}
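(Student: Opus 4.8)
The plan is to establish all four properties simultaneously by induction on $k$, using the block-recursive definition \eqref{eqn: corr_recursive}; throughout I will write $\bsC_{i+1}=\begin{pmatrix}\bsC_i & -\beta_i J_i\\ -\beta_i J_i & \bsC_i\end{pmatrix}$ with $\beta_i:=2^{-(2i+1)}>0$. The base case $k=1$ is immediate from inspection of $\bsC_1$: diagonal entries $1$; off-diagonal entry $-\tfrac12<0$; row sum $1-\tfrac12=2^{-1}$; eigenvalues $\tfrac12$ and $\tfrac32$, both positive.

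Properties 1--3 are routine in the inductive step, and I would dispatch them first. The diagonal of $\bsC_{i+1}$ sits entirely inside its two diagonal blocks, each a copy of $\bsC_i$, so it is all ones by the inductive hypothesis, giving Property 1. Every off-diagonal entry of $\bsC_{i+1}$ is either an off-diagonal entry of $\bsC_i$ (negative by hypothesis) or an entry of $-\beta_i J_i$, namely $-\beta_i<0$, giving Property 2. A row in the top block of $\bsC_{i+1}$ sums to $2^{-i}-\beta_i 2^i = 2^{-i}-2^{-(i+1)}=2^{-(i+1)}$ (and likewise for the bottom block), giving Property 3. I would also record the equivalent reformulation $\bsC_{i+1}\mathbf 1 = 2^{-(i+1)}\mathbf 1$, i.e.\ that the all-ones vector is an eigenvector of $\bsC_{i+1}$, since this is exactly what the positive-definiteness argument needs.

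For Property 4 the plan is to prove the slightly stronger statement that $\bsC_k$ is positive \emph{definite}. Conjugating by the orthogonal matrix $Q=\tfrac1{\sqrt2}\begin{pmatrix}I&I\\I&-I\end{pmatrix}$ block-diagonalizes $\bsC_{i+1}$ into $\mathrm{diag}\!\left(\bsC_i-\beta_i J_i,\ \bsC_i+\beta_i J_i\right)$, so it suffices to show both summands are positive definite. The block $\bsC_i+\beta_i J_i\succeq\bsC_i\succ 0$ is free, since $J_i=\mathbf 1\mathbf 1^{\top}\succeq 0$. For $\bsC_i-\beta_i J_i$, I would use that $\mathbf 1$ is an eigenvector of $\bsC_i$ with eigenvalue $2^{-i}$ (Property 3), while $J_i$ annihilates $\mathbf 1^{\perp}$: hence $\bsC_i$ and $J_i$ are simultaneously diagonalizable, and $\bsC_i-\beta_i J_i$ acts as multiplication by $2^{-i}-\beta_i 2^i = 2^{-(i+1)}>0$ on $\mathrm{span}(\mathbf 1)$ and as $\bsC_i$ (hence positively) on $\mathbf 1^{\perp}$. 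This yields $\bsC_i-\beta_i J_i\succ 0$, hence $\bsC_{i+1}\succ 0$, closing the induction.

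The only delicate point is Property 4: a priori the subtraction of $\beta_i J_i$ could destroy positive-definiteness, and the argument works precisely because the prescribed coefficient $\beta_i=2^{-(2i+1)}$ leaves the perturbed eigenvalue on $\mathrm{span}(\mathbf 1)$ equal to $2^{-(i+1)}$, which is still strictly positive (it would be exactly $0$ had the coefficient been $2^{-2i}$). As a sanity check and alternative shortcut, Property 4 also follows at once from \Cref{prop:noise-leaf-covariance}, since \Cref{alg:noise-mechanism} outputs a genuine Gaussian vector whose covariance is $\sigma^2\bsC_k$ and covariance matrices are PSD; I would nonetheless keep the direct induction above so that this section remains self-contained.
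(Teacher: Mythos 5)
Your proof is correct, and Properties 1--3 are handled exactly as in the paper (the same one-line induction on the row sums). Where you genuinely diverge is Property 4. The paper disposes of it in one sentence via the Gershgorin circle theorem: by Properties 1--3 each row has diagonal entry $1$ and off-diagonal entries that are negative and sum with the diagonal to $2^{-k}$, so the off-diagonal absolute row sum is $1-2^{-k}<1$ and the matrix is strictly diagonally dominant, hence positive definite. You instead conjugate by the orthogonal matrix $Q=\tfrac{1}{\sqrt2}\bigl(\begin{smallmatrix} I & I\\ I & -I\end{smallmatrix}\bigr)$ to reduce to the two blocks $\bsC_i\pm\beta_i J_i$, and handle the dangerous block $\bsC_i-\beta_i J_i$ by noting that $\bsC_i$ and $J_i$ commute (since $\mathbf 1$ is an eigenvector of $\bsC_i$ by Property 3), splitting $\bR^{2^i}$ into $\mathrm{span}(\mathbf 1)$ and $\mathbf 1^{\perp}$. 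Both arguments are valid and both in fact yield positive \emph{definiteness}. The trade-off: the paper's Gershgorin argument is essentially free once Properties 1--3 are in hand, whereas yours requires the extra commutation/block-diagonalization machinery but delivers more --- it exhibits the eigenstructure explicitly (in particular the eigenvalue $2^{-(i+1)}$ on $\mathrm{span}(\mathbf 1)$ and the fact that the recursive coefficient $2^{-(2i+1)}$ is exactly tuned to keep it positive), which is a genuinely illuminating by-product. Your closing observation that PSD also follows from Proposition~\ref{prop:noise-leaf-covariance} is correct but, as you note, best kept as a remark since that proposition appears later and one should avoid even the appearance of circularity.
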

\begin{proof}[Proof of Proporition \ref{prop: correlation-basic-properties}]\label{prof:correlation-basic-properties}
    The initial two properties can be readily validated based on the matrix's definition. For the third property, it is evidently correct when $k = 1$. Assuming its correctness for $k = l$, when $k= l+1$, we can deduce from \eqref{eqn: corr_recursive} that
    \begin{align*}
            \sum_j \bsC_{l+1}(i,j) &= \sum_j \bsC_l(i,j) + (-2^{-2l-1}) \times 2^l  \\
            & = 2^{-l} - 2^{-l-1} = 2^{-l-1} = 2^{-(l+1)},
    \end{align*}
as desired. The final property can be directly applying the Gershgorin circle theorem based on the first three properties.
\end{proof}

\begin{proof}[Proof of Theorem \ref{thm:noise-equal-variance}]\label{prof:noise-equal-variance}
   Assuming, without a loss of generality, that $\sigma = 1$, we begin our proof by considering the case where \( k = 1 \). This essentially requires us to demonstrate that \( X_\varnothing := X_0 + X_1 \) has a unit variance. We have:
   \begin{align*}
       \var[X_\varnothing] = (1,1) \cdot \bsC_1 \cdot (1,1)^\top = 1 + 1 - 0.5 - 0.5 = 1,
   \end{align*}
   as desired. Now, let's assume the aforementioned result holds true for \( k = l \). For \( k = l+1 \), the formula \eqref{eqn: corr_recursive}, combined with our inductive assumption, confirms that every node in both the left and right subtrees maintains the same distribution \( \bN(0,1) \). Consequently, our task narrows down to computing the variance for the root node \( X_\varnothing \), which is given by:
    \begin{align*}
       \var[X_\varnothing] &= (1,1, \ldots, 1) \cdot \bsC_{l+1} \cdot (1,1,\ldots, 1)^\top \\
       & = \sum_{i,j} \bsC_{l+1}(i,j) \\
       & = \sum_{i} \sum_{j} \bsC_{l+1}(i,j)\\
       & = \sum_{i} 2^{-(l+1)} = 1 \qquad \text{by the Property \ref{property:summation} in Proposition \ref{prop: correlation-basic-properties}.}
   \end{align*}
   This concludes our proof. 
\end{proof}

\begin{proof}[Proof of Proposition \ref{prop:noise-leaf-covariance}]
   Without loss of generality, we assume $\sigma = 1$. We will prove by induction. When $k = 1$, it suffices to show the  $\var(X_0) = \var(X_1)  = 1$ and $\cov(X_0, X_1) = -0.5$. For the former, observe that 
   \begin{align*}
        \var(X_0) &= \var\left(\frac{1}{2} X_{\varnothing} + \frac{\sqrt{3}}{2} Y_\varnothing\right)\\
             & = \left(\frac{1}{2}\right)^2 \var(X_\varnothing) + \left(\frac{\sqrt 3}{2}\right)^2 \var(Y_\varnothing) =  \frac{1}{4} + \frac 34  = 1.
   \end{align*}
  Similarly $\var(Y_0) = 1$. For the covariance, we have 
  \begin{align*}
       \cov(X_0, X_1) &= \cov\left(\frac{1}{2} X_{\varnothing} + \frac{\sqrt{3}}{2} Y_\varnothing, \frac{1}{2} X_{\varnothing} - \frac{\sqrt{3}}{2} Y_\varnothing\right)\\
             & = \left(\frac{1}{2}\right)^2 \var(X_\varnothing) - \left(\frac{\sqrt 3}{2}\right)^2 \var(Y_\varnothing) =  \frac{1}{4} - \frac 34  = -0.5.
   \end{align*}
This concludes our proof for the $k = 1$ case. Assuming the result holds for \( k = l \), we note that for \( k= l+1 \), the nodes in the left subtree (labelled \( X_{0\star} \)) and the nodes in the right subtree (labelled \( X_{1\star} \)) each constitute a complete binary tree of depth \( l \). Furthermore, the recursive nature of Algorithm \ref{alg:noise-mechanism} indicates that the following three sets of random variables are identically distributed:

\begin{align}\label{eqn:correlation-identity}
   \{X_{0\star}\}_{\star \in \bigcup_{0 \leq s \leq l} \{0,1\}^s} \stackrel{d}{=} \{X_{1\star}\}_{\star \in \bigcup_{0 \leq s \leq l} \{0,1\}^s} \stackrel{d}{=} \{\tilde{X}_{\star}\}_{\star \in \bigcup_{0 \leq s \leq l} \{0,1\}^s}. 
\end{align}

In this context, \( \{X_{0\star}\} \) and \( \{X_{1\star}\} \) represent the left and right subtrees generated by Algorithm \ref{alg:noise-mechanism} when \( k = l+1 \), while \( \{\tilde{X}_{\star}\} \) denotes an independent execution of Algorithm \ref{alg:noise-mechanism} with \( k = l \). Employing equation \eqref{eqn:correlation-identity} and the inductive hypothesis for \( k = l \) on the set \( \{\tilde{X}_{\star}\}_{\star \in \bigcup_{0 \leq s \leq l} \{0,1\}^s} \), we know that the covariance matrices restricted to the left and right subtrees are both equivalent to \( \bsC_l \). Hence, the two \( 2^l \times 2^l \) diagonal blocks of \( \bsC_{l+1} \) are identical to \( \bsC_l \).

To confirm that the entire covariance matrix is equal to \(\bsC_{l+1} \) as specified in \eqref{eqn: corr_recursive}, we must demonstrate that any leaf in the left subtree is correlated with any leaf in the right subtree by \( -2^{-2l-1} \). Take any pair of indices \( I, J \in \{0,1\}^l \), and denote \( \tilde I = (0, I_1, \ldots, I_{l-1}) \in \{0,1\}^l \) and \( \tilde J = (1, J_1, \ldots, J_{l-1}) \in \{0,1\}^l \). Algorithm \ref{alg:noise-mechanism} informs us that the noise at leaf \( X_{0I} \) is expressed as:
\[
X_{0I} = \frac{1}{2} X_{\tilde I} + \frac{(-1)^{I_k}\sqrt{3}}{2} Y_{\tilde I},
\]

and likewise for leaf \( X_{1J} \):
\[
X_{1J} = \frac{1}{2} X_{\tilde J} + \frac{(-1)^{J_k}\sqrt{3}}{2} Y_{\tilde J}.
\]

Consequently:
\begin{align}\label{eqn:corr-leaf-different-side}
    \cov(X_{0I}, X_{1J}) = \left(\frac{1}{2}\right)^2 \cov(X_{\tilde I}, X_{\tilde J}),
\end{align}

since \( Y_{\tilde I} \) is by construction independent of \( X_{\tilde J} \) and  \( Y_{\tilde I} \), and \( Y_{\tilde J} \) are independent of \( X_{\tilde I} \). Given that \( X_{\tilde I} \) and \( X_{\tilde J} \) are leaf nodes on opposing sides of a binary tree of depth \( l \), their covariance is \( -2^{-2k + 1} \) by the inductive hypothesis. Inserting this into equation \eqref{eqn:corr-leaf-different-side}, we obtain:
\begin{align}\label{eqn:final-covariance-leaf-different-sides}
    \cov(X_{0I}, X_{1J}) = 2^{-2} \times -2^{-2k + 1} = -2^{-2k - 1},
\end{align}

which completes the proof.
\end{proof}

\begin{proof}[Proof of Theorem \ref{thm: privacy gaussian}]
To prove $(\eps,\delta)$-DP, we prove that with probability at least $1-\delta$ the likelihood ratio between $\cM_{\sigma}(X)$ and  $\cM_{\sigma}(X')$ is upper bounded by $\exp(\epsilon)$, where $X$ and $X'$ are any two datasets that differ by at most one at only one entry. 
For now, we assume $X = X' + e_i$, where $e_i\in \mathcal{X}^n$ takes value $1$ at the $i$-th coordinate, and $0$ elsewhere. Let $s \in \mathbb R^n$ be any real vector, the logarithm of the likelihood ratio between $\cM_{\sigma}(X)$ and  $\cM_{\sigma}(X')$ at point $X + s$ equals
\begin{align*}
    \log \Bigg(\frac{f_{\cM_{\sigma}(X)} (X + s)}{f_{\cM_{\sigma}(X')} (X + s)}\Bigg) &
    = \log \Bigg( \frac{\exp{(-\frac{1}{2\sigma^2} s^\top \bsC^{-1} s)}}{\exp{(-\frac{1}{2\sigma^2} (s+e_i)^\top \bsC^{-1} (s+e_i))}} \Bigg) \\
    &= -\frac{1}{2\sigma^2} \Big( s^\top \bsC^{-1} s - (s + e_i)^\top \bsC^{-1} (s + e_i) \Big) \\
    &= \frac{1}{2\sigma^2} (\bsC^{-1})(i,i) + \frac{1}{\sigma^2} s^\top \bsC^{-1}e_i,
\end{align*}

In our noise mechanism, $s\sim \bN(\mathbf{0}, \sigma^2 \bsC)$. Therefore the above log-likelihood ratio is a one-dimensional Gaussian with mean $$\frac{1}{2\sigma^2} (\bsC^{-1})(i,i),$$
and variance $$\frac{1}{\sigma^2} e_i^\top \bsC^{-1} \bsC\bsC^{-1} e_i = \frac{1}{\sigma^2} \bsC^{-1}(i,i).$$

Let $Y \sim \bN(0,1)$ be a standard normal random variable, set $w =  \bsC_{K}^{-1}(i,i)/\sigma^2$ for notational simplicity. It is then clear that $\sqrt{w} Y + w/2$ has the same distribution as  $\log \left(\frac{f_{\cM_{\sigma}(X)} (X + s)}{f_{\cM_{\sigma}(X')} (X + s)}\right)$. Since $ \sqrt{w} Y + w/2 \leq \epsilon$ is equivalent to $ Y  \leq  \epsilon/\sqrt{w} - \sqrt{w}/2$, it suffices to find conditions on $w$ such that $\bP[\lvert Y\rvert \geq \epsilon/\sqrt{w}  -\sqrt{w}/2] \leq \delta$.

Applying the standard subgaussian tail bound on $Y$, we have
\[
\bP[Y > x] \leq 2 \exp(-x^2/2)
\]
for every $x > 0$. Therefore it suffices to find $w$ such that $\epsilon/\sqrt{w} - \sqrt{w}/2 > 0$, and 
\[
\left(\frac{\epsilon}{\sqrt{w}} - \frac{\sqrt w}{2}\right)^2 \geq \log\left(\frac{2}{\delta}\right).
\]

Now we set $0 < w \leq \epsilon^2/(2\log(2/\delta))$, and check every $w$ in this range satisfies the above two conditions. Firstly, 
$$w \leq \epsilon^2/(2\log(2/\delta)) \leq \epsilon^2/2\log(4) \leq 2\epsilon,$$
therefore $\epsilon/\sqrt{w} - \sqrt{w}/2 > 0$. Secondly, 
\[
\frac{\epsilon^2}{w} \geq 2\log\left(\frac{2}{\delta}\right) \geq \log(2/\delta) + \log(4) > \log(2/\delta) + \epsilon.
\]
Therefore 
\[
\left(\frac{\epsilon}{\sqrt{w}} - \frac{\sqrt w}{2}\right)^2  = \frac{\epsilon^2}{w} -\epsilon + \left(\frac{\sqrt w}{2}\right)^2 \geq  \frac{\epsilon^2}{w}  - \epsilon \geq \log(2/\delta),
\]
as desired. This immediately translates to the noise bound 
\begin{align}\label{eqn:noise-bound-entry-i}
\sigma^2 \geq \frac{2  (\bsC^{-1})(i,i) \log(2/\delta)}{\epsilon^2}.
\end{align}
The above calculation still goes through when $X = X' - e_i$. Since the preceding argument must be valid for any \( i \), taking the maximum of \( i \) from Equation \eqref{eqn:noise-bound-entry-i} shows our mechanism $\cM_{\sigma}(.)$ is $(\eps,\delta)$-DP for
\[\sigma^2 \geq \frac{2 \lVert \mathsf{diag}(\bsC^{-1}) \rVert_\infty \log(2/\delta)}{\epsilon^2}\] 
\end{proof}

\begin{proof}[Proof of Theorem \ref{thm:privacy}]
   Based on Corollary \ref{cor:inverse diagonal}, it becomes evident that the norm $\lVert \mathsf{diag}(\bsC_k^{-1}) \rVert_\infty$ equals $1 + k/3 = 1+\log_2(n)/3$. Applying this into Theorem \ref{thm: privacy gaussian} leads to  our result.
\end{proof}

\begin{lemma}\label{lem: inverse}
For any positive integer $j$, $\bsC_j^{-1}$ satisfies
$\bsC_1^{-1} :=  \bigl( \begin{smallmatrix}4/3 & 2/3 \\ 2/3 & 4/3\end{smallmatrix}\bigr)$ and
\[
 \bsC_{j}^{-1} =
\begin{pmatrix}[1.5]
    \bsC_{j-1}^{-1} + \frac{1}{3}J_{j-1} & \frac{2}{3} J_{j-1} \\
    \frac{2}{3} J_{j-1} & \bsC_{j-1}^{-1} + \frac{1}{3}J_{j-1},
\end{pmatrix},
\]
 where $J_{j-1}$ is the all one matrix of size $2^{j-1} \times 2^{j-1}$.
\end{lemma}

\begin{proof}[Proof of Lemma \ref{lem: inverse}]
    We prove a slightly stronger result. We show for every $k\geq 1$, the followings all hold:
    \begin{enumerate}
        \item $J_k \bsC_k = \bsC_k J_k = 2^{-k} J_k$
        \item $J_k J_k = 2^k J_k$
        \item $J_k \bsC_{k}^{-1} = 2^k J_k$
        \item We have 
        \begin{align*}
    \bsC_1^{-1} =  \begin{pmatrix}[1.5]
4/3 & 2/3 \\
2/3 & 4/3
\end{pmatrix}
\end{align*}
and 
\begin{align*}
    \bsC_{k}^{-1} =
\begin{pmatrix}[1.5]
   \bsC_{k-1}^{-1} + \frac{1}{3}J_{k-1} & \frac{2}{3} J_{k-1} \\
    \frac{2}{3} J_{k-1} & \bsC_{k-1}^{-1} + \frac{1}{3}J_{k-1} 
\end{pmatrix}.
\end{align*}
     \end{enumerate}
We prove this by induction. All the four claims can be directly checked when $k = 1$. Assuming they are all satisfied when $k \leq K$, for $k = K+1$, facts $1$ and $2$ can still be directly checked. To show fact $4$, it suffices to show 
\begin{align*}
\begin{pmatrix}[1.5]
    A & B \\
    C & D 
\end{pmatrix}  := &
\begin{pmatrix}[1.5]
\bsC_{K} & -2^{-2K - 1}J_K \\
-2^{-2K - 1}J_K & \bsC_{K}
\end{pmatrix} \\ 
& \cdot 
    \begin{pmatrix}[1.5]
    \bsC_{K}^{-1} + \frac{1}{3}J_{K}, & \frac{2}{3} J_{K} \\
    \frac{2}{3} J_{K}, & \bsC_{K}^{-1} + \frac{1}{3}J_{K} 
\end{pmatrix} = I_{2^{K+1}}
\end{align*}

We directly check this by multiplying the two $2\times 2$ block matrices on the left hand side. The diagonal  blocks are
\begin{align*}
A = D & = I + \frac{1}{3} \bsC_K J_K   - \frac{2}{3}\times 2^{-2K-1} J_K J_K \\
& = I + \frac{2^{-K}}{3} J_K - \frac{2^{-K}}{3} J_K = I
\end{align*}
where the second to last  equality follows from facts $1$ and $2$ of our induction hypothesis.

The off-diagonal blocks are:
\begin{align*}
   B = D  &= \frac{2}{3} \bsC_K J_K -2^{-2K-1} J_K \bsC_K^{-1} - \frac{2^{-2K-1}}{3} J_K J_K \\
    & = \frac{4 \times 2^{-K-1}}{3} J_K -2^{-K-1} J_k - \frac{2^{-K-1}}{3} J_k = 0
\end{align*}
This concludes our induction for fact $4$. Fact $3$ is then immediate using fact $4$ together with facts $1,2$ from the induction hypothesis. 
\end{proof}

Corollary~\ref{cor:inverse diagonal} below can be derived directly from Lemma~\ref{lem: inverse} by induction.
\begin{corollary}\label{cor:inverse diagonal}
For any positive integer $k$, the diagonal entries of $\bsC_j^{-1}$ all equal to $1 + j/3$. 
\end{corollary}

\begin{proof}[Proof of Proposition \ref{prop:relationship-errors}]
The first inequality directly follows from the fact that 
\[
\lVert \bsv \rVert_2^2 = \sum_{i=1}^m v_i^2 \leq m \lVert \bsv \rVert_\infty^2 
\]
for any $m$-dimensional vector $\bsv$. 

The second inequality directly follows from the fact that
\[
\lVert \bE[ V] \rVert_\infty \leq \bE[\lVert  V \rVert_\infty ] 
\]
for any random vector $V$. 
\end{proof}

\begin{proof}[Proof of Lemma \ref{lem:maximum-consecutive-variance}, upper bound]
    To prove the upper bound, our main idea is to argue that any fixed summation of the form $\sum_{L = I}^{I+k} X_L$ can be rewritten as a summation $\sum_{l= 1}^T Y_l$, which satisfies the following properties: 
\begin{itemize}
    \item Each $Y_l$ has variance $\sigma^2$
    \item Each pair of $Y_i, Y_j$ has a non-positive correlation
    \item The term of summations $T = O(K)$. 
\end{itemize}
Assuming all the three properties hold, putting these altogether shows
\[
\var[\sum_{L = I}^{I+k} X_L] = \var[\sum_{l= 1}^T Y_l] \leq \sum_{l= 1}^T \var[Y_l] = \sigma^2 T \leq CK \sigma^2. 
\]

The way we construct $\sum_{l= 1}^T \var[Y_l]$ is by defining a ``merge'' operation. Given a continuous summation  $\sum_{L = I}^{I+k} X_L$, we will merge the summations using the equation $X_\star = X_{\star 0} + X_{\star 1}$ for any $\star \in \{0,1\}^k$ for $0\leq k \leq K-1$ as much as possible. For example, summing over all leaves in the left subtree, i.e, $\sum_{L= 000\ldots 0}^{0111\ldots 1} X_L$ can be merged to $X_{0}$. We can merge these summations until no further merging can happen, then we have $\sum_{L = I}^{I+k} X_L = \sum_{l=1}^{T} Y_l$, where each $Y_l$ is the random variable associated with a (not-necessarily leaf) node in the original tree.
It is proven in Theorem \ref{thm:noise-equal-variance} that each $Y_l$ has variance $\sigma^2$, which confirms Property 1 above. The rest two properties are justified by the two lemmas below.
\end{proof}

\begin{lemma}\label{lem: neg-corr}
    Let $S_1, S_2$ be two non-empty subsets of $\{0,1\}^K$ and $S_1 \cap S_2 = \varnothing$. Let $Z_1 := \sum_{i\in S_1} X_i$ and $Z_2 = \sum_{j \in S_2} X_j$, then $\cov(Z_1, Z_2)\leq 0$. Therefore $\cov(Y_l, Y_{l'}) \leq 0$ for any $l\neq l'$. 
\end{lemma}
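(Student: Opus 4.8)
The plan is to reduce the covariance bound to a statement about entrywise sums of $\bsC_K$ and then induct on $K$ using the block-recursive form of $\bsC_K$ in Definition~\ref{def:correlated-noise-correlation}. Since $(X_I)_{I\in\{0,1\}^K}\sim\bN(\mathbf 0,\sigma^2\bsC_K)$ and $Z_1,Z_2$ are linear images of this Gaussian vector, we have $\cov(Z_1,Z_2)=\sigma^2\sum_{i\in S_1}\sum_{j\in S_2}\bsC_K(i,j)$, so it suffices to prove $\sum_{i\in S_1}\sum_{j\in S_2}\bsC_K(i,j)\le 0$ for every pair of disjoint nonempty $S_1,S_2\subseteq\{0,1\}^K$. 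The base case $K=1$ is immediate: the only disjoint nonempty pair (up to swapping) is $\{0\},\{1\}$, and $\bsC_1(0,1)=-\tfrac12<0$.

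For the inductive step, identify the leaves whose label starts with $0$ (resp.\ $1$) with $\{0,1\}^{K-1}$ by deleting the leading bit, and split $S_a=S_a^{0}\sqcup S_a^{1}$ into its parts in the left and right halves ($a=1,2$). By Definition~\ref{def:correlated-noise-correlation}, each of the two diagonal blocks of $\bsC_K$ equals $\bsC_{K-1}$, while every entry of the two off-diagonal blocks equals $-2^{-(2K-1)}<0$. Hence $\sum_{i\in S_1}\sum_{j\in S_2}\bsC_K(i,j)$ decomposes into four pieces: a left--left piece $\sum_{i\in S_1^{0}}\sum_{j\in S_2^{0}}\bsC_{K-1}(i,j)$ and an analogous right--right piece, each of which is $\le 0$ by the inductive hypothesis (the sets remain disjoint, being subsets of $S_1,S_2$) when both relevant halves are nonempty and is $0$ otherwise; and two cross pieces equal to $-2^{-(2K-1)}\,|S_1^{0}|\,|S_2^{1}|$ and $-2^{-(2K-1)}\,|S_1^{1}|\,|S_2^{0}|$, which are $\le 0$ outright. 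A sum of four nonpositive quantities is nonpositive, which closes the induction and proves the first assertion.

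For the consequence about the $Y_l$'s appearing in the proof of Lemma~\ref{lem:maximum-consecutive-variance}: there $\sum_{L=I}^{I+j}X_L=\sum_l Y_l$ where each $Y_l$ is the noise $X_{v_l}$ at a tree node $v_l$, and iterating $X_\star=X_{\star 0}+X_{\star 1}$ gives $Y_l=\sum_{i\in L_l}X_i$ with $L_l\subseteq\{0,1\}^K$ the set of leaves under $v_l$; since these summations partition the contiguous range $\{I,\dots,I+j\}$, the sets $L_l$ are pairwise disjoint. Applying the first assertion with $S_1=L_l$ and $S_2=L_{l'}$ for $l\neq l'$ yields $\cov(Y_l,Y_{l'})\le 0$. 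The argument is essentially mechanical once the block decomposition is in place; the only point requiring care is the bookkeeping of the possibly-empty halves $S_a^{0},S_a^{1}$, so that the inductive hypothesis — which is stated for nonempty sets — is invoked only when both relevant halves are nonempty, with all empty pieces contributing $0$.
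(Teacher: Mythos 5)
Your proof is correct and follows essentially the same route as the paper: reduce $\cov(Z_1,Z_2)$ to $\sigma^2\sum_{i\in S_1,\,j\in S_2}\bsC_K(i,j)$, note that disjointness of $S_1,S_2$ excludes all diagonal terms, and conclude from the sign of the off-diagonal entries. The only difference is that you establish nonpositivity of the partial sum by a block induction, whereas the paper simply invokes the already-recorded fact (Proposition~\ref{prop: correlation-basic-properties}) that every off-diagonal entry of $\bsC_K$ is negative, which makes the induction unnecessary.
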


\begin{proof}[Proof of Lemme \ref{lem: neg-corr}]
    We observe $\cov(Z_1, Z_2) = \sum_{i\in S_1, j\in S_2} \cov(X_i, X_j)$. Since the covariance matrix $\sigma^2\bsC_K$ in \eqref{def:correlated-noise-correlation} has only positive entries on the diagonal, but the above summation does not include the diagonal, we conclude $\cov(Z_1, Z_2)\leq 0$. 
\end{proof}

\begin{lemma}\label{lem:merging-terms}
    Any continuous summation $\sum_{L = I}^{I+k} X_L$ can be merged as a summation $\sum_{l=1}^{T} Y_l$ with $T \leq 2K-2$ when $K\geq 2$.
\end{lemma}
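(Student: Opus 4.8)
The key realization is that the fully-reduced sum $\sum_{l=1}^{T}Y_l$ appearing in the proof of Lemma~\ref{lem:maximum-consecutive-variance} is exactly the \emph{canonical segment-tree decomposition} of the contiguous leaf interval $[I,I+k]$: repeatedly applying the sibling-merge rule $X_{\star 0}+X_{\star 1}=X_\star$ until no rule applies is a confluent, terminating rewriting whose normal form is the classical decomposition of an interval into $O(\log n)$ canonical nodes. So the plan is to bound the number of nodes in this decomposition by $2K-2$, being careful with the constant rather than settling for a loose $O(\log n)$.

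First I would reduce to two one-sided subproblems. Let $w$ be the least common ancestor of the leaves $I$ and $I+k$, at some depth $m$. If $m=K$ the interval is a single leaf and $T=1\leq 2K-2$. Otherwise leaf $I$ lies in the left subtree of $w$ (a complete binary tree of depth $h:=K-m-1$) and leaf $I+k$ in the right subtree (also of depth $h$). Splitting the interval at $w$, the reduced terms inside the left subtree form the canonical decomposition of a \emph{suffix} interval of a depth-$h$ tree (all leaves from $I$ to the rightmost leaf of that subtree), and those inside the right subtree form the canonical decomposition of a \emph{prefix} interval of a depth-$h$ tree; any remaining merge must pair a left-subtree node with a right-subtree node and can only decrease the count. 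Hence $T$ is at most the sum of the sizes of these two one-sided decompositions.

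The crux is the one-sided estimate: for every $h\geq 1$, the canonical decomposition of a suffix (or prefix) interval of a complete binary tree of depth $h$ uses at most $h$ nodes, where a single leaf is counted as one node. I would prove this by induction on $h$, splitting on which child subtree contains the boundary leaf: if it lies in the ``far'' child one recurses there only (size $\leq h-1\leq h$); if it lies in the ``near'' child one recurses into a suffix/prefix of that depth-$(h-1)$ subtree (size $\leq h-1$ by induction) and adjoins the far child as a single canonical block, for a total of $\leq h$; the edge case where the boundary leaf is the extreme leaf of the whole subtree collapses the decomposition to the single root. Combining with the previous paragraph: if $h\geq 1$ then $T\leq 2h=2(K-m-1)\leq 2K-2$ since $m\geq 0$; if $h=0$ then $m=K-1$, each side contributes one leaf, so $T\leq 2\leq 2K-2$ because $K\geq 2$.

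The main obstacle, as usual for statements of this shape, is nailing the exact constant. I expect to need to (i) verify carefully that ``merge as far as possible'' produces precisely the canonical segment-tree decomposition, so that the two one-sided pieces are genuine independent canonical decompositions and no cross-merge is double-counted, and (ii) pin down the base case $h=1$ of the one-sided lemma, where a suffix or prefix decomposition has size $1$ and not $2$ --- this is exactly what improves a naive bound of $2K$ to $2K-2$. As a sanity check I would confirm that the interval $[1,2^{K}-2]$ attains $2K-2$, so the bound cannot be improved.
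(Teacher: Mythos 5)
Your proof is correct and follows essentially the same route as the paper: a one-sided lemma, proved by induction on depth, showing that the merged decomposition of a prefix or suffix interval of a depth-$h$ complete binary tree has at most $h$ terms, followed by splitting a general interval at the node whose two subtrees separate the endpoints and adding the two one-sided bounds to get $2K-2$. One small slip in your closing sanity check: the interval $[1,2^K-2]$ merges to far fewer terms (for $K=3$ it is just $X_0+X_{10}$, i.e.\ two terms); the extremal interval attaining $2K-2$ is $[2,\,2^K-1]$, which excludes exactly the first and last leaves.
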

\begin{proof}[Proof of Lemma \ref{lem:merging-terms}]
    We first prove any continuous summation of the form $\sum_{L = 000,\ldots, 0}^{T} X_L$ can be merged into at most $K$ summations. The case where $K = 1$ is straightforward. Suppose $K = m-1$ is proven, when $K = m$, we pick $T$ such that the summation $\sum_{L = 000\ldots 0}^{T} X_L$ has the largest number of summing terms after merging. We may assume without loss of generality that $L \geq 100,\ldots,0$, as otherwise the summation only happens on the left sub-tree which reduces to our inductive hypothesis. Now we can already split the first left-tree in the summation, and write $\sum_{L = 000\ldots 0}^{T} X_L = X_{0} + \sum_{L = 100\ldots 0}^{T} X_L$. The second term has length at most $m-1$ by inductive hypothesis, as the summation happens at the leftmost node of the right subtree. Therefore the total length is at most $m$, as claimed. Moreover, the symmetries of the complete binary tree shows any summation of the form $\sum_{L = T'}^{111,\ldots,1} X_{L}$ can also be merged into at most $K$ summations.

Now we prove Lemma \ref{lem:merging-terms}. The $K = 2$ case is  simple. We assume the claim  holds when $K \leq m-1$. When $K = m$, we pick the consecutive summation which has the largest number of terms after merging.  Again, we may  assume the summation includes both nodes on the left and right subtree, as otherwise it reduces to the $m-1$ case.
Now we write $\sum_{L = I}^{I+k} X_L = W_1 + W_2$, where $W_1$ is the summation on the left part ending at the rightmost node of the left subtree, and $W_2$ on the right part starting at the leftmost node at the right subtree. By our previous induction, both $W_1$ and $W_2$ can be merged into a summation of at most $m-1$ terms. Therefore the whole summation has at most $2m-2$ terms.
\end{proof}

These results together give us the upper bound. Now we turn to the lower bound, we will inductively construct a sequence of continuous range summations for each $K$, and argue the variance scales as $\Theta(K)$.

We first assume $K$ is an odd number, then we consider the consecutive sum by picking the first $M_K := 2^{K-1} + 2^{K-3} + 2^{K-5} + ... + 1$ entries. Intuitively, we are picking all the left subtree, and the left subtree of the right subtree, and the left subtree of the right subtree of the right subtree, and so on. 
\begin{proof} [Proof of Lemma \ref{lem:maximum-consecutive-variance}, lower bound]
    For any odd number $K$, Let $$V_K := \var[ \sum_{k = 000\ldots 0}^{M_K} X_k],$$  
    where $M_K := 2^{K-1} + 2^{K-3} + 2^{K-5} + ... + 1$. Then we claim $V_K - V_{K-2} \geq \frac 23 \sigma^2$. Given the claim,  we have $V_K/\sigma^2 = \Omega(K)$ for odd $K$. Since the maximum variance is a non-decreasing function of $K$, we know 
    \[\max_{I \in \{0,1\}^K, k\leq 2^K - I}\frac{\var[\sum_{L = I}^{I+k} X_L]}{\sigma^2} = \Omega(K).\]

To show the claim, we first merge the summation of the these $M_K$ entries, and write
 \begin{align*}
    \sum_{k = 000\ldots 0}^{M_K} X_k = \sum_{l = 1}^{(K+1)/2} Y_l,
 \end{align*}
 where $Y_1$ is the left subtree, $Y_2$ is the next summation of $2^{K-3}$ entries, and so on. Next we define $W_K := \sum_{l=2}^{(K+1)/2} Y_l$. By the recursive structure of our covariance matrix, $W_K$ has variance $V_{K-2}$. Therefore

 \begin{align*}
      V_n = \var\left[ \sum_{k = 000,\ldots, 0}^{M_K} X_k\right] = \var [Y_1 + W_K] = \sigma^2 + V_{K-2} +2 \cov[Y_1, W_K].
 \end{align*}
The covariance term can  be calculated through \eqref{def:correlated-noise-correlation}, which is $- 2^{-2(K-1) - 1} \sigma^2 |Y_1| |W_K|$. Therefore we can further bound $V_K$ as
\begin{align*}
      V_K &= \var\left[ \sum_{k = 000\ldots 0}^{M_K} X_k\right] = \var [Y_1 + W_K] = \sigma^2 + V_{K-2} +2 \cov[Y_1, W_K] \\
      & = \sigma^2 + V_{K-2} - \sigma^2 2^{-2(K-1)} 2^{K-1} M_{K-2} \\
      &  = \sigma^2 + V_{K-2} - \sigma^2 2^{-(K-1)} (2^{K-3} + 2^{K-5} + \ldots + 1) \\
      & \geq  \sigma^2 + V_{K-2} -\sigma^2 \frac 43 2^{-(K-1)} 2^{K-3} = V_{K-2} +\frac 23 \sigma^2,
\end{align*}
as desired. The last inequality is because 
\begin{align*}
    2^{K-3} + 2^{K-5} + \ldots + 1 &= 2^{K-3}(1 + 1/4 + \ldots  + 2^{-(K-3)})\\
    &\leq 2^{K-3}(1 + 1/4 + \ldots  + 2^{-(K-3)} + \ldots)\\
    & =\frac 43  2^{K-3}.
\end{align*}
 
\end{proof}

\begin{proof}[Proof of Theorem \ref{thm:utility-continuous}]
    Recall the classical fact that $\bE[\lvert X\rvert] = \sigma \sqrt{2/\pi}$ for $X\sim \bN(0,\sigma^2)$, we immediately have:
    \begin{align*}
        \err_{\bsW}^{\infty}(\calW_\sigma) &= \big\Vert\bE_{\bss\sim \bN(\mathbf{0}, \sigma^2 \bsC_K)} [  \lvert \bsW \bss \rvert ]\big\Vert_\infty \\
        & = \sqrt{\frac 2 \pi} \max_{I \in \{0,1\}^K, k\leq 2^K - I}\sqrt{\var[\sum_{L = I}^{I+k} X_L]}\\
        &= \Theta(\sigma \sqrt K)\\
        &= \Theta\left(\sigma \sqrt{\log_2(n)}\right),
    \end{align*}
    which proves the first claim. The second claim follows immediately from the first claim and Proposition \ref{prop:relationship-errors}. 

    For the last claim, recall the Gaussian tail bound $\bP(\lvert X \rvert > t) \leq 2 \exp(-t^2/2\sigma^2)$ given $X \sim \bN(0,\sigma^2)$. We can bound the tail probability of the worst-case error as:
\begin{align*}
    \bP_{\bss\sim \bN(\mathbf{0}, \sigma^2 \bsC_K)} [ \Vert \bsW \bss  \Vert_\infty > t] &= \bP[ \lvert \sum_{L = I}^{I+k} X_L \rvert > t \text{~for some ~} I \in \{0,1\}^K,k] \\
    & \leq \sum_{I,k} \bP[\lvert \sum_{L = I}^{I+k} X_L \rvert > t]\\ 
    & \leq  2\sum_{I,k} \exp\left(\frac{-t^2}{2 \var(\sum_{L = I}^{I+k} X_L)}\right)\\
    & \leq 2\sum_{I,k} \exp\left(\frac{-t^2}{2 \max\var(\sum_{L = I}^{I+k} X_L)}\right) \\
    & \leq 2 \binom{n}{2} \exp\left(\frac{-t^2}{2 c \sigma^2 K}\right)\\
    & \leq n^2 \exp\left(\frac{-t^2}{2 c \sigma^2 K}\right).
\end{align*}
Here the  first inequalty follows from the union bound, and the summation is over all the continuous range queries (and therefore $\binom{n}{2}$ terms). The second inequality uses the Gaussian tail bound. The last few inequalities use Lemma \ref{lem:maximum-consecutive-variance}.

   Now set $t_m = m \sqrt{4\log(n) c\sigma^2 K}$ for every positive integer $m$, we know 
  \begin{align*}
    \bP_{\bss\sim \bN(\mathbf{0}, \sigma^2 \bsC_K)} [ \Vert \bsW \bss  \Vert_\infty > t_m] &\leq n^2 \exp\left(\frac{-t_m^2}{2c\sigma^2K}\right)\\
    & = n^2 \exp\left(\frac{-4m^2\log(n) c\sigma^2 K}{2c\sigma^2K}\right)\\
    & = n^2 \exp\left(-2m^2 \log(n) \right)\\
    & = \frac{1}{n^{2m^2-2}}.
\end{align*}
Therefore we can bound 
\begin{align*}
    \err_{\bsW,\infty}(\calW_\sigma) &= \bE_{\bss\sim \bN(\mathbf{0}, \sigma^2 \bsC_K)} [ \Vert \bsW \bss  \Vert_\infty] \\
    & = \int_0^\infty \bP_{\bss\sim \bN(\mathbf{0}, \sigma^2 \bsC_K)} [ \Vert \bsW \bss  \Vert_\infty > t] \diff t\\
    & \leq t_1 + \sum_{m=1}^\infty \int_{t_m}^{t_{m+1}} \bP_{\bss\sim \bN(\mathbf{0}, \sigma^2 \bsC_K)} [ \Vert \bsW \bss  \Vert_\infty > t]  \diff t \\
    & \leq \sqrt{4\log(n) c\sigma^2 K} + \sqrt{4\log(n) c\sigma^2 K} \sum_{m=1}^\infty \bP_{\bss\sim \bN(\mathbf{0}, \sigma^2 \bsC_K)} [ \Vert \bsW \bss  \Vert_\infty > t_m] \\
    & = \sqrt{4\log(n) c\sigma^2 K}  + \sqrt{4\log(n) c\sigma^2 K}   \sum_{m=1}^\infty \frac{1}{n^{2m^2 - 2}} \\ 
    & = O(\sqrt{4\log(n) c\sigma^2 K}) =  O(\sigma \log(n)) \qquad \text{as~~} K = \log_2(n).
\end{align*}
\end{proof}
\begin{proof}[Proof of Corollary \ref{cor:utility-continuous-privacy}]
    Corollary \ref{cor:utility-continuous-privacy} is immediate when plugging the value of $\sigma$ in Theorem \ref{thm:privacy} into Theorem \ref{thm:utility-continuous}.
\end{proof}

\section{Extra results on nodal queries}
\subsection{Theoretical results}
When $\bsW$ represents  the queries for all the nodes in the binary tree. Our results are summarized below: 
\begin{theorem}\label{thm:utility-nodal}
     Fix $\sigma >0$ and let query matrix $\bsW$ be all the nodal queries, we have:
   \begin{itemize}
      \itemsep-0.2em 
       \item $\err_{\bsW}^{\infty}(\calW_\sigma) = \sqrt{2/\pi}\sigma$,
       \item $\err_{\bsW,2}(\calW_\sigma) = (2n-1)\sigma^2$,
       \item $\err_{\bsW,\infty}(\calW_\sigma) = O(\sigma\sqrt{\log_2(n)})$.
   \end{itemize}
\end{theorem}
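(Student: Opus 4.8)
The plan is to recognize that, for the nodal workload, the error vector $\bsW\cdot\mathsf{Noise}$ is nothing but the full collection of node noises $\{X_I\}$ as $I$ ranges over all $2n-1$ nodes of the perfect binary tree on $n=2^k$ leaves, and then to read off all three bounds from \Cref{thm:noise-equal-variance}, which guarantees that each such coordinate is marginally $\bN(0,\sigma^2)$. With that observation the proof becomes a short exercise on Gaussians.

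First I would handle $\err_{\bsW}^{\infty}$. Using $\bE[|Z|]=\sigma\sqrt{2/\pi}$ for $Z\sim\bN(0,\sigma^2)$, every coordinate of $\bE_{\bss\sim\bN(\mathbf{0},\sigma^2\bsC_k)}[|\bsW\bss|]$ equals $\sigma\sqrt{2/\pi}$, hence its $\ell_\infty$ norm is exactly $\sigma\sqrt{2/\pi}$. Next, for $\err_{\bsW,2}$, linearity of expectation gives $\bE[\|\bsW\bss\|_2^2]=\sum_{I}\bE[X_I^2]=\sum_{I}\var(X_I)=(2n-1)\sigma^2$, since the tree has $n$ leaves and $n-1$ internal nodes and every node-noise has variance $\sigma^2$. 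Note that neither step uses the off-diagonal entries of $\bsC_k$ at all.

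For the remaining bound, $\err_{\bsW,\infty}=\bE\big[\max_I|X_I|\big]$ is the expected maximum absolute value of $2n-1$ centered Gaussians with common variance $\sigma^2$. I would invoke the standard sub-Gaussian maximal inequality: for any --- possibly dependent --- centered Gaussians $Z_1,\dots,Z_m$ each of variance at most $\sigma^2$, one has $\bE[\max_j|Z_j|]\le\sigma\sqrt{2\log(2m)}$, proved by applying the union bound to the moment generating functions of the $2m$ variables $\pm Z_j$ and optimizing the free parameter (equivalently, Jensen applied to $\exp$). Alternatively one can mimic the tail-integration argument already used in the proof of \Cref{thm:utility-continuous}. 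Taking $m=2n-1$ yields $\err_{\bsW,\infty}\le\sigma\sqrt{2\log(2(2n-1))}=O(\sigma\sqrt{\log_2 n})$.

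There is no real obstacle here: the correlations among the $X_I$ are genuinely present (they are all negative by the basic properties of $\bsC_k$), but they are irrelevant --- the first two claims depend only on marginal variances, and the maximal inequality is insensitive to the joint law. If one wanted a matching lower bound (which is not needed for the stated $O(\cdot)$ claim), it would follow by noting that the $n$ leaf noises are jointly Gaussian with variance $\sigma^2$ and pairwise correlations bounded away from $1$, forcing $\bE[\max_I|X_I|]=\Omega(\sigma\sqrt{\log n})$.
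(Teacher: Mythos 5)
Your proposal is correct and follows essentially the same route as the paper: identify $\bsW\cdot\mathsf{Noise}$ with the $2n-1$ node noises, use \Cref{thm:noise-equal-variance} to get that each is marginally $\bN(0,\sigma^2)$ (which gives the first two claims by direct calculation), and control $\bE[\max_I|X_I|]$ by a union bound over the identically distributed Gaussian marginals. The only cosmetic difference is that you invoke the MGF-based maximal inequality $\bE[\max_j|Z_j|]\le\sigma\sqrt{2\log(2m)}$ where the paper integrates the union-bounded tail directly; both yield the same $O(\sigma\sqrt{\log n})$ bound.
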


\begin{proof}[Proof of Theorem \ref{thm:utility-nodal}]
    The first two properties are follows from direct calculation, and are therefore omitted here. 

    The proof for the worst-case expected error is very similar to the calculation in Theorem \ref{thm:utility-continuous}. Recall that all the leaf nodes are labelled by $X_I$ where $I \in \{0,1\}^K$ which has covariance matrix $\sigma^2 \bsC_K$. The internal nodes are labelled by $X_J$ where $J\in \{0,1\}^k$ for some $k\in \{0,1,\ldots, K-1\}$. Therefore $\err_{\bsW,\infty}(\calW_\sigma) = \bE[\max_{k, J} \lvert X_J \rvert, J \in \{0,1\}^k]$. We can again use the standard Gaussian concentration: 

    \begin{align*}
    \bP[\max_{k, J} \lvert X_J \rvert > t] &= \bP[ \lvert X_J \rvert > t \text{~for some ~} J \in \{0,1\}^k] \\
    & \leq \sum_{J,k} \bP[\lvert X_J \rvert > t]\\ 
    & =  (2n-1) \exp\left(\frac{-t^2}{2 \sigma^2}\right) \qquad \text{total~} 2n-1 \text{~summation terms.}\\
\end{align*}
Now set $t_m = m \sqrt{2\sigma^2 \log(n)}$, we know 
\begin{align*}
\bP[\max_{k, J} \lvert X_J \rvert > t_m] \leq \frac{(2n-1)}{n^{m^2}}.
\end{align*}
Therefore 
\begin{align*}
    \err_{\bsW,\infty}(\calW_\sigma) &= \int_0^\infty \bP[\max_{k, J} \lvert X_J \rvert > t] \diff t \\
    & = \sqrt{2\sigma^2 \log(n)} + \sqrt{2\sigma^2 \log(n)} \sum_{m=1}^\infty \frac{(2n-1)}{n^{m^2}} \\
    & = O(\sigma \sqrt{\log(n)}),
\end{align*}
as announced. 
\end{proof}

\begin{corollary}\label{cor:utility-nodal-privacy}
    Let $X\in \mathcal{X}^n$ be any dataset and let $\cM_{\sigma}(X)  = X + \sigma\bN(\mathbf{0}, \bsC_K)$ be our privacy mechanism, where $\sigma$ is chosen such that the mechanism satisfies $(\eps,\delta)$-DP. Let $\bsW$ be all the continuous range queries, we have:
   \begin{itemize}
      \itemsep-0.2em 
       \item $\err_{\bsW}^{\infty}(\calW_\sigma) = \Theta\left( \sqrt{\log n \log(2/\delta)} \eps^{-1}\right) $,
       \item $\err_{\bsW,2}(\calW_\sigma) = \Theta\left(n^2 \log(n)\log(2/\delta)\eps^{-2}\right)$,
       \item $\err_{\bsW,\infty}(\calW_\sigma) = O\left( \log(n) \sqrt{\log(2/\delta)} \eps^{-1}\right)$.
   \end{itemize}
\end{corollary}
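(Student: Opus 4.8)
The plan is that Corollary~\ref{cor:utility-nodal-privacy} is a direct bookkeeping consequence of two results already in hand: the privacy calibration of Theorem~\ref{thm:privacy} and the noise-level–parametrized utility bounds of Theorem~\ref{thm:utility-nodal} for the nodal workload $\bsW$. The entire proof is therefore a substitution, which I would organize as follows.

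First, I would read off the noise level from Theorem~\ref{thm:privacy}: it shows $\cM_\sigma$ is $(\eps,\delta)$-DP as soon as $\sigma^2 \ge \left(\tfrac{2}{\eps^2} + \tfrac{2\log_2 n}{3\eps^2}\right)\log\tfrac{2}{\delta} = \Theta\!\big(\log n\,\log(2/\delta)\,\eps^{-2}\big)$, so the canonical (smallest admissible) choice of $\sigma$ — which is the intended meaning of ``$\sigma$ chosen such that the mechanism satisfies $(\eps,\delta)$-DP'' — obeys $\sigma = \Theta\!\big(\sqrt{\log n\,\log(2/\delta)}\,\eps^{-1}\big)$. I would record this once and then feed it into each line of Theorem~\ref{thm:utility-nodal}: into the exact identity $\err_{\bsW}^{\infty}(\calW_\sigma) = \sqrt{2/\pi}\,\sigma$ to obtain the first estimate; into the exact identity $\err_{\bsW,2}(\calW_\sigma) = (2n-1)\sigma^2$ to obtain the second; and into the one-sided bound $\err_{\bsW,\infty}(\calW_\sigma) = O(\sigma\sqrt{\log_2 n})$ to obtain the third, which is accordingly only an $O$-type statement.

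The one point worth a sentence of care — and it is not a genuine obstacle — is why the first two claims carry a two-sided $\Theta$ while the third does not. This is simply because the first two lines of Theorem~\ref{thm:utility-nodal} are equalities rather than upper bounds, so once $\sigma^2$ is pinned to $\Theta(\log n\,\log(2/\delta)/\eps^2)$ the two-sidedness transfers verbatim; unlike Corollary~\ref{cor:utility-continuous-privacy}, no separate lower-bound (discrepancy or packing) argument is required here, since the lower half of Lemma~\ref{lem:maximum-consecutive-variance} plays no role for nodal queries (every nodal query has variance exactly $\sigma^2$ by Theorem~\ref{thm:noise-equal-variance}). In short, there is no hard step: the corollary follows by inserting the $\sigma$ prescribed by Theorem~\ref{thm:privacy} into the three bounds of Theorem~\ref{thm:utility-nodal}.
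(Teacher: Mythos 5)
Your approach is exactly the paper's: the paper proves the analogous continuous-range corollary (Corollary~\ref{cor:utility-continuous-privacy}) by the same one-line substitution of the $\sigma$ from Theorem~\ref{thm:privacy} into the $\sigma$-parametrized utility bounds, and leaves this nodal version implicit. Your reading of the statement as referring to nodal queries (despite the ``continuous range queries'' phrasing, evidently a copy-paste slip) and your explanation of why the first two bullets are two-sided while the third is only an upper bound are both correct.

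However, you assert that substituting into $\err_{\bsW,2}(\calW_\sigma) = (2n-1)\sigma^2$ ``obtains the second'' bullet, and it does not: with $\sigma^2 = \Theta\bigl(\log n\,\log(2/\delta)\,\eps^{-2}\bigr)$ you get $(2n-1)\sigma^2 = \Theta\bigl(n\log n\,\log(2/\delta)\,\eps^{-2}\bigr)$, which is smaller by a factor of $n$ than the stated $\Theta\bigl(n^2\log n\,\log(2/\delta)\,\eps^{-2}\bigr)$. The $n^2$ in the corollary appears to be another artifact of copying from the continuous case, where the workload has $\Theta(n^2)$ rows; the nodal workload has only $2n-1$ rows. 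Your method is the right one, but a proof that claims to derive the second bullet as written is claiming something false, so you should have flagged the mismatch (and either corrected the exponent or noted the typo) rather than asserting the substitution goes through.
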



\section{Generalizations}\label{subsec: generalization}

\subsection{Cascade Sampling for Streaming data}\label{subsubsec:streaming}

\begin{algorithm2e}
\caption{Cascade Sampling Algorithm for Streaming Data}\label{alg:noise-mechanism-streaming}
\KwData{Depth of the current binary tree $k \ge 1$, current root noise value $X_0$, variance $\sigma^2$ determined by the privacy budget.}
\KwResult{Noise values $\{X_I\}$ for all nodes $I \in \cup_{0\leq i \leq k-1} \{1\} \times \{0,1\}^i$; new root noise $X'$.}
Sample $Y_\star \sim \bN(0,\sigma^2)$ independently\;
Set $X_{1} := -\frac{1}{2} X_{0} + \frac{\sqrt{3}}{2} Y_\star, \quad X' := X_0 + X_1$\;
\For{each node $\star \in \{1\} \times \{0,1\}^i$ at depth $1\leq i \leq k-1$}{
    Sample $Y_\star \sim \bN(0,\sigma^2)$ independently\;
    Define the noise values for the children of $\star$: \;
    $X_{\star 0} := \frac{1}{2} X_{\star} + \frac{\sqrt{3}}{2} Y_\star, \quad X_{\star 1} := \frac{1}{2} X_{\star} - \frac{\sqrt{3}}{2} Y_\star$\;
    }
\end{algorithm2e}


\subsection{General binary trees}\label{subsubsec:general_binary_tree}
Our noise mechanism is designed under the assumption that our data points can be viewed as leaves of a perfect binary tree. Nevertheless, thank to the top-down nature of our cascade sampling algorithm (Algorithm \ref{alg:noise-mechanism}), it can be readily adapted to work with a general binary tree. We will still use binary bits to represent data points. They are still leaves of a binary tree of maximum depth $k$, but not necessarily have the same depth (i.e., same length in the bit string). In our binary tree, the root node is still indexed by $\varnothing$. For any non-leaf node identified by the bit string $\star$, its children are indexed as $\star0$ and $\star1$ if it has two children, or solely as $\star0$ if it has only one child.

Our adapted algorithm closely mirrors Algorithm \ref{alg:noise-mechanism}, with the primary distinction being the evaluation of the number of children at each step. Starting from the root, the process proceeds downward. Given the assigned value to a node $\star$, the algorithm first determines the number of $\star$'s children. If $\star$ has two children, it generates correlated noises in the same way as  Algorithm \ref{alg:noise-mechanism} (or as described in formula \ref{eqn: one-d noise}). If $\star$ has only one child, the same noise  is assigned to $\star0$. If $\star$ is itself a leaf, the algorithm simply moves on to the next node. Detailed description is in Algorithm \ref{alg:noise-mechanism-general-binary-tree} below.

\begin{algorithm2e}
\caption{Noise Allocation Mechanism for a General Binary Tree}\label{alg:noise-mechanism-general-binary-tree}
\KwData{Binary tree with maximum depth $k$, variance $\sigma^2$ determined by the privacy budget.}
\KwResult{Noise values $\{X_I\}$ for all nodes $I \in \cup_{0\leq l \leq k} \{0,1\}^l$ in a binary tree.}
\For{each node $\star \in \{0,1\}^i$ at depth $0\leq i \leq k-1$}{
    \If{$\star =\varnothing$}{
    Assign $X_\varnothing \sim \bN(0,\sigma^2)$\\
    Sample $Y_\star \sim \bN(0,\sigma^2)$ independently\\
    Define the noise values for the children nodes of $\star$:\\
     \If{$\star$ \text{has two children}}{
        $X_{\star 0} := \frac{1}{2} X_{\star} + \frac{\sqrt{3}}{2} Y_{\star}$
     }
    $X_{\star 1} := \frac{1}{2} X_{\star} - \frac{\sqrt{3}}{2} Y_\star$\\
    \ElseIf{$\star$ \text{has one child}}{
    $X_{\star0} = X_\star$
        }
    }
}
\end{algorithm2e}

Algorithm \ref{alg:noise-mechanism-general-binary-tree} continues to ensure that the marginal distributions for each node adhere to $\bN(0,\sigma^2)$. Moreover, the noise level for any given node is equivalent to the sum of the noise levels of its children, thereby maintaining consistency. The theoretical analysis for this scenario closely parallels that of the perfect binary tree case discussed in Section \ref{sec:corr-noise}. Extending this approach to non-binary trees is also possible, though more complicated. This further exploration will be reserved for our future studies.

\subsection{Two-dimensional range queries and contingency tables}\label{subsubsec:2d-queries}
Remember that our approach, as outlined in Algorithm \ref{alg:noise-mechanism}, is appropriate for data arranged in one-dimensional arrays that have hierarchical structures. We are now expanding this method to apply to two-dimensional contingency tables. An example of this is that each data point represents the population of a village, positioned in two dimensions for longitude and latitude. The higher-level hierarchies might represent the populations of larger entities like cities, provinces, or countries.

Putting it formally, we consider data points labeled as $\{X_{I,J}\}_{I\in \{0,1\}^{k_1}, J\in \{0,1\}^{k_2}}$. It is convenient to think of these data points as elements of a matrix with $2^{k_1}$ rows and $2^{k_2}$ columns. Each row and column corresponds to the leaf nodes of a perfect binary tree of heights $k_1$ and $k_2$, respectively. Each pair of (internal) nodes $(I_1, J_1)$ from the two binary trees represents a submatrix. For instance, the pair $(\varnothing, \varnothing)$ corresponds to the entire data matrix. We aim to develop a noise mechanism where the summation of noises applied to each submatrix, as described by $(I_1, J_1)_{I_1 \in \{0,1\}^{i_1}, I_2\in \{0,1\}^{i_2}, i_1\leq k_1, i_2\leq k_2}$, maintain the same variance.

Remember that for the one-dimensional scenario, the essence of our design hinges on the observation that if $X, Y \sim \bN(0,1)$, then the variables $X_0, X_1$ given by 
\begin{align*}
    X_0 = \frac{1}2 X + \frac{\sqrt 3}{2} Y,\quad X_1 = \frac 12 X - \frac{\sqrt 3}{2} Y.
\end{align*}
also follow a $\bN(0,1)$ distribution. The above construction can be generalized to higher dimensions. For example, let $X, Y, Z, W$ be i.i.d. $\bN(0,1)$ random variables, and define:
\begin{align*}
    &X_0 = \frac{1}2 X + \frac{\sqrt 3}{2} Y,\quad X_1 = \frac 12 X - \frac{\sqrt 3}{2} Y \\
    &\tilde X_0 = \frac{1}2 Z + \frac{\sqrt 3}{2} W, \quad \tilde X_1 = \frac{1}2 Z - \frac{\sqrt 3}{2} W.
\end{align*}

Then we can construct 
\begin{align*}
   & (X_{00}, X_{01}) = \frac{1}{2} \left(X_0, X_1\right) + \frac{\sqrt 3}{2} \left(\tilde X_0, \tilde X_1\right) \\
   & (X_{10}, X_{11}) = \frac{1}{2} \left(X_0, X_1\right) - \frac{\sqrt 3}{2} \left(\tilde X_0, \tilde X_1\right),
\end{align*}
or equivalently
\begin{align*}
    &X_{00} = \frac{1}4 X + \frac{\sqrt 3}{4} Y + \frac{\sqrt 3}{4} Z + \frac{ 3}{4} W \\
    &X_{01} = \frac{1}4 X - \frac{\sqrt 3}{4} Y + \frac{\sqrt 3}{4} Z - \frac{ 3}{4} W \\
    & X_{10} = \frac{1}4 X + \frac{\sqrt 3}{4} Y - \frac{\sqrt 3}{4} Z - \frac{ 3}{4} W \\
    &X_{11} = \frac{1}4 X - \frac{\sqrt 3}{4} Y - \frac{\sqrt 3}{4} Z + \frac{ 3}{4} W.
\end{align*}
We can verify that the matrix
\begin{align*}
    \begin{pmatrix}[1.5]
X_{00}& X_{01} \\
X_{10} & X_{11}
\end{pmatrix}
\end{align*}
satisfies the following: 
\begin{itemize}
    \item Each entry is a standard normal
    \item The summations of each row, each column are all standard normals
    \item The summation of all entries are standard normals.
    \item $X_{00} + X_{10} = X_0, X_{01} + X_{11} = X_1$.
\end{itemize}

Building on this concept, we will now describe how to extend our method (Algorithm \ref{alg:noise-mechanism}) for creating private noise mechanisms suitable for $2$-dimensional range queries. Initially, we'll utilize Algorithm 1 to sample the summation of all $2^{k_2}$ columns, meaning the noise values indexed as $(\varnothing,J)$. Following that, we will implement the aforementioned strategy to divide each of these column totals into smaller groups of $2, 4, 8, \ldots, 2^{k_1}$ values. The detailed algorithm is described below.  

\begin{algorithm2e}
\caption{Noise Allocation Mechanism for two-dimensional range queries}\label{alg:noise-mechanism-2d}
\KwData{Depths $K_1,K_2$ of the binary tree indexing rows and columns, variance $\sigma^2$ determined by the privacy budget.}
\KwResult{Noise values $\{X_{I,J}\}$ for all node pairs $(I,J) \in \cup_{0\leq k_1 \leq K_1} \{0,1\}^{k_1}\times \cup_{0\leq k_2 \leq K_2} \{0,1\}^{k_2}$}

Fix $I = \varnothing$, implement Algorithm \ref{alg:noise-mechanism} to assign $X_{\varnothing J}$ for every $J \in  \cup_{0\leq k_2 \leq K_2} \{0,1\}^{k_2}$\;

\For{each node $\star \in \{0,1\}^{k_1}$ for some $0\leq k_1 \leq K-1$}{
    Implement Algorithm \ref{alg:noise-mechanism} to assign $Y_{\star J}$ for every $J \in  \cup_{0\leq k_2 \leq K_2} \{0,1\}^{k_2}$\;
    Define the noise values for the children nodes of $\star$:\;
    $X_{\star 0, J} := \frac{1}{2} X_{\star, J} + \frac{\sqrt{3}}{2} Y_{\star,J}$ for every $J$\;
    $X_{\star 1, J} := \frac{1}{2} X_{\star,J} - \frac{\sqrt{3}}{2} Y_{\star,J}$ for every $J$\;
}
\end{algorithm2e}

It is not hard to verify from the construction that every $X_{I,J}$ has the same marginal distribution, and satisfies $X_{I,J} = X_{I0,J} + X_{I1,J}$ and  $X_{I,J} = X_{I,J0} + X_{I,J1}$. The computational cost of Algorithm \ref{alg:noise-mechanism-2d} is also linear with the number of data points. More detailed analysis of the appropriate level for the privacy budget, along with an analysis of utility, will be deferred to our future research.

\section{Additional experiments}
\label{app:add_exp}

In this section, we demonstrate numerical experiments for when the workload matrix consists of node and random queries and also runtime comparisons.

\subsection{Efficiency of Cascade Sampling }

To justify the practical efficiency of the Cascade Sampling  algorithm described in Section~\ref{subsec:algorithm-1d}, we compare it against \texttt{Scipy}'s built-in function for generating multivariate Gaussian distributions. On a personal laptop, we use both methods to produce Gaussians with a covariance matrix \( C_k \) for \( k \in \{3,4, \ldots, 25\} \), which corresponds to dimensionalities ranging from \( 2^3 = 8\) to \( 2^{25} \approx \) $33.5$ million. In comparison to our sampling algorithm, \texttt{Scipy}'s  \texttt{multivariate\_normal} function consistently takes  longer time. Its efficiency diminishes at higher dimensions. For example, it takes over $28$ minutes to sample a $16,384$ (\(2^{14}\))-dimensional  Gaussian, and more than $3$ hours for a $32,768$  (\(2^{15}\))-dimensional Gaussian, making it impractical for larger dimensions. Samplers from other libraries like \texttt{Numpy} were also tested but could not execute for dimensions exceeding $1024$.

\subsection{Node queries}

Consider a binary tree such that elements of $\bsx$ form the leaves of the tree. The queries are the nodes (including leaves) of this tree, which correspond to an interval summing up the leaves of tree rooted at the particular node. 
The error values are shown in \cref{fig:node_queries}.

\begin{figure}[!htbp]
    \centering
      \includegraphics[scale=0.215]{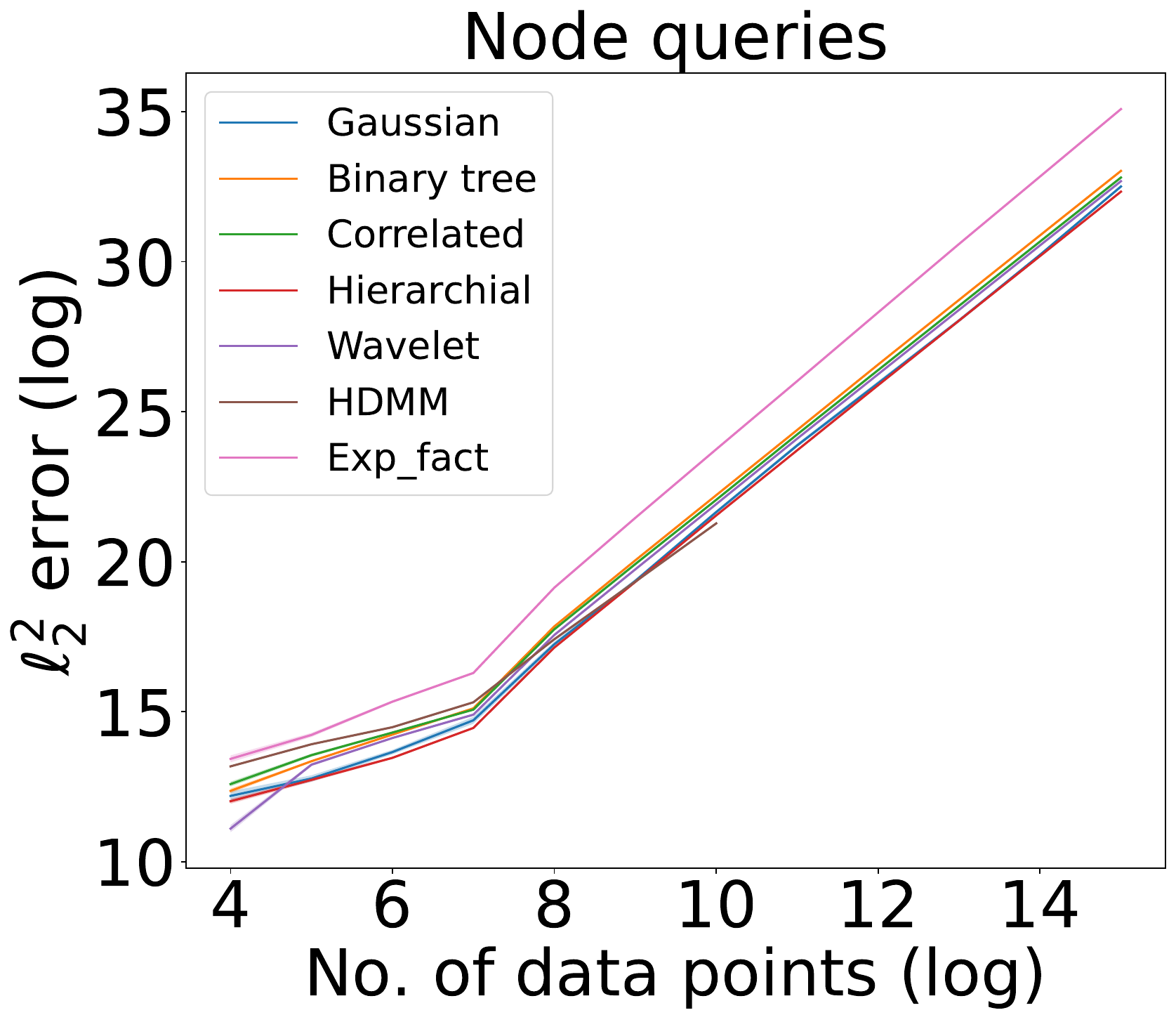}
      \includegraphics[scale=0.2]{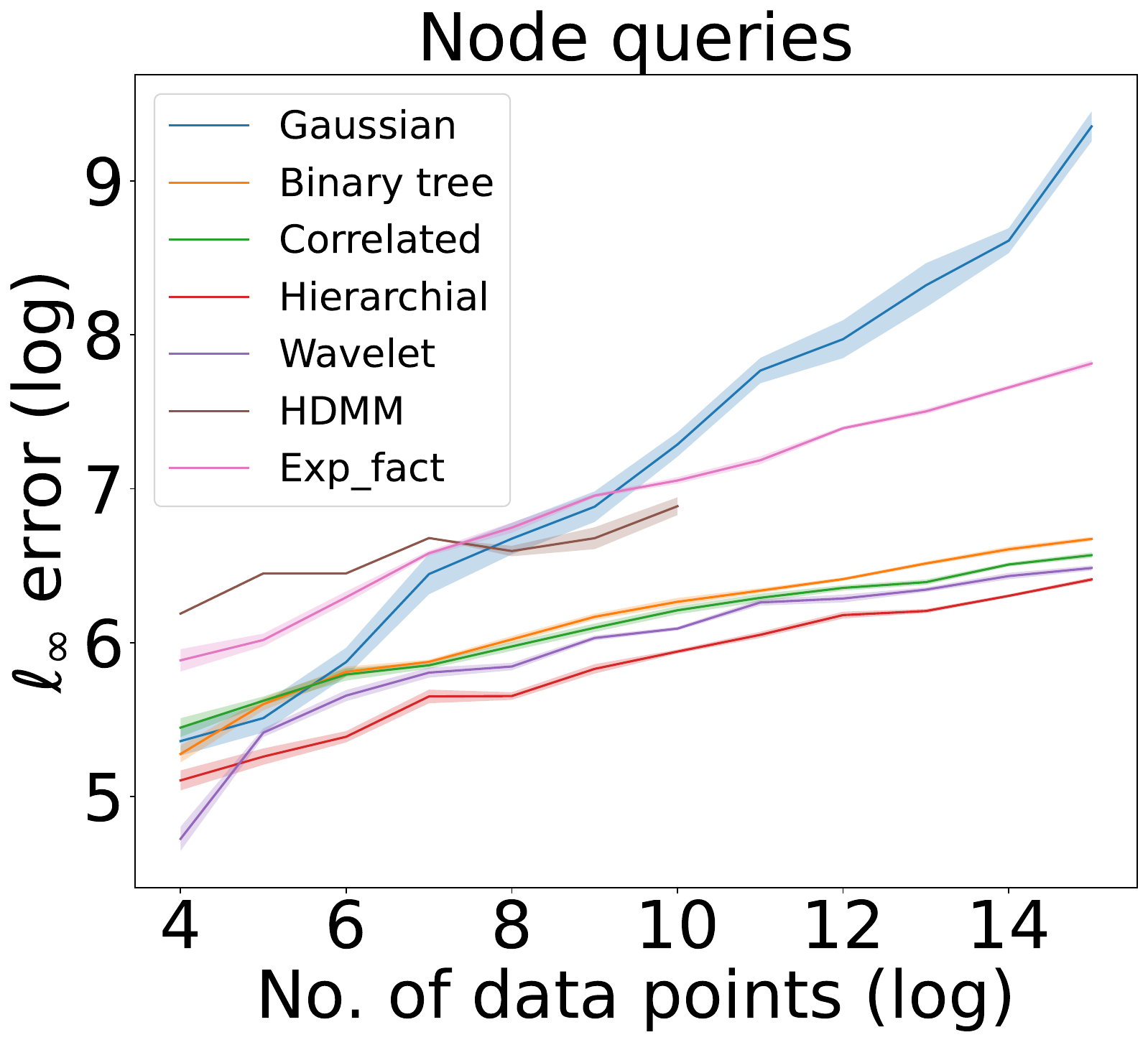}
    \caption{$\err_{\bsW,2}(\calW_\sigma)$ and $\err_{\bsW,\infty}(\calW_\sigma)$ for node queries.}
    \label{fig:node_queries}
\end{figure}

The gap in utility across mechanisms for $\err_{\bsW,2}(\calW_\sigma)$ is small with Hierarchial, Wavelet and ours providing smallest error values, then closely followed by remaining mechanisms. But that is not the case for $\err_{\bsW,\infty}(\calW_\sigma)$, where Gaussian, HDMM and explicit factorization perform significantly worse. The remaining mechanisms perform very similarly (except for Binary Tree which has slightly larger error) as $n$ increases. The gap in correlated input perturbation and Binary tree mechanism is easier to see which comes from the smaller constant for noise that is required for guaranteeing privacy ($ 2+ \frac{2}{3}\log n \hspace{0.25em} \text{vs} \hspace{0.25em} \log n$).

\subsection{Random queries}
\label{app:exp_random}

Here, each query asks for sum of random elements from $\bsx$. Note that unlike first two settings, the query is not contiguous over elements in $\bsx$. We sample the queries as follows - for each query we sample a number $k \in \{n/4, \cdots, n\}$ (to simulate dense queries). We then sample $k$ indices uniformly at random from $[n]$ and these indices form the query (these indices are set to 1 for the particular row in $\bsW$). We fix number of queries to be 2500. The results are shown in \cref{fig:random_queries}

\begin{figure}[!htbp]

    \centering
    \subfigure{
      \includegraphics[scale=0.215]{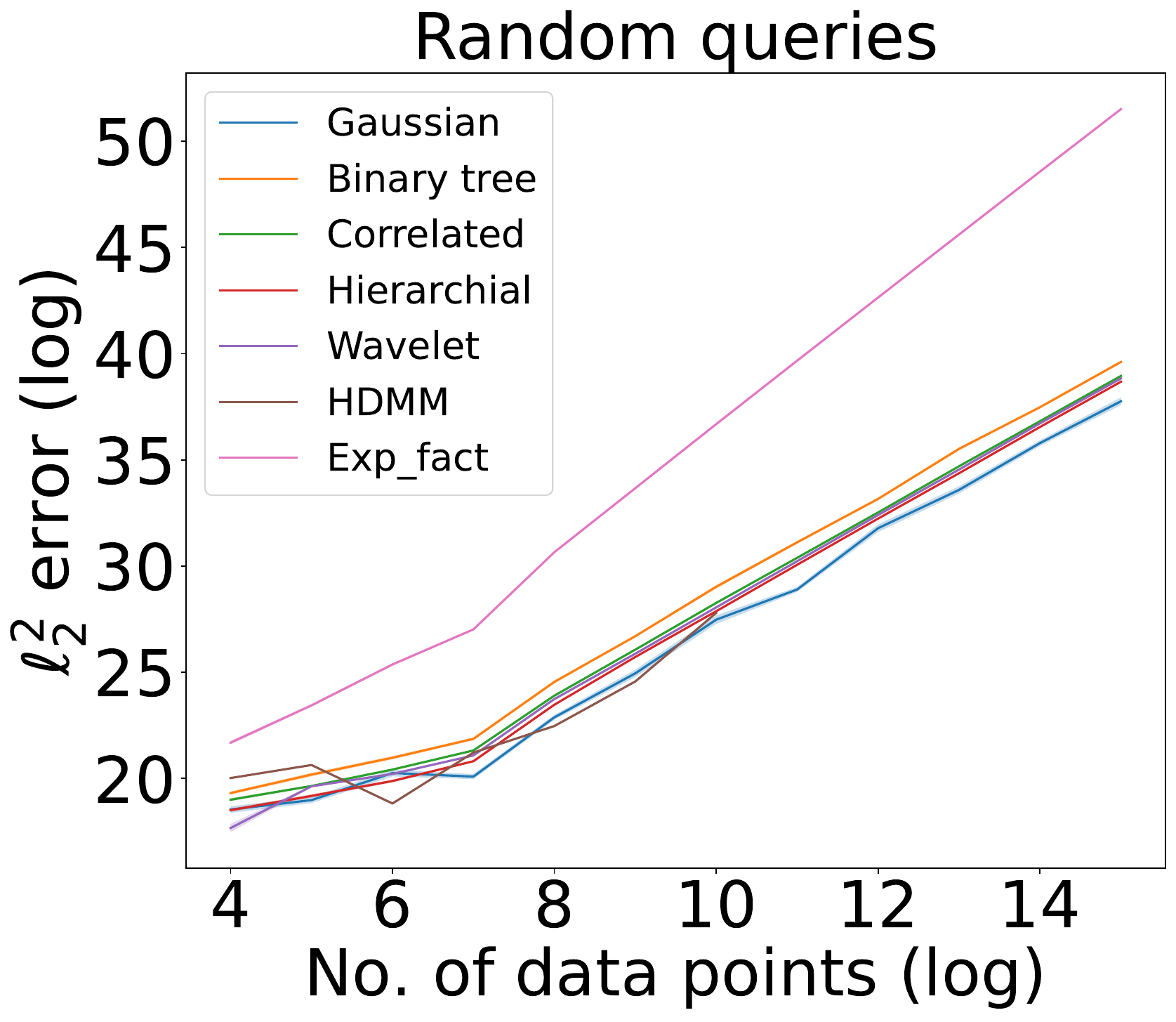}
      \label{fig:random_l2}
    }
    \subfigure{
      \includegraphics[scale=0.2]{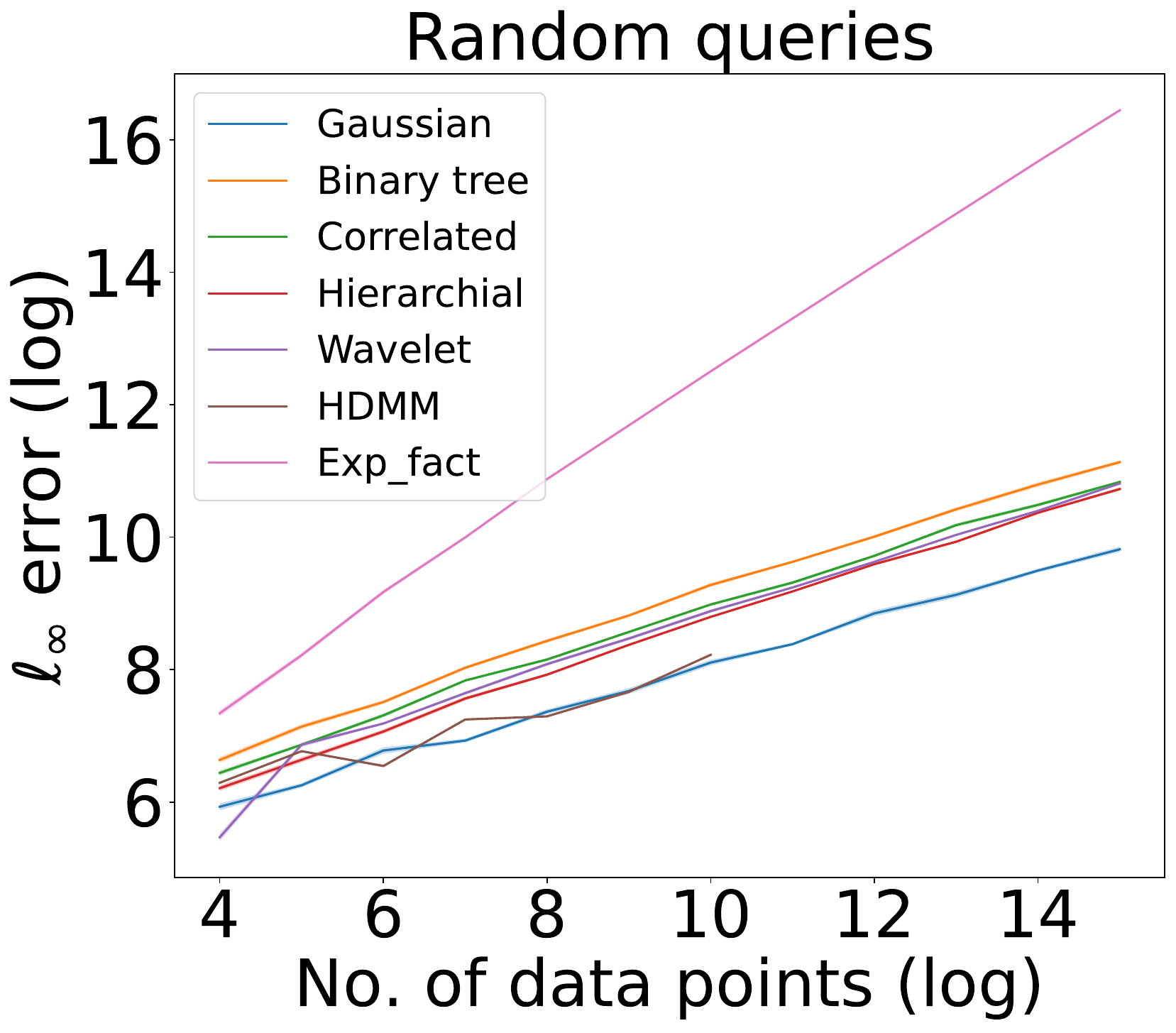}
      \label{fig:random_linf}
    }
    \caption{$\err_{\bsW,2}(\calW_\sigma)$ and $\err_{\bsW,\infty}(\calW_\sigma)$ for random queries.}
    \label{fig:random_queries}
\end{figure}

For this setting of workload matrix, HDMM and Gaussian mechanism perform best utility wise for both $\err_{\bsW,2}(\calW_\sigma)$ and $\err_{\bsW,\infty}(\calW_\sigma)$. Wavelet, ours and Hierarchical perform almost similarly w.r.t noise magnitude whereas explicit factorization mechanism has the largest noise. Explicit factorization uses a very special decomposition tailored to continual counting hence it has worse performance and Hierarchical, ours and Binary tree mechanisms are more suited for sum over contiguous arrays by design, and as a result for random queries they perform slightly worse. 
Gaussian mechanism has the least error majorly due to the fact that the noise scale is smallest due to smaller sensitivity. To see this, note that the queries sum over $O(n)$ elements in $\bsx$. For Gaussian, this corresponds to sum of $O(n)$ i.i.d Gaussians, whereas for Hierarchical and Correlated mechanism, worst case we still sum over $O(n)$ elements but with higher variance (by a factor of $O(\log{n})$), resulting in larger noise added to the query answers.

\subsection{Runtime comparison}
\label{app:exp_runtime}
We compare runtime for different mechanisms in seconds for answering the queries provided by the given workload matrix.
For runtime, Gaussian and Wavelet are typically the fastest across different settings of workload matrix and $n$ whereas our mechanism is competitive with these and faster than Hierarchical. HDMM typically takes the longnest which is expected as it involves expensive optimization to achieve the matrix decomposition. The time required for explicit factorization mechanism majorly comes from answering queries, as their decomposition is fast but for each query, the answer has to be constructed via difference of two other queries.

\begin{figure}[b]
\label{fig:runtime}
    \centering
    \subfigure{
      \includegraphics[scale=0.125]{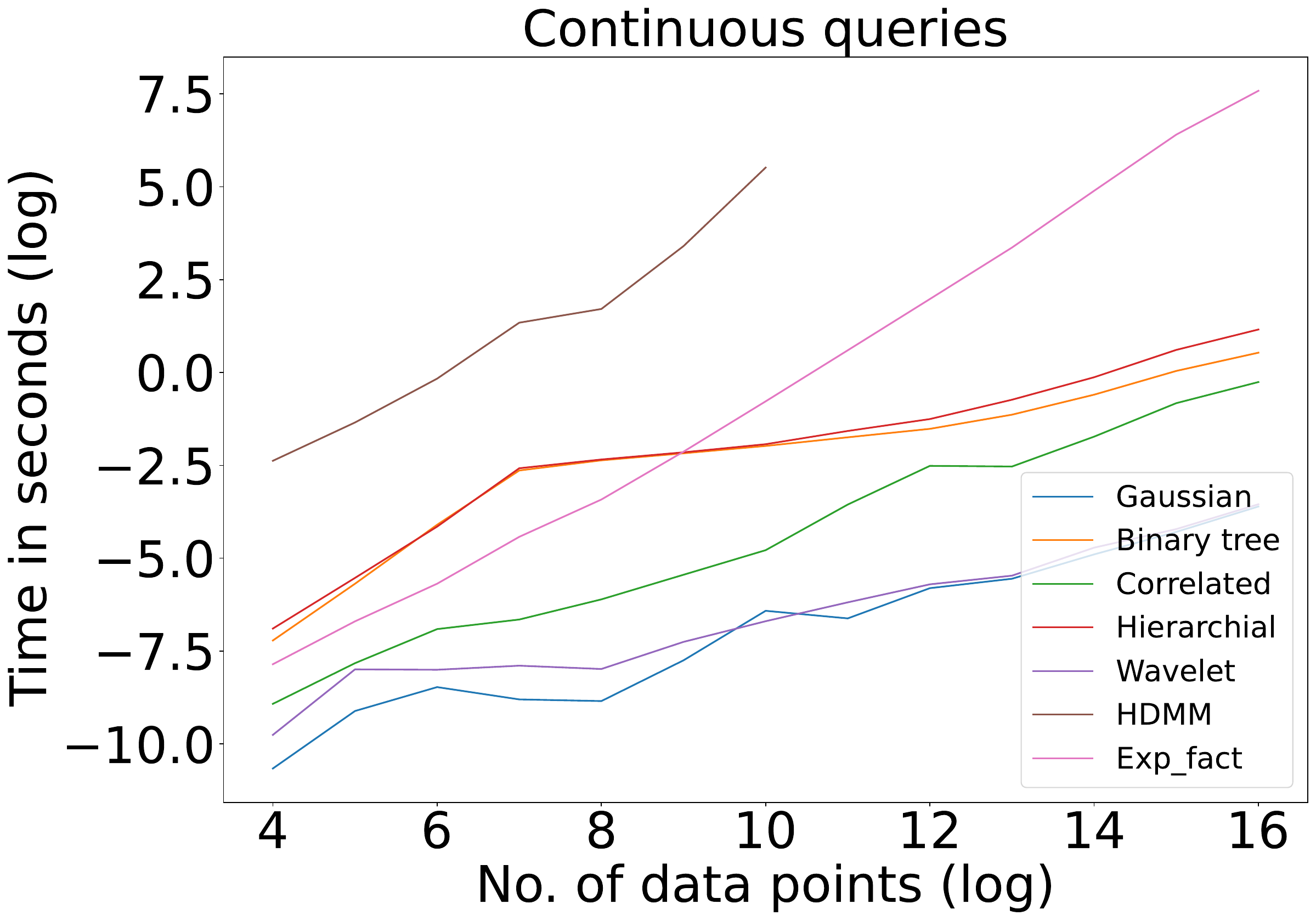}
      \label{fig:time_cont}
    }
    \subfigure{
      \includegraphics[scale=0.125]{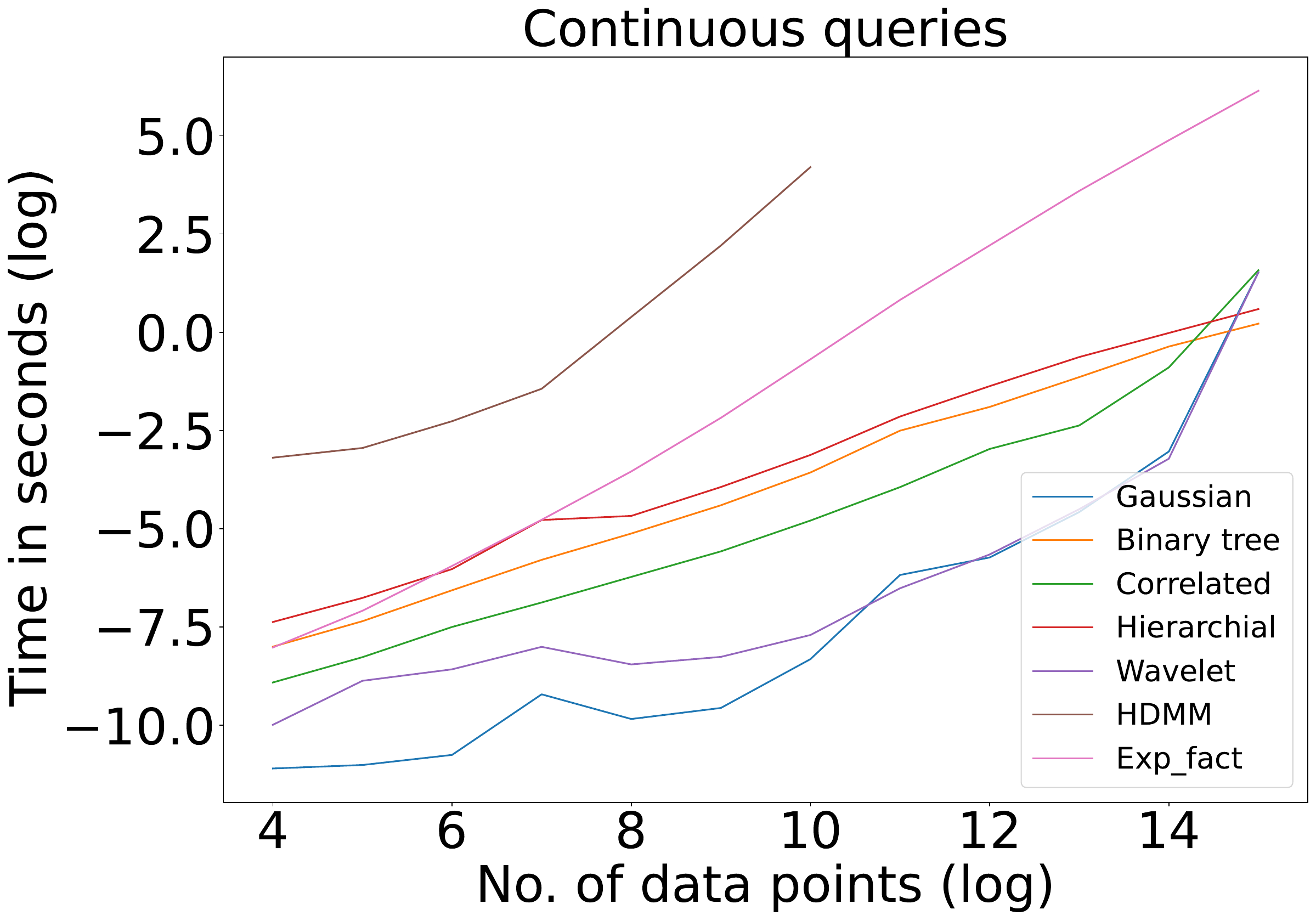}
      \label{fig:time_node}
    }
    \subfigure{
      \includegraphics[scale=0.125]{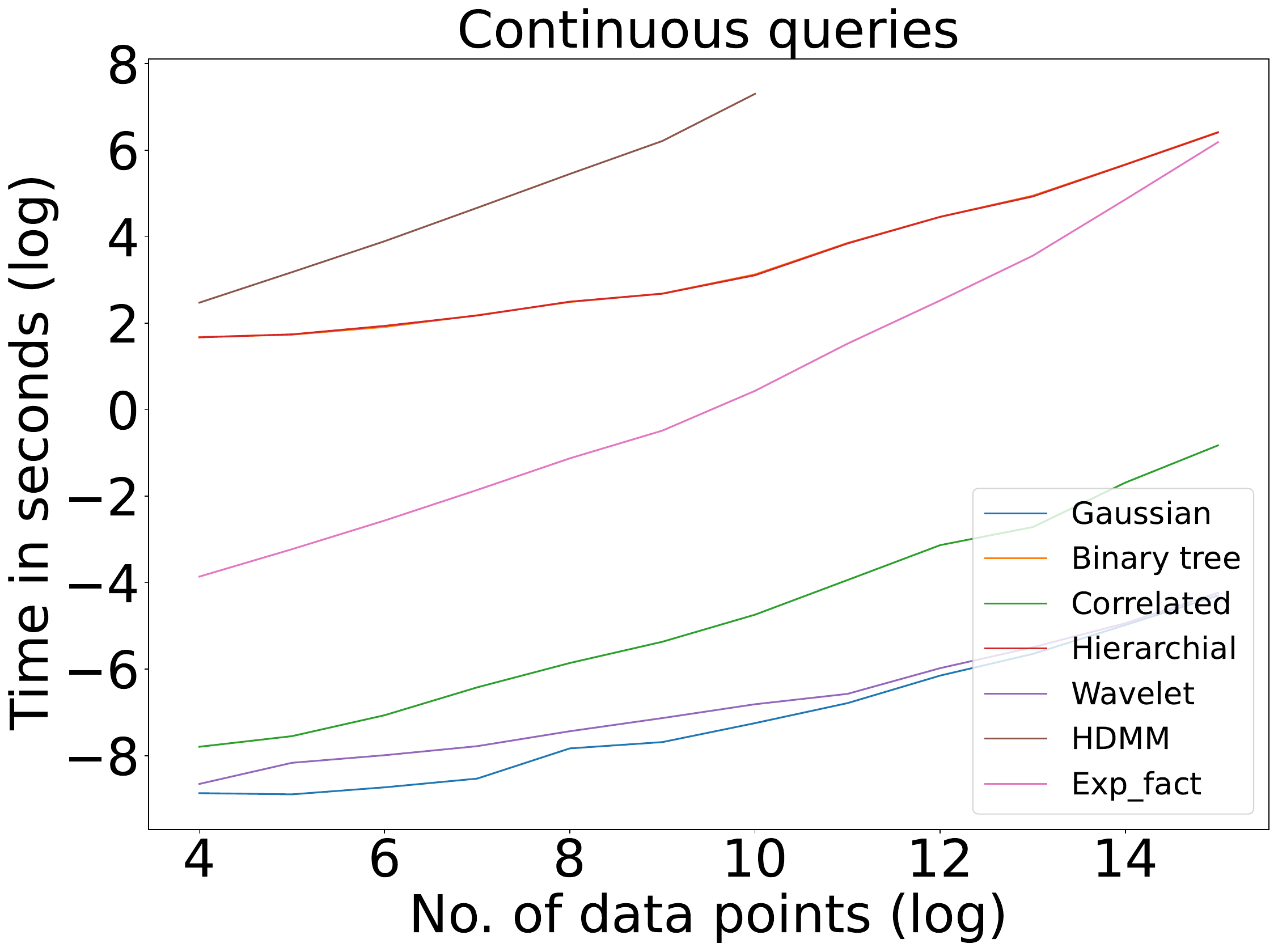}
      \label{fig:time_random}
    }
    \caption{Runtime comparison for different configurations of the workload matrix.}
\end{figure}


\subsection{Implementation Details}\label{subsec:exp_details}

We first describe the implementation for all mechanisms used in evaluations. All experiments were performed on a Macbook Pro with M1 processor and 16GB of RAM. Code is implemented in python language.

\paragraph{Binary Tree:}
Here the mechanism builds a binary tree with elements being the leaves of the tree, and appropriately scaled Gaussian noise (for preserving privacy) is added to the the counts of nodes. We use the segment tree package for building and answering queries.

\paragraph{Hierarchical and Wavelet:}
Hierarchical mechanisn follows the same methodology as the Binary tree mechanism, followed by post-processing to find consistent sums for the internal nodes. We use the implementation from dpcomp-core package \cite{HayMMCZB16} for Hierarchical and Wavelet.

\paragraph{HDMM:}

We use the implementation from dpcompcore library, using the default parameter for number of restarts.

\paragraph{Explicit factorization}

\cite{FichtenbergerHU23} provides implicit matrix factorization for continual counting, i.e., counting of ranges $[1, i]$, for all i. Notice that this is a special case of range queries $[i,j]$ for all . One can easily retrieve the query answer $[i, j]$ by calculating the privatized query answer for $[1, j]$ minus that of $[1, i-1]$. Privacy is preserved (post-processing) although variance for query of $[i, j]$ can possibly increase (at most twice).

\paragraph{Gaussian and Correlated mechanism:}
Implementing these mechanisms is straight forward by adding appropriately scaled noise for privacy and then using the private histograms to answer queries.

\paragraph{Monte Carlo sampling for estimating errors of continuous queries:}

Now we discuss some implementation details of the continuous range queries in Section \ref{sec:experiments}. After implementing the Cascade Sampling algorithm for the correlated perturbation $\cM_{\sigma}(\bsx) = \bsx + \bN(\mathbf{0}, \sigma^2 \bsC_k)$, calculating the error introduced by \textit{all} the $\Theta(n^2)$ continuous range queries  requires a cost of $O(n^3)$ to sum up all the noises in the corresponding entries, which is prohibitive when $n$ is large.  

Our primary goal is to evaluate the errors $\err_{\bsW,2}(\calW_\sigma)$ and $\err_{\bsW,\infty}(\calW_\sigma)$ in comparison to current algorithms, so we employ the Monte Carlo method for estimating these errors. It is clear from the mean and standard deviation plot presented in Figure \ref{fig:results} that our estimation is sufficiently accurate. 

We now outline the implementation process for $n \geq 2^8$. In each iteration, we uniformly select one range from all $\binom{n}{2} + n$ queries. This step is repeated $m$ times. Using the $m$ sampled errors $E_1, \ldots, E_m$, we estimate $\err_{\bsW,2}(\calW_\sigma)$ and $\err_{\bsW,\infty}(\calW_\sigma)$. For instance, $(\binom{n}{2} +n)\times \sum_{i=1}^m (E_i^2)/m$ serves as an unbiased estimator for $\err_{\bsW,2}(\calW_\sigma)$, becoming more accurate as $m$ increases. Additionally, $\max_i E_i$ consistently estimates $\err_{\bsW,\infty}(\calW_\sigma)$
.

In further details, to uniformly select a continuous range, our sampling method involves: 1. Flip a coin with head probability $2/(n+1)$. 2. If heads, then random choose one element in $\{1,\ldots, n\}$. 3. If tails, uniformly choose two distinct elements and use $\min\{n_1,n_2\}$ to $\max\{n_1, n_2\}$ as our range. 

A final remark is that  one could compute all $\binom{n}{2} + n$ errors without Monte Carlo methods using faster than $O(n^3)$ algorithms  by properly storing intermediate results. However, these are not implemented in our current work.



\end{document}